\tikzset{>=stealth}
\newtheorem{all}{Theorem}[section]
\theoremstyle{plain}
\newtheorem{lemma}[all]{Lemma}
\newtheorem{thm}[all]{Theorem}
\newtheorem{prop}[all]{Proposition}
\theoremstyle{definition}
\newtheorem{rem}[all]{Remark}
\newcommand{\BI}{{\mathbb{I}}}
\newcommand{\BR}{{\mathbb{R}}}
\newcommand{\BZ}{{\mathbb{Z}}}
\newcommand{\dd}{{\mathrm{d}}}
\newcommand{\p}{\partial}
\newcommand{\com}[1]{}
\numberwithin{equation}{section}
\begin{document}
\title{Boundary Superconductivity in the BCS Model}
\author[1]{Christian Hainzl\thanks{hainzl@math.lmu.de}}
\author[2]{Barbara Roos\thanks{barbara.roos@ist.ac.at}}
\author[2]{Robert Seiringer\thanks{robert.seiringer@ist.ac.at}}
\affil[1]{Mathematisches Institut, Ludwig-Maximilians-Universit\"at M\"unchen, 80333 Munich, Germany}
\affil[2]{IST Austria, Am Campus 1, 3400 Klosterneuburg, Austria}

\date{\today}

\maketitle

\begin{abstract}
We consider the linear BCS equation, determining the BCS critical temperature, in the presence of a boundary, where Dirichlet boundary conditions are imposed. In the one-dimensional case with point interactions, we prove that the critical temperature is strictly larger than the bulk value, at least at weak coupling. In particular, the Cooper-pair wave function localizes near the boundary, an effect that cannot be modeled by effective Neumann boundary conditions on the order parameter as often imposed in Ginzburg--Landau theory.
We also show that the relative shift in critical temperature vanishes if the coupling constant either goes to zero or to infinity.
\end{abstract}

%

\section{Introduction and Main Result}\label{sec:intro}
We study how a boundary influences the critical temperature of a superconductor in the Bardeen--Cooper--Schrieffer (BCS) model.
At superconductor--insulator (or superconductor--vacuum) boundaries, it is natural to impose Dirichlet boundary conditions on the Cooper-pair wave function.
In several works \cite{caroli_sur_1962,de_gennes_boundary_1964,abrikosov_concerning_1964} it was concluded that the presence of the boundary only affects the Cooper-pair wave function on microscopic scales; in particular, on larger scales described by Ginzburg--Landau theory (GL), the effect of the Dirichlet boundary conditions disappears and consequently the GL order parameter  should satisfy {\em Neumann} boundary conditions \cite[Ch.~7.3]{gennes_superconductivity_1999}, \cite[Ch.~6]{parks_superconductivity_1969}. 
This seems to implicitly assume that the effect of the boundary on the critical temperature is negligible.
Recent computations \cite{samoilenka_boundary_2020,benfenati_boundary_2021,barkman_elevated_2022} indicate, however, that the Cooper-pair wave function can localize near the boundary, leading to an increase in the critical temperature compared to its bulk value.
In this paper, we shall give a rigorous proof of the occurrence of this phenomenon in the simplest setting of one dimension, with $\delta$-interactions among the particles. 
We consider a system on the half-line, where the boundary is then just a point.

The increase of the critical temperature in the presence of a boundary has some far-reaching implications.
First of all, it implies that  boundary superconductivity in the BCS model sets in already above the bulk value of the critical temperature. 
Second, it questions the validity of the often employed phenomenological GL theory in the presence of boundaries, as detailed in  \cite{samoilenka_microscopic_2021}.
Note that GL theory has so far only been rigorously derived from the BCS model for periodic systems without boundaries \cite{frank_microscopic_2011}. 
(In the low-density BEC limit at zero temperature it was shown in \cite{frank_condensation_2017} that the effective Gross--Pitaevskii theory inherits the microscopic Dirichlet boundary conditions.) 

In mathematical terms,  the presence of a boundary manifests itself in a compact perturbation of a translation-invariant operator, and we shall show that at weak coupling this leads to the appearance of discrete eigenvalues outside the continuous spectrum. In particular, there is an effective attraction to the boundary, which is strong enough  to create bound states. 

In the following, we shall consider a superconductor on a domain $\Omega$, with either $\Omega=\BR$ or $\Omega=\BR_+ =(0,\infty)$.
The main quantity of interest is the linear two-particle operator
\begin{equation}\label{H_original}
H^\Omega_T= \frac{-\Delta_x-\Delta_y-2\mu}{\tanh\left(\frac{-\Delta_x-\mu}{2T}\right)+\tanh\left(\frac{-\Delta_y-\mu}{2T}\right)}-v \delta(x-y)
\end{equation}
acting in $L_{\rm{symm}}^2(\Omega^2)=\{\psi \in L^2(\Omega^2) \vert \psi(x,y)=\psi(y,x)\ \mathrm{for\ all}\ x,y\in \Omega\}$, where $\Delta$ denotes the {\em Dirichlet} Laplacian on $\Omega$, and the subscripts $x$ and $y$, respectively, indicate the variable on which $\Delta$ acts. 
The first term is defined through functional calculus.
In the second term, $\delta$ is the Dirac delta distribution, and $v>0$ is a coupling constant. 
Moreover, $T>0$ denotes the temperature, and $\mu\in \BR$ is the chemical potential.

As explained in \cite{frank_bcs_2019}, $H^\Omega_T$ characterizes the local stability of the normal state in BCS theory.
If $ H^\Omega_T$ has spectrum below zero, i.e.~$\inf \sigma (H^\Omega_T) <0$, the normal state is unstable and the system in $\Omega$ is superconducting.
If $\inf \sigma (H^\Omega_T) \geq 0$, the normal state is locally stable.
We define the critical temperatures $T_c^\Omega$ as 
\begin{equation}\label{def:Tc}
T_c^\Omega(v):=\inf \left\{T\in (0,\infty) \vert \inf \sigma(H_T^\Omega) \geq 0\right\} \,.
\end{equation}
The sample is thus superconducting for $T< T_c^\Omega$. In the translation-invariant case, i.e. $\Omega=\BR$, 
it is also known that local stability of the normal state implies global stability \cite{hainzl_bcs_2008}; in particular, 
the sample is always in a normal state for $T\geq T_c^\BR$ in this case, i.e.~$T_c^\BR$ separates the superconducting and the normal phases.
For the point interactions considered in \eqref{H_original}, one can derive the explicit relation
\begin{equation}\label{T_c^0_explicit}
\frac{1}{2\pi}\int_\BR \frac{\tanh\left(\frac{q^2-\mu}{2T^\BR_c(v)}\right)}{q^2-\mu} \,\dd q=\frac{1}{v} \,.
\end{equation}

Because of translation invariance, $H_T^\BR$ has purely essential spectrum.
Moreover, $H_T^{\BR_+}$ has the same essential spectrum and possibly  additional eigenvalues below it.
In particular, for all $v>0$ the critical temperatures satisfy
\begin{equation}\label{T1>T0}
T_c^{\BR_+}(v)\geq T_c^\BR(v).
\end{equation}
Our main result states that this inequality is actually strict, at least for small $v$, proving that the boundary increases the critical temperature.
Moreover, the relative difference between the two critical temperatures vanishes both in the weak and in the strong coupling limit.

\begin{thm}\label{main_result}
Let $\mu>0$.
\begin{enumerate}[(i)]
\item \label{thm_ex}There is a $\tilde v>0$ such that 
\begin{equation}
T_c^{\BR_+}(v)>T_c^\BR(v)
\end{equation}
for $0<v<\tilde v$.
\item \label{thm_weak}
In the weak coupling limit
\begin{equation}
\lim_{v\to 0} \frac{T_c^{\BR_+}(v)-T_c^\BR(v)}{T_c^\BR(v)}=0 
\end{equation}
\item \label{thm_strong} In the strong coupling limit
\begin{equation}
 \lim_{v\to \infty} \frac{T_c^{\BR_+}(v)-T_c^\BR(v)}{T_c^\BR(v)}=0 
\end{equation}
\end{enumerate}
\end{thm}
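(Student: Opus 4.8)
The plan is to analyze the three statements via a Birman–Schwinger-type reformulation of the eigenvalue problem for $H^\Omega_T$. Since the interaction is the rank-structured operator $v\,\delta(x-y)$, the condition $\inf\sigma(H^\Omega_T)<0$ should be equivalent to an explicit scalar or operator condition on the "diagonal" of the resolvent of the kinetic term $K^\Omega_T := (-\Delta_x-\Delta_y-2\mu)/(\tanh(\tfrac{-\Delta_x-\mu}{2T})+\tanh(\tfrac{-\Delta_y-\mu}{2T}))$. Concretely, passing to center-of-mass and relative coordinates and using the delta interaction, $T_c^\Omega(v)$ is characterized by the smallest $T$ for which $1/v$ equals the top of the spectrum of the operator $\mathcal{O}^\Omega_T$ obtained by sandwiching $(K^\Omega_T)^{-1}$ between the "restriction to the diagonal" maps; for $\Omega=\BR$ this recovers \eqref{T_c^0_explicit}, and for $\Omega=\BR_+$ the Dirichlet condition makes $\mathcal{O}^{\BR_+}_T$ a (relatively compact) perturbation of the corresponding half-line bulk operator. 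The key monotonicity input is that $T\mapsto \sup\sigma(\mathcal{O}^\Omega_T)$ is strictly decreasing and continuous, so $T_c^{\BR_+}(v)>T_c^\BR(v)$ is equivalent to $\sup\sigma(\mathcal{O}^{\BR_+}_{T_c^\BR(v)}) > \sup\sigma(\mathcal{O}^\BR_{T_c^\BR(v)})$, i.e.\ to the statement that the boundary strictly raises the top of the spectrum at the bulk critical temperature.

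For part \eqref{thm_ex}, I would fix $T=T_c^\BR(v)$, so that $\sup\sigma(\mathcal O^\BR_T)=1/v$ is the edge of the essential spectrum of $\mathcal O^{\BR_+}_T$ (both operators share essential spectrum, as noted around \eqref{T1>T0}), and show that $\mathcal O^{\BR_+}_T$ has an eigenvalue strictly above this edge. The natural tool is a trial function / Fredholm argument: one exhibits a test state $\phi$ supported near the boundary with $\langle\phi,\mathcal O^{\BR_+}_T\phi\rangle > (1/v)\|\phi\|^2$, or equivalently shows that the boundary perturbation $\mathcal O^{\BR_+}_T-\mathcal O^{\BR,\mathrm{half}}_T$ is positive in the relevant direction. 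At weak coupling the relevant momenta concentrate near the Fermi surface $|q|=\sqrt\mu$, so after rescaling the problem should reduce to a one-dimensional Schrödinger-type operator $-\partial_r^2 + $ (effective boundary potential) on $\BR_+$ with Dirichlet condition, and the effective potential coming from the Dirichlet boundary turns out to be attractive; in one dimension any attractive perturbation with negative integral produces a bound state, which is exactly the mechanism alluded to in the introduction. Making this reduction rigorous — controlling the error terms in the resolvent expansion of $K^{\BR_+}_T$ uniformly as $v\to 0$ and identifying the sign of the effective boundary potential — is the step I expect to be the main obstacle.

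For part \eqref{thm_weak}, the strategy is to turn the above into a quantitative estimate: show that the eigenvalue of $\mathcal O^{\BR_+}_{T}$ above the essential spectrum exceeds $1/v$ by an amount that is exponentially small (or at least $o(1)$ relative to the relevant logarithmic scales) as $v\to 0$, and then feed this back through the monotonicity of $T\mapsto \sup\sigma(\mathcal O^\BR_T)$ to convert it into a bound on $(T_c^{\BR_+}-T_c^\BR)/T_c^\BR$. Here one uses the known weak-coupling asymptotics of $T_c^\BR(v)$ extracted from \eqref{T_c^0_explicit} (it behaves like $e^{-c/v}$ up to constants), together with the fact that the derivative of $\sup\sigma(\mathcal O^\BR_T)$ in $T$ near $T_c^\BR$ is of order $1/(v^2 T_c^\BR)$ or so; dividing the eigenvalue gap by this derivative gives the temperature shift, and one checks the ratio to $T_c^\BR$ tends to $0$. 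For part \eqref{thm_strong}, as $v\to\infty$ both critical temperatures diverge, and on large-temperature scales the function $\tanh$ linearizes: $K^\Omega_T \approx 2T(-\Delta_x-\Delta_y-2\mu)/(-\Delta_x-\Delta_y-2\mu) = 2T$ up to lower order, so the boundary correction to $\sup\sigma(\mathcal O^\Omega_T)$ becomes negligible relative to the leading $1/(2T)$ behavior; quantifying this and using the monotonicity/continuity of $T\mapsto\sup\sigma(\mathcal O^\BR_T)$ once more yields that the relative shift vanishes. The common thread in all three parts is the reduction to the scalar function $\sup\sigma(\mathcal O^\Omega_T)$ and careful control of the boundary perturbation in the two asymptotic regimes; the genuinely hard analytic work is concentrated in \eqref{thm_ex}, namely proving that the effective boundary potential is attractive and hence binds.
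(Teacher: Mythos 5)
Your overall framing—passing to a Birman--Schwinger operator $\mathcal O^\Omega_T$ whose top spectral edge equals $1/v$ at $T=T_c^\Omega(v)$, noting the two operators share essential spectrum, and reducing each part to controlling the half-line perturbation—matches the paper's Section~\ref{sec:pre} exactly. But in each of the three parts what you give is a heuristic outline whose key step is deferred, and in each case that deferred step hides a nontrivial and sometimes counterintuitive mechanism that the paper must (and does) supply.

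For part \eqref{thm_ex}, your picture of ``an effective one-dimensional Schr\"odinger operator with an attractive boundary potential'' is misleading in a way that matters. After Fourier-transforming the half-line problem the boundary contribution enters with a \emph{minus} sign in the Dirichlet case (see the $-\frac{1}{4\pi}B_{T,\mu}$ in \eqref{bsop_ak}), so the most obvious boundary-concentrated trial state $h_\epsilon(x)=e^{-\epsilon|x|}$ produces a \emph{negative} Rayleigh quotient in the limit, $-\tfrac14 B_{T,\mu}(0,0)<0$. A bound state does not come from a simple attractive potential; it comes from an off-diagonal coupling between low momenta and momenta near $2\sqrt\mu$. The paper therefore uses a two-parameter trial state $\psi_\epsilon^\lambda=h_\epsilon+\lambda g$ with $\widehat g$ a Gaussian centered at $2\sqrt\mu$, optimizes over $\lambda$, and wins because the cross term $\int B_{T,\mu}(0,q)\widehat g(q)\,\dd q$ diverges like $\ln(\mu/T)$ as $T\to 0$ (Lemmas~\ref{asymptotic_gltag} and~\ref{asymptotic_int_F}) while the loss terms stay bounded. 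Your proposal does not produce a candidate trial function, does not identify the logarithmic divergence that drives the result, and implicitly assumes the wrong sign for the diagonal boundary term. This is not a gap in presentation; the step you defer is the entire proof.

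For part \eqref{thm_weak}, the framework you sketch (monotonicity of $T\mapsto\sup\sigma(\mathcal O^\BR_T)$, converting an eigenvalue gap into a temperature shift) is sound and is essentially Lemma~\ref{weak_lim_reformulation}. However your quantitative guess that the eigenvalue exceeds the essential spectrum by an ``exponentially small'' amount is off: the relevant statement is that $a_{T,\mu}-\sup\sigma(A_{T,\mu}^{\BR_+})\to 0$ as $T\to 0$, while $a_{T,\mu}$ itself diverges logarithmically, and the proof does not estimate the gap directly but instead factors $E_{T,\mu}+B_{T,\mu}+\delta$ and shows $\lVert(E_{T,\mu}+\delta)^{-1/2}B_{T,\mu}(E_{T,\mu}+\delta)^{-1/2}\rVert\to 0$, using a Schur-test bound on $\lVert B_{T,\mu}\rVert$, smallness of $B_{T,\mu}$ near zero momentum, and a logarithmic lower bound on $E_{T,\mu}$ away from zero (Lemmas~\ref{lea:supB}--\ref{lea:Eln}). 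You would need to supply estimates of this type; the ``derivative of $\sup\sigma$ divided into the gap'' route you sketch is plausible but requires exactly the same analytic inputs and you do not provide them.

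For part \eqref{thm_strong}, the heuristic ``$\tanh$ linearizes so the boundary correction becomes negligible'' is too crude and does not identify what must actually be shown. After rescaling, the strong-coupling limit reduces (Lemmas~\ref{lea:limits} and~\ref{lea:thm3}) to the sharp statement $\sup\sigma(A_{1,0}^{\BR_+})=a_{1,0}$, i.e.\ that at $\mu=0$ the Dirichlet boundary creates no eigenvalue above the essential spectrum at all. This is not a ``negligible correction'' statement but a rigidity statement, and the paper proves it by dominating $B_{1,0}$ by the positive-definite kernel $K(p,q)=\min\{B_{1,0}(p,0),B_{1,0}(q,0)\}$ with matching row sums (Lemma~\ref{highTlim_K}). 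Your proposal contains no trace of this idea and your linearization heuristic, taken literally, would only give leading-order agreement of the two critical temperatures, which is weaker than the claimed limit of the relative difference.

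In short: the Birman--Schwinger reduction you identify is the right starting point and coincides with the paper's, but each of (i), (ii), (iii) contains an acknowledged or unacknowledged gap at precisely the place where the real work lies, and in (i) and (iii) your stated intuition points in a direction that does not lead to a correct proof.
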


This result can be viewed as a rigorous justification of the observations in \cite{samoilenka_boundary_2020}. Numerics shows that the ratio  $T_c^{\BR_+}(v)/T_c^\BR(v)$ can be as large as $1.06$, see \cite[Fig.~2]{samoilenka_boundary_2020}. Moreover, numerics also suggests that  $T_c^{\BR_+}(v)$ and $T_c^\BR(v)$ actually agree for $v$ large enough, but it remains an open problem to show this.

Part \eqref{thm_ex} of Theorem~\ref{main_result} follows from the existence of an eigenvalue of $H^{\BR_+}_T$ below the spectrum of $H^\BR_T$.
It is quite remarkable that a Dirichlet boundary can decrease the ground state energy and create bound states. In contrast, for two-particle Schr\"odinger operators of the form $-\Delta_x -\Delta_y+V(x-y)$, only Neumann boundaries can bind states \cite{egger_bound_2020,roos_two-particle_2021}.

While we restrict our attention in this article to the one-dimensional setting with point interactions, we expect that our methods can be generalized to a larger class of interaction potentials, as well as to higher dimensions and the corresponding more complicated geometries possible. We shall leave these generalizations for future investigations, however.

\begin{rem}\label{main_result_neumann}
Our techniques can also be applied in case of Neumann boundary conditions for $\Delta$ on $\BR_+$. In this case one obtains the 
following results instead.
\begin{enumerate}[(i)]
\item \label{thm_n_ex} For all $v>0$
\begin{equation}
T_c^{\BR_+}(v)>T_c^\BR(v)
\end{equation}
\item 
In the weak coupling limit
\begin{equation}\label{thm_n_weak}
\lim_{v\to 0} \frac{T_c^{\BR_+}(v)-T_c^\BR(v)}{T_c^\BR(v)}=0 
\end{equation}
\item \label{thm_n_strong} In the strong coupling limit
\begin{equation}
0< \lim_{v\to \infty} \frac{T_c^{\BR_+}(v)-T_c^\BR(v)}{T_c^\BR(v)}<\infty
\end{equation}
\end{enumerate}
\end{rem}

In the remainder of this article we shall give the proof of Theorem~\ref{main_result}. In the next Section~\ref{sec:pre}, we shall use the Birman--Schwinger principle to conveniently reformulate the problem in terms of bounded operators and compact perturbations. Section~\ref{sec:ex} contains the proof of part (i), the existence of boundary superconductivity. The analysis of the weak and strong coupling limits in parts (ii) and (iii) is the content of Sections~\ref{sec:weak} and~\ref{sec:strong}, respectively. Finally, Section~\ref{sec:pf} contains the proofs of some auxiliary Lemmas.

\section{Preliminaries}\label{sec:pre}
Let us fix the notation
\begin{equation}
L_{T,\mu}(p,q):=\frac{\tanh\left(\frac{p^2-\mu}{2T}\right)+\tanh\left(\frac{q^2-\mu}{2T}\right)}{p^2 + q^2 - 2\mu}.
\end{equation}
Using the partial fraction expansion for $\tanh$ (Mittag-Leffler series), one can obtain the series representation \cite{frank_bcs_2019}
\begin{equation}\label{L_series}
L_{T,\mu}(p,q)={2T}\sum_{n\in \BZ}\frac{1}{p^2 -\mu-iw_n} \frac{1}{q^2 -\mu+iw_n}
\end{equation}
for $w_n=\pi (2n+1)T$. 
Moreover, let
\begin{equation}
F_{T,\mu}(p):=L_{T,\mu}(p,p)=\frac{\tanh\left(\frac{p^2-\mu}{2T}\right)}{p^2 -\mu}
\end{equation}
and
\begin{equation}\label{def:B}
B_{T,\mu}(p,q):=L_{T,\mu}\left(\frac{p+q}{2},\frac{p-q}{2}\right)
\end{equation}

In order to control the kinetic energy in $H^\Omega_T$ the following bounds turn out to be useful. We shall prove them in Section~\ref{sec:pfpre}.

\begin{lemma}\label{lea:L_laplace}
Let $T>0$.
There are constants $C_1(T,\mu),C_2(T,\mu)>0$ such that for all $p,q\in \BR$ 
\begin{equation}
C_1(T,\mu)(1+p^2+q^2)\leq L_{T,\mu}(p,q)^{-1} \leq C_2(T,\mu)(1+p^2+q^2)
\end{equation}
Moreover, for $T_0>0$ there is a $C_3(T_0,\mu)>0$ such that
\begin{equation}
C_3(T_0,\mu) (T+p^2+q^2)\leq L_{T,\mu}(p,q)^{-1}
\end{equation}
for all $T>T_0$ and $p,q\in \BR$.
\end{lemma}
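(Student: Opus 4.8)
The plan is to understand the behavior of $L_{T,\mu}(p,q)^{-1}$ in two regimes: where $p^2+q^2$ is bounded, and where it is large. Recall that
\[
L_{T,\mu}(p,q)^{-1} = \frac{p^2+q^2-2\mu}{\tanh\!\left(\frac{p^2-\mu}{2T}\right)+\tanh\!\left(\frac{q^2-\mu}{2T}\right)} \,.
\]
The denominator $g(p,q) := \tanh\!\left(\frac{p^2-\mu}{2T}\right)+\tanh\!\left(\frac{q^2-\mu}{2T}\right)$ is a smooth, strictly positive function on $\BR^2$: each $\tanh$ term exceeds $-1$ and the sum vanishes only in the limit $p^2,q^2\to -\infty$, which is impossible, so $g>0$ everywhere; moreover $g$ is continuous and tends to $2$ as $p^2+q^2\to\infty$. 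The numerator $p^2+q^2-2\mu$ can be negative or zero, but $L_{T,\mu}^{-1}$ is nonetheless positive and finite everywhere because $L_{T,\mu}(p,q)>0$ (this follows from the series representation \eqref{L_series}, or directly since $\frac{\tanh(a)}{a}>0$ and a convexity/mean-value argument gives positivity of $L$). So $L_{T,\mu}^{-1}$ is a positive continuous function on $\BR^2$.

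For the first pair of bounds, I would argue as follows. On the compact set $\{p^2+q^2\le R^2\}$ for suitable $R$ (say $R^2 = 4\mu_+ + 1$), the function $L_{T,\mu}(p,q)^{-1}/(1+p^2+q^2)$ is continuous and strictly positive, hence bounded above and below by positive constants. Outside this set, $p^2+q^2$ is large, so the numerator $p^2+q^2-2\mu$ is comparable to $1+p^2+q^2$ (the ratio lies between fixed positive constants once $p^2+q^2 \ge 4\mu_+ +1$), while the denominator $g(p,q)$ lies in a fixed interval $[g_{\min},2]$ with $g_{\min}>0$ (indeed $g_{\min}$ can be taken to be the value one gets by sending one variable to the regime $p^2=\mu$, i.e. something like $\tanh(0) + \tanh(\text{large negative})$, but in any case a strictly positive infimum exists because $g$ is continuous, positive, and has limit $2$ at infinity). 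Combining these gives two-sided bounds of the claimed form with constants $C_1,C_2$ depending only on $T$ and $\mu$.

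For the lower bound uniform in $T>T_0$, the point is that the only place $T$-dependence can spoil things is through the denominator: $\tanh\!\left(\frac{p^2-\mu}{2T}\right)$ shrinks toward $0$ as $T$ grows when $p^2-\mu$ is bounded, so $g(p,q)$ can become small, of order $1/T$, near $p^2=q^2=\mu$. Concretely, using $|\tanh x|\le |x|$ we get $g(p,q)\le \frac{p^2+q^2-2\mu}{2T}$ when $p^2,q^2\ge\mu$... more usefully, I want an \emph{upper} bound on $g$ to get a lower bound on $L^{-1}$. Since $\tanh x \le x$ for $x\ge 0$ and $\tanh x \le 0 \le$ anything for $x\le 0$, and more simply $\tanh x \le \min\{1, \max\{x,0\}\} + $ (handling negatives), the cleanest route is: for the region $p^2+q^2 \le 8\mu$ (say), one shows $L_{T,\mu}(p,q)^{-1} \ge c\, T$ for a constant $c$ depending on $\mu$ and $T_0$ — this uses that near the Fermi surface $L_{T,\mu}\approx \frac{1}{2T}(\dots)$ so its inverse is $\gtrsim T$, while $T+p^2+q^2 \le T + 8\mu \le (1+8\mu/T_0)T$ is comparable to $T$ there; for the region $p^2+q^2 \ge 8\mu$, the numerator is $\ge \frac12(p^2+q^2) \ge \frac14(p^2+q^2) + 2\mu \gtrsim T + p^2+q^2$ only if we also control $T$ — here instead we combine the already-established bound $L^{-1}\ge C_1(1+p^2+q^2)$ with the elementary observation that we also need a term of order $T$; this is obtained by noting $g(p,q)\le 2$ always, so $L_{T,\mu}(p,q)^{-1}\ge \frac12(p^2+q^2-2\mu)$, which for $p^2+q^2\ge 8\mu$ exceeds $\frac14(p^2+q^2)$, and separately one still has to produce the $+T$; the trick is that for $p^2+q^2 \ge 8\mu$ one actually has $g(p,q) \le \tanh(\frac{p^2+q^2-2\mu}{4T})\cdot 2 \le \frac{p^2+q^2-2\mu}{2T}$, hence $L_{T,\mu}^{-1} \ge 2T$, and then $L_{T,\mu}^{-1} \ge \max\{2T, \tfrac14(p^2+q^2)\} \ge \tfrac12(T + \tfrac14(p^2+q^2)) \gtrsim T+p^2+q^2$. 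Actually the inequality $\tanh a + \tanh b \le 2\tanh\frac{a+b}{2}$ for $a,b\ge 0$ by concavity is the clean tool. Patching the two regions gives the claimed $C_3(T_0,\mu)(T+p^2+q^2)\le L_{T,\mu}(p,q)^{-1}$.

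The main obstacle I anticipate is getting the constants genuinely uniform in $T$ for the second bound, i.e. correctly handling the near-Fermi-surface region $p^2,q^2\approx\mu$ where the denominator degenerates like $1/T$: one must check that $L_{T,\mu}^{-1}$ there is bounded below by a constant times $T$ (not merely a constant), which requires either the explicit small-argument expansion of $\tanh$ or the concavity inequality $\tanh a+\tanh b\le 2\tanh\frac{a+b}{2}$ applied after reducing to the case where at least the "large" combination $\frac{p^2+q^2-2\mu}{2T}$ is controlled; combined with the lower bound $T>T_0$ to convert additive constants like $8\mu$ into multiples of $T$. The rest is a routine compactness-plus-asymptotics argument.
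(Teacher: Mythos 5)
Your proposal is correct in spirit but takes a different route from the paper, and it contains a few small but genuine misstatements that should be fixed. The paper's proof is more direct: it starts from the single clean estimate $L_{T,\mu}(p,q)\leq \min\bigl\{\tfrac{1}{2T},\,\tfrac{2}{|p^2+q^2-2\mu|}\bigr\}$ (the first bound follows from the $1$-Lipschitz property of $\tanh$ applied as $|\tanh a+\tanh b|=|\tanh a-\tanh(-b)|\leq|a+b|$; the second from $|\tanh|\leq 1$), and then computes $L(1+p^2+q^2)$ and $L(T+p^2+q^2)$ termwise to obtain explicit constants $C_1=\tfrac{2T}{1+4T+2\mu}$ and $C_3=\tfrac{2T_0}{5T_0+2\mu}$ in two lines. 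Only for the remaining inequality ($C_2$, i.e.\ the lower bound on $L$) does the paper invoke a compactness/asymptotics argument similar to yours. By contrast, you use compactness for both sides of the first inequality and a multi-region case analysis for the $C_3$ bound, which works but is more elaborate than necessary; the single estimate $L^{-1}\geq 2T$ everywhere (no region splitting needed) plus $L^{-1}\geq\tfrac12|p^2+q^2-2\mu|$ already does all the work.

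Two concrete errors in your write-up. First, the claim that $g(p,q)=\tanh\bigl(\tfrac{p^2-\mu}{2T}\bigr)+\tanh\bigl(\tfrac{q^2-\mu}{2T}\bigr)$ is \emph{strictly positive} on $\BR^2$ is false for $\mu>0$: since $\tanh$ is odd and increasing, $g$ has the same sign as $p^2+q^2-2\mu$, and it vanishes on the circle $p^2+q^2=2\mu$. (The overall kernel $L_{T,\mu}$ is still strictly positive there because both numerator and denominator vanish to first order, but the separate positivity of $g$ is wrong.) Relatedly, $g\to 2$ at infinity only if \emph{both} $|p|,|q|\to\infty$; along $q=0$, $p\to\infty$ one gets $1-\tanh(\mu/2T)<2$. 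Neither error invalidates your bounded-below conclusion on $\{p^2+q^2\geq R^2\}$ with $R^2>2\mu$, but the stated reasons are incorrect and should be replaced by, e.g., the observation that $\inf\{g: p^2+q^2\geq R^2\}=\tanh\bigl(\tfrac{R^2-\mu}{2T}\bigr)-\tanh\bigl(\tfrac{\mu}{2T}\bigr)>0$.

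Second, in the $C_3$ argument you invoke the concavity inequality $\tanh a+\tanh b\leq 2\tanh\tfrac{a+b}{2}$ in the region $p^2+q^2\geq 8\mu$, but that inequality requires $a,b\geq 0$, i.e.\ $p^2,q^2\geq\mu$, which can fail there (e.g.\ $p=0$, $q^2=8\mu$). The robust tool is the Lipschitz bound $|\tanh a+\tanh b|\leq|a+b|$, valid without sign restrictions; it gives $L^{-1}\geq 2T$ on all of $\BR^2$ at once, eliminating the need to split regions or to ``produce the $+T$'' separately. With that, your estimate $L^{-1}\geq\max\{2T,\tfrac14(p^2+q^2)\}\gtrsim T+p^2+q^2$ goes through cleanly.
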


Since $v \delta(x-y)$ is infinitesimally form bounded with respect to $-\Delta_x-\Delta_y$, it follows that the $H_T^\Omega$ are self-adjoint operators defined via the KLMN theorem.
Moreover, the operators $H_T^\Omega$ become positive for $T$ large enough.
In particular, the critical temperatures defined in \eqref{def:Tc} are finite in both cases $\Omega=\BR$ and $\Omega=\BR_+$. 

Let $L_{T,\mu}^\Omega$ denote the operator $L_{T,\mu}(-i\nabla_x,-i \nabla_y)$ defined through functional calculus.
Of course, $L^\Omega_{T,\mu}$ depends on the domain $\Omega$ and on the boundary conditions imposed on $\Delta$.
Its integral kernel is given by
\begin{equation}
L^\Omega_{T,\mu}(x,y;x',y')= \int_{\BR^2} \dd p\, \dd q\, \overline{t_\Omega(xp)}\overline{t_\Omega(yq)}L_{T,\mu}(p,q)t_\Omega(x'p)t_\Omega(y'q) \,,
\end{equation}
where for the problem on the full real line $t_\BR(x)= \frac{1}{\sqrt{2\pi}}e^{-ix}$ and on the half-line with Dirichlet boundary condition $t_{\BR_+}(x)= \frac{1}{\sqrt{\pi}}\sin(x)$.
For Neumann boundary conditions, one would have $t_{\BR_+}(x)= \frac{1}{\sqrt{\pi}}\cos(x)$ instead.

It is convenient to switch to the Birman--Schwinger formulation of the problem.
For a more regular interaction $V$ instead of $\delta$, the Birman-Schwinger operator would be $V^{1/2}L^\Omega_{T,\mu}V^{1/2}$.
For the $\delta$-case, it turns out that $V^{1/2}$ has to be understood as restriction of a two-body wave function to its diagonal.
Hence, the Birman-Schwinger operator has kernel $L^\Omega_{T,\mu}(x,x;x',x')$ and acts on functions of one variable only.
For the two domains under consideration, the Birman-Schwinger operators $A_{T,\mu}^{\BR_+}:L^2((0,\infty))\to L^2((0,\infty))$ and $A_{T,\mu}^\BR:L^2(\BR)\to L^2(\BR)$ are explicitly given by
\begin{equation}\label{A^1_xy}
(A_{T,\mu}^{\BR_+} \alpha)(x)=\frac{1}{\pi^2}\int_{\BR}\dd p \int_{\BR} \dd q  \int_0^\infty \dd y\sin(px)\sin(qx) L_{T,\mu}(p,q)\sin(py)\sin(qy)\alpha(y)
\end{equation}
and
\begin{equation}
(A_{T,\mu}^\BR \beta)(x)=\frac{1}{4\pi^2}\int_\BR \dd p \int_\BR \dd q  \int_\BR \dd y\, e^{i(p+q)(x-y)} L_{T,\mu}(p,q)\beta(y)
\end{equation}

\begin{lemma}\label{bs_condition} 
The condition $\inf \sigma( H_T^\Omega) <0$ is equivalent to 
\begin{equation}
\sup \sigma(A_{T,\mu}^\Omega)>\frac{1}{v}
\end{equation}
for either $\Omega=\BR$ or $\Omega=\BR_+$. 
\end{lemma}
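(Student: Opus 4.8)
The plan is to read Lemma~\ref{bs_condition} as the Birman--Schwinger principle at spectral parameter zero, implemented through quadratic forms so as to sidestep any invertibility issue. Write $H_T^\Omega=(L_{T,\mu}^\Omega)^{-1}-vV$, where $V$ is the nonnegative operator with quadratic form $\langle\psi,V\psi\rangle=\int_\Omega|\psi(x,x)|^2\,\dd x=\|\tau\psi\|_{L^2(\Omega)}^2$ and $\tau$ denotes restriction of a symmetric two-body function to the diagonal $\{x=y\}$. By Lemma~\ref{lea:L_laplace} the operator $(L_{T,\mu}^\Omega)^{-1}$ is nonnegative and its form domain, call it $Q$, coincides with that of $-\Delta_x-\Delta_y$ on $\Omega^2$ with the prescribed boundary conditions; the one-dimensional Sobolev trace inequality then shows that $\tau\colon Q\to L^2(\Omega)$ is bounded, and this is the precise meaning of the symbol ``$V^{1/2}$'' mentioned before the lemma. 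The identity $\langle\psi,v\delta(x-y)\psi\rangle=v\|\tau\psi\|^2$ on $Q$ is exactly the content of the infinitesimal form boundedness already used to define $H_T^\Omega$ via KLMN, and $H_T^\Omega$ is bounded below, so $\inf\sigma(H_T^\Omega)=\inf_{0\neq\psi\in Q}\langle\psi,H_T^\Omega\psi\rangle/\|\psi\|^2$.

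Next I would remove the unbounded operator $(L_{T,\mu}^\Omega)^{-1}$ by a substitution. By Lemma~\ref{lea:L_laplace}, $L_{T,\mu}^\Omega$ is a bounded, strictly positive (hence injective) operator, so $(L_{T,\mu}^\Omega)^{1/2}$ is bounded and maps $L^2_{\mathrm{symm}}(\Omega^2)$ bijectively onto $Q$. Consequently, $\inf\sigma(H_T^\Omega)<0$ holds iff there is a nonzero $\psi\in Q$ with $\langle\psi,(L_{T,\mu}^\Omega)^{-1}\psi\rangle<v\|\tau\psi\|^2$; writing $\psi=(L_{T,\mu}^\Omega)^{1/2}\phi$ this is equivalent to the existence of a nonzero $\phi\in L^2_{\mathrm{symm}}(\Omega^2)$ with $\|\phi\|^2<v\|\tau(L_{T,\mu}^\Omega)^{1/2}\phi\|^2$, i.e.\ to $v\|\tau(L_{T,\mu}^\Omega)^{1/2}\|^2>1$.

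To finish, observe that $A_{T,\mu}^\Omega=\tau L_{T,\mu}^\Omega\tau^*=\bigl(\tau(L_{T,\mu}^\Omega)^{1/2}\bigr)\bigl(\tau(L_{T,\mu}^\Omega)^{1/2}\bigr)^*$ is a bounded nonnegative operator, so by the elementary identity $\|B\|^2=\sup\sigma(BB^*)$ for bounded $B$ one gets $\sup\sigma(A_{T,\mu}^\Omega)=\|A_{T,\mu}^\Omega\|=\|\tau(L_{T,\mu}^\Omega)^{1/2}\|^2$; the condition from the previous step thus reads precisely $\sup\sigma(A_{T,\mu}^\Omega)>1/v$, and the same argument applies verbatim to $\Omega=\BR$ and $\Omega=\BR_+$. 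A direct computation of the integral kernel of $\tau L_{T,\mu}^\Omega\tau^*$ reproduces $L^\Omega_{T,\mu}(x,x;x',x')$, matching the explicit formula \eqref{A^1_xy} and its analogue on the line. The one step that needs genuine care is the very first one --- making $\tau$ a bounded operator on $Q$ while respecting the Dirichlet (or Neumann) condition on $\BR_+$ and pinning down ``$V^{1/2}$'' --- but once the trace inequality and the form identity for $\delta(x-y)$ are in hand, the remaining manipulations (square roots, the bijectivity of $(L_{T,\mu}^\Omega)^{1/2}$, and $\|B\|^2=\sup\sigma(BB^*)$) are routine.
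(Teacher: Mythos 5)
Your argument is the same as the paper's: both write $H_T^\Omega=(L_{T,\mu}^\Omega)^{-1}-v\delta$ on the form domain $H^1_0(\Omega^2)$, use Lemma~\ref{lea:L_laplace} to show $\sqrt{L_{T,\mu}^\Omega}$ is a bounded bijection from $L^2(\Omega^2)$ onto that form domain, invoke the Sobolev trace inequality to make the restriction-to-diagonal map bounded, and then pass from $\sqrt{L}\,\delta\,\sqrt{L}=M^\dagger M$ to $A_{T,\mu}^\Omega=MM^\dagger$ (your $M=\tau L^{1/2}$). The only cosmetic difference is that you conclude via $\|M\|^2=\sup\sigma(MM^\dagger)$ while the paper notes $\sigma(MM^\dagger)\setminus\{0\}=\sigma(M^\dagger M)\setminus\{0\}$, which are interchangeable here.
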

\begin{proof}
The quadratic form corresponding to $H^\Omega_T$ is defined on the Sobolev space $D_\Omega=H^1_0(\Omega^2)$.
Since the operator $L^\Omega_{T,\mu}$ is positive definite, one can write
\begin{equation}
H^\Omega_T=\left( L^\Omega_{T,\mu} \right)^{-1} - v \delta = \frac{1}{\sqrt{L^\Omega_{T,\mu}}}\left(\BI - v \sqrt{L^\Omega_{T,\mu}} \delta \sqrt{L^\Omega_{T,\mu}}\right)\frac{1}{\sqrt{L^\Omega_{T,\mu}}} \,.
\end{equation}
Hence, $\inf \sigma( H_T^\Omega) <0$ is equivalent to 
\begin{equation}\label{ldlv}
\sup_{\psi \in (L^\Omega_{T,\mu})^{-1/2}D_\Omega, \lVert \psi \rVert_2=1}\left\langle \psi \left|  \sqrt{L^\Omega_{T,\mu}} \delta \sqrt{L^\Omega_{T,\mu}} \right| \psi \right\rangle>\frac{1}{v} \,.
\end{equation}
By Lemma~\ref{lea:L_laplace}, $\sqrt{L^\Omega_{T,\mu}}:L^2(\Omega^2) \to D_\Omega$ and its inverse are bounded.
Hence, $(L^\Omega_{T,\mu})^{-1/2}D_\Omega=L^2(\Omega^2)$.
The projection onto the diagonal $H^1(\Omega^2)\to L^2(\Omega)$, $\psi(x,y) \mapsto \psi(x,x)$ defines a bounded operator \cite[Thm 4.12]{adams_sobolev_2003}.
Let $M_\Omega:L^2(\Omega^2)\to L^2(\Omega)$ be the composition of $\sqrt{L^\Omega_{T,\mu}}$ with the projection $H^1(\Omega^2)\to L^2(\Omega)$.
Explicitly, $M_\Omega$ is given by
\begin{equation}
M_\Omega \psi (x)= \int_{\BR^2} \dd p \, \dd q \int_{\Omega^2} \dd x' \dd y' \, \overline{t_\Omega(xp)}\overline{t_\Omega(xq)}\sqrt{L_{T,\mu}(p,q)}t_\Omega(x'p)t_\Omega(y'q)\psi(x',y')
\end{equation}
where $t_\BR(x)= \frac{1}{\sqrt{2\pi}}e^{-ix}$ and $t_{\BR_+}(x)= \frac{1}{\sqrt{\pi}}\sin(x)$.
Note that $\sqrt{L^\Omega_{T,\mu}} \delta \sqrt{L^\Omega_{T,\mu}} = M_\Omega^\dagger M_\Omega$ and  $A_{T,\mu}^\Omega=M_\Omega M_\Omega^\dagger$.
Hence, $\sigma(A_{T,\mu}^\Omega)\setminus\{0\}= \sigma(\sqrt{L^\Omega_{T,\mu}} \delta \sqrt{L^\Omega_{T,\mu}})\setminus\{0\}$ and the claim follows.
\end{proof}

From now on we will work with the operators $A_{T,\mu}^\Omega$ rather than $H_T^\Omega$.
In momentum space, the operator $A_{T,\mu}^\BR$ is multiplication by the function
\begin{equation}\label{intBdq}
A_{T,\mu}(p)=\frac{1}{4\pi} \int_\BR B_{T,\mu}(p,q) \dd q \,,
\end{equation}
where $B$ is defined in \eqref{def:B}.

\begin{lemma}[Momentum representation of $A_{T,\mu}^\BR$]\label{A_T_0_momentum}
With $\widehat \beta(p)=\frac{1}{\sqrt{2\pi}}\int_\BR \beta(x) e^{ip x} \dd x$ we have for all $\beta_1,\beta_2\in D(A_{T,\mu}^\BR)$ 
\begin{equation}\label{bsop_a0}
\langle \beta_1 | A_{T,\mu}^\BR | \beta_2\rangle =\int_\BR  \overline{\widehat \beta_1}(p) A_{T,\mu}(p) \widehat \beta_2 (p)\dd p.
\end{equation}
\end{lemma}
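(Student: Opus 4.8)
The plan is to compute $\langle\beta_1|A_{T,\mu}^\BR|\beta_2\rangle$ directly from the defining formula for $A_{T,\mu}^\BR$ and to pass to center-of-mass and relative momentum variables, so that the relative-momentum integration collapses to the definition \eqref{intBdq} of $A_{T,\mu}$ while the two space integrations assemble into Fourier transforms. As a preliminary reduction, note that $A_{T,\mu}^\BR$ is a bounded operator --- it equals $M_\BR M_\BR^\dagger$ in the notation of the proof of Lemma~\ref{bs_condition} --- so $D(A_{T,\mu}^\BR)=L^2(\BR)$; moreover, the lower bound $L_{T,\mu}(p,q)^{-1}\ge C_1(T,\mu)(1+p^2+q^2)$ of Lemma~\ref{lea:L_laplace} gives $0<B_{T,\mu}(p,q)\le\bigl(C_1(T,\mu)(1+\tfrac12(p^2+q^2))\bigr)^{-1}$, whence $A_{T,\mu}\in L^\infty(\BR)$ with $A_{T,\mu}(p)\to0$ as $|p|\to\infty$. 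In particular both sides of \eqref{bsop_a0} are bounded sesquilinear forms on $L^2(\BR)\times L^2(\BR)$ (the right-hand side via Plancherel), so it suffices to establish \eqref{bsop_a0} for $\beta_1,\beta_2$ in the Schwartz class and then extend by density.

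For Schwartz $\beta_1,\beta_2$, insert the defining formula $(A_{T,\mu}^\BR\beta_2)(x)=\tfrac1{4\pi^2}\int_\BR\dd p\int_\BR\dd q\int_\BR\dd y\,e^{i(p+q)(x-y)}L_{T,\mu}(p,q)\beta_2(y)$ into $\langle\beta_1|A_{T,\mu}^\BR|\beta_2\rangle=\int_\BR\overline{\beta_1(x)}\,(A_{T,\mu}^\BR\beta_2)(x)\,\dd x$ and perform the $y$-integration first, using $\int_\BR e^{-i(p+q)y}\beta_2(y)\,\dd y=\sqrt{2\pi}\,\widehat\beta_2(-(p+q))$. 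Then substitute $s=p+q$, $t=p-q$, for which $\dd p\,\dd q=\tfrac12\,\dd s\,\dd t$ and, by \eqref{def:B}, $L_{T,\mu}(p,q)=L_{T,\mu}\bigl(\tfrac{s+t}2,\tfrac{s-t}2\bigr)=B_{T,\mu}(s,t)$; the $t$-integral then reproduces $\int_\BR B_{T,\mu}(s,t)\,\dd t=4\pi A_{T,\mu}(s)$. What remains is
\[
(A_{T,\mu}^\BR\beta_2)(x)=\frac1{\sqrt{2\pi}}\int_\BR e^{isx}\,A_{T,\mu}(s)\,\widehat\beta_2(-s)\,\dd s .
\]
Since $L_{T,\mu}$ --- hence $B_{T,\mu}$ and $A_{T,\mu}$ --- depends on its arguments only through their squares, $A_{T,\mu}$ is even, so the substitution $s\mapsto-s$ turns this into $\tfrac1{\sqrt{2\pi}}\int_\BR e^{-isx}A_{T,\mu}(s)\widehat\beta_2(s)\,\dd s$, the inverse Fourier transform of $A_{T,\mu}\widehat\beta_2\in L^1(\BR)\cap L^2(\BR)$. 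Plugging this back into $\int_\BR\overline{\beta_1(x)}\,(A_{T,\mu}^\BR\beta_2)(x)\,\dd x$ and using $\tfrac1{\sqrt{2\pi}}\int_\BR\overline{\beta_1(x)}e^{-isx}\,\dd x=\overline{\widehat\beta_1(s)}$ yields exactly the right-hand side of \eqref{bsop_a0}.

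The only step that genuinely needs care is the justification of these interchanges of integration and of the change of variables. The subtlety is that the uniform bound $|L_{T,\mu}(p,q)|\le C(1+p^2+q^2)^{-1}$ from Lemma~\ref{lea:L_laplace} is \emph{not} itself integrable over the momentum plane, since $\int_{\BR^2}(1+p^2+q^2)^{-1}\,\dd p\,\dd q=\infty$; this is precisely why the order of integration above matters. Carrying out the $y$-integration first is harmless, as it is literally the Fourier transform of the Schwartz function $\beta_2$ and needs no interchange; after it, the remaining double integral $\int_{\BR^2}e^{i(p+q)x}L_{T,\mu}(p,q)\widehat\beta_2(-(p+q))\,\dd p\,\dd q$ \emph{is} absolutely convergent, because in the variables $(s,t)$ the $t$-integral of the bound is $O\bigl((1+s^2)^{-1/2}\bigr)$, which is integrable against the rapidly decaying $|\widehat\beta_2(-s)|$. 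Hence Fubini and the substitution are legitimate, and the final $x$-integration is likewise justified since $A_{T,\mu}\widehat\beta_2\in L^1(\BR)$. A cleaner-looking but essentially equivalent alternative would be to start from $A_{T,\mu}^\BR=M_\BR M_\BR^\dagger$ and compute the symbol of this bounded operator after conjugating by the Fourier transform; the direct computation sketched above seems the shortest, and I expect this convergence bookkeeping to be the main --- if modest --- obstacle.
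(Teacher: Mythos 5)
Your proof is correct and follows essentially the same route the paper intends when it says Lemma~\ref{A_T_0_momentum} ``follows from an analogous computation'' to Lemma~\ref{A_T_momentum}: change to sum and difference momentum variables, collapse the difference variable via the definition of $A_{T,\mu}$, and recognize the Fourier transforms. The extra bookkeeping you supply --- the reduction to Schwartz functions, the bound $A_{T,\mu}\in L^\infty$, and the observation that the joint $(p,q)$-integrand becomes absolutely integrable only after the $y$-integration produces the rapidly decaying factor $\widehat\beta_2(-(p+q))$ --- is exactly the right way to justify the formal manipulation, and is a reasonable thing to spell out since the paper leaves it implicit.
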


 The following Lemma shows that adding the boundary to the system effectively introduces the perturbation $\frac{1}{4\pi} B_{T,\mu}$, where $B_{T,\mu}$ is short for the operator with integral kernel $B_{T,\mu}(p,q)$.
 
 \begin{lemma}[Momentum representation of $A_{T,\mu}^{\BR_+}$]\label{A_T_momentum}
With $\widehat \alpha(p)=\int_0^\infty \alpha(x) \frac{1}{\sqrt{\pi}}\cos(px) \dd x$ we have for all $\alpha_1,\alpha_2\in D(A_{T,\mu}^{\BR_+})$ 
\begin{equation}\label{bsop_ak}
\langle \alpha_1 |  A_{T,\mu}^{\BR_+} | \alpha_2 \rangle=\int_\BR   \overline{\widehat \alpha_1}(p) A_{T,\mu}(p) \widehat \alpha_2 (p)\dd p-\frac{1}{4\pi}\int_\BR \int_\BR  \overline{\widehat \alpha_1}(p)  B_{T,\mu}\left(p,q\right)\widehat \alpha_2(q) \dd p \dd q.
\end{equation}
\end{lemma}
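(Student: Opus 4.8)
The plan is to verify the identity \eqref{bsop_ak} on a convenient dense subspace and then to extend it by continuity. Both sides are bounded sesquilinear forms on $L^2((0,\infty))$: the operator $A_{T,\mu}^{\BR_+}$ is bounded, being of the form $M_{\BR_+}M_{\BR_+}^\dagger$ with $M_{\BR_+}$ bounded (cf.\ the proof of Lemma~\ref{bs_condition}); the map $\alpha\mapsto\widehat\alpha$, with $\widehat\alpha$ regarded as an even function on $\BR$, is bounded from $L^2((0,\infty))$ to $L^2(\BR)$; and $A_{T,\mu}\in L^\infty(\BR)$ while $B_{T,\mu}$ defines a bounded --- indeed Hilbert--Schmidt --- operator on $L^2(\BR)$, both facts following from the upper bound $L_{T,\mu}(p,q)\le C_2(T,\mu)(1+p^2+q^2)^{-1}$ of Lemma~\ref{lea:L_laplace}. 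It then suffices to prove \eqref{bsop_ak} for $\alpha_1,\alpha_2$ in the dense set $\mathcal D$ of functions whose (even) cosine transform is smooth and compactly supported.

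For $\alpha_1,\alpha_2\in\mathcal D$ I would start from the explicit kernel \eqref{A^1_xy}, insert the product-to-sum identity $\sin(px)\sin(qx)=\tfrac12[\cos((p-q)x)-\cos((p+q)x)]$ (and its analogue in the $y$-variables), and carry out the $x$- and $y$-integrations first. Each of these yields a difference of cosine transforms, e.g.\ $\int_0^\infty\overline{\alpha_1(x)}\sin(px)\sin(qx)\,\dd x=\tfrac{\sqrt\pi}{2}\big[\overline{\widehat\alpha_1(p-q)}-\overline{\widehat\alpha_1(p+q)}\big]$, which is supported where $|p-q|$ or $|p+q|$ stays bounded; together with the decay $L_{T,\mu}(p,q)\le C(1+p^2+q^2)^{-1}$ this makes the remaining $(p,q)$-integral absolutely convergent and legitimizes the interchange of integrations. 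The outcome is
\[
\langle\alpha_1|A_{T,\mu}^{\BR_+}|\alpha_2\rangle=\frac{1}{4\pi}\int_\BR\int_\BR L_{T,\mu}(p,q)\big[\overline{\widehat\alpha_1(p-q)}-\overline{\widehat\alpha_1(p+q)}\big]\big[\widehat\alpha_2(p-q)-\widehat\alpha_2(p+q)\big]\,\dd p\,\dd q,
\]
and the substitution $s=p+q$, $t=p-q$ (Jacobian $\tfrac12$) together with the definition \eqref{def:B} of $B_{T,\mu}$ turns this into $\tfrac{1}{8\pi}\int_\BR\int_\BR B_{T,\mu}(s,t)\big[\overline{\widehat\alpha_1(t)}-\overline{\widehat\alpha_1(s)}\big]\big[\widehat\alpha_2(t)-\widehat\alpha_2(s)\big]\,\dd s\,\dd t$.

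Finally I would expand the last product into four terms and treat them in two groups. For the ``diagonal'' terms $\overline{\widehat\alpha_1(t)}\widehat\alpha_2(t)$ and $\overline{\widehat\alpha_1(s)}\widehat\alpha_2(s)$, I integrate out the remaining variable using that $B_{T,\mu}$ is symmetric (a consequence of the symmetry of $L_{T,\mu}$ and its evenness in each argument) together with $\int_\BR B_{T,\mu}(s,t)\,\dd s=4\pi A_{T,\mu}(t)$ from \eqref{intBdq}; their sum equals $\int_\BR\overline{\widehat\alpha_1(p)}A_{T,\mu}(p)\widehat\alpha_2(p)\,\dd p$, the first term of \eqref{bsop_ak}. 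The ``off-diagonal'' terms $-\overline{\widehat\alpha_1(t)}\widehat\alpha_2(s)$ and $-\overline{\widehat\alpha_1(s)}\widehat\alpha_2(t)$ coincide after relabeling $s\leftrightarrow t$ and using the symmetry of $B_{T,\mu}$, and together they give $-\tfrac{1}{4\pi}\int_\BR\int_\BR\overline{\widehat\alpha_1(p)}B_{T,\mu}(p,q)\widehat\alpha_2(q)\,\dd p\,\dd q$, the second term of \eqref{bsop_ak}. I expect the only genuinely delicate point to be the Fubini step --- $L_{T,\mu}$ decays only like $|(p,q)|^{-2}$, which is borderline non-integrable in two dimensions --- which is exactly why the computation is first performed on the dense class $\mathcal D$, where the compact support of $\widehat\alpha_1,\widehat\alpha_2$ in the $p\pm q$ directions renders every integral absolutely convergent; the remaining manipulations are bookkeeping.
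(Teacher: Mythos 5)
Your proof is correct and follows essentially the same route as the paper: the product-to-sum identity for $\sin(px)\sin(qx)$, integrating out $x,y$ to produce cosine transforms evaluated at $p\pm q$, and the change of variables $s=p+q$, $t=p-q$ to reach the symmetric form $\tfrac{1}{8\pi}\iint B_{T,\mu}(s,t)\big[\overline{\widehat\alpha_1(t)}-\overline{\widehat\alpha_1(s)}\big]\big[\widehat\alpha_2(t)-\widehat\alpha_2(s)\big]\,\dd s\,\dd t$, from which the symmetry of $B_{T,\mu}$ and the identity $\int_\BR B_{T,\mu}(p,q)\,\dd q=4\pi A_{T,\mu}(p)$ give the claim. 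The only difference is that you spell out a density-plus-boundedness reduction and a Fubini justification that the paper leaves implicit, which is fine (though note the upper bound on $L_{T,\mu}$ from Lemma~\ref{lea:L_laplace} carries the constant $C_1^{-1}$, not $C_2$).
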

Note that here we work with the cosine transform and not the sine transform as might be expected from \eqref{A^1_xy}.
This is because $\alpha$ is the diagonal of a function which is antisymmetric under both $x\to -x$ and $y \to -y$ and hence symmetric under $(x,y) \to (-x,-y)$.
\begin{proof}[Proof or Lemma~\ref{A_T_momentum}]
Using that $\sin(px)\sin(qx)=\frac{1}{2}[\cos((p-q)x)-\cos((p+q)x)]$ and substituting $p'=p-q$ and $q'=p+q$ gives
\begin{multline}
\langle\alpha_1| A_{T,\mu}^{\BR_+} |\alpha_2\rangle=\frac{1}{\pi^2} \int_{\BR^2} \dd p\, \dd q \int_0^\infty \dd x \int_0^\infty \dd y\,  \overline{\alpha_1(x)}\sin(px)\sin(qx) L_{T,\mu}(p,q)\sin(py)\sin(qy)\alpha_2(y)\\
=\frac{1}{4\pi^2} \int_{\BR^2} \frac{\dd p' \dd q'}{2}  \int_0^\infty \dd x \int_0^\infty \dd y\,  \overline{\alpha_1(x)}[\cos(p'x)-\cos(q'x)] L_{T,\mu}\left(\frac{p'+q'}{2},\frac{p'-q'}{2}\right)
[\cos(p'y)-\cos(q'y)]\alpha_2(y)\\
=\int_{\BR^2} \frac{\dd p' \dd q'}{8\pi}\left[\overline{\widehat \alpha_1(p')}-\overline{\widehat \alpha_1(q')}\right] B_{T,\mu}(p',q')\left[\widehat \alpha_2(p')-\widehat \alpha_2(q')\right].
\end{multline}
Since $B(p',q')=B(q',p')$, this reduces to 
\begin{equation}
\langle \alpha_1 | A_{T,\mu}^{\BR_+} | \alpha_2\rangle=\int_{\BR^2} \frac{\dd p' \dd q'}{4\pi}\,\overline{\widehat \alpha_1(p')} B_{T,\mu}(p',q')\left[\widehat \alpha_2(p')-\widehat \alpha_2(q')\right].
\end{equation}
\end{proof}
Lemma~\ref{A_T_0_momentum} follows from an analogous computation.

Since the operator $A_{T,\mu}^\BR$ is multiplication by the function \eqref{intBdq}, it has purely essential spectrum.
The perturbation $B_{T,\mu}$ in $A_{T,\mu}^{\BR_+}$ is Hilbert--Schmidt and thus compact.
Hence,  $\sigma( A_{T,\mu}^\BR )=\sigma_{\text{ess}}( A_{T,\mu}^{\BR_+} )$.
It follows that for all $T<T_c^\BR(v)$ we have $\sup\sigma( A_{T,\mu}^{\BR_+} )\geq \sup\sigma( A_{T,\mu}^\BR )>1/v$, which  implies \eqref{T1>T0}.

\begin{rem}
Choosing Neumann instead of Dirichlet boundary conditions amounts to changing the minus sign in \eqref{bsop_ak} into a plus sign. 
\end{rem}

It is possible to give a more explicit expression for $ \sup\sigma( A_{T,\mu}^\BR )$.
The following is proved in Section~\ref{sec:pfpre}.
\begin{lemma}\label{intBdq_max}
For all $p\in \BR$ 
\begin{equation}
\int_\BR B_{T,\mu}(p,q) \dd q \leq \int_\BR B_{T,\mu}(0,q) \dd q \,.
\end{equation}
\end{lemma}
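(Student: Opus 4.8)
The plan is to use the Mittag--Leffler series \eqref{L_series} to turn the claim into a sum of elementary, manifestly monotone integrals. The first step is the substitution $q=2u-p$ in $\int_\BR B_{T,\mu}(p,q)\,\dd q$: since $B_{T,\mu}(p,q)=L_{T,\mu}\bigl(\tfrac{p+q}{2},\tfrac{p-q}{2}\bigr)$, this gives $\int_\BR B_{T,\mu}(p,q)\,\dd q=2\int_\BR L_{T,\mu}(u,p-u)\,\dd u$. Inserting \eqref{L_series} and exchanging the sum with the $u$-integral --- which is legitimate since $\sum_{n}\int_\BR\bigl|(u^2-\mu-iw_n)^{-1}\bigr|\,\bigl|((p-u)^2-\mu+iw_n)^{-1}\bigr|\,\dd u<\infty$, as by Cauchy--Schwarz and translation invariance in $u$ each term is bounded by $\int_\BR\bigl((u^2-\mu)^2+w_n^2\bigr)^{-1}\,\dd u=O(|w_n|^{-3/2})$ --- one obtains
\begin{equation*}
\int_\BR B_{T,\mu}(p,q)\,\dd q=4T\sum_{n\in\BZ}J_n(p),\qquad J_n(p):=\int_\BR\frac{\dd u}{(u^2-\mu-iw_n)\,\bigl((p-u)^2-\mu+iw_n\bigr)}.
\end{equation*}
It therefore suffices to show $J_n(p)\le J_n(0)$ for each $n$.

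The second step is to evaluate $J_n(p)$ by residues. For $n\ge0$ we have $w_n>0$; let $\kappa_n=r_n+is_n$ be the principal square root of $\mu+iw_n$, so that $r_n,s_n>0$, $r_n^2-s_n^2=\mu$, $2r_ns_n=w_n$ and $|\kappa_n|^2=r_n^2+s_n^2=\sqrt{\mu^2+w_n^2}$. Viewed as a rational function of $u$ (decaying like $|u|^{-4}$ at infinity), the integrand of $J_n(p)$ has simple poles at $u=\pm\kappa_n$ and $u=p\mp\bar\kappa_n$, of which exactly $u=\kappa_n$ and $u=p-\bar\kappa_n$ lie in the upper half-plane. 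Closing the contour there and adding the two residues --- the spurious factor $p-2r_n$ cancels out --- a short computation gives
\begin{equation*}
J_n(p)=\frac{2\pi s_n}{|\kappa_n|^2\,\bigl(p^2+4s_n^2\bigr)},\qquad 4s_n^2=2\bigl(\sqrt{\mu^2+w_n^2}-\mu\bigr)>0.
\end{equation*}
For $n\le-1$ one has $w_n<0$, but the substitution $u\mapsto p-u$ shows $J_n(p)$ is real, and $w_{-n-1}=-w_n$ gives $J_{-n-1}(p)=\overline{J_n(p)}=J_n(p)$; hence the displayed formula (with $|w_n|$ in place of $w_n$) holds for every $n$, so each $J_n(p)$ is a strictly positive number of the form $C_n/(p^2+D_n)$ with $C_n,D_n>0$.

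The conclusion is then immediate: each $J_n(p)=C_n/(p^2+D_n)$ is strictly decreasing in $p^2$, so $J_n(p)\le J_n(0)$ with equality only at $p=0$, and summing over $n\in\BZ$ yields $\int_\BR B_{T,\mu}(p,q)\,\dd q\le\int_\BR B_{T,\mu}(0,q)\,\dd q$ for all $p$. (As a byproduct, the multiplier $A_{T,\mu}(p)=\tfrac{1}{4\pi}\int_\BR B_{T,\mu}(p,q)\,\dd q$ attains its maximum precisely at $p=0$, so that $\sup\sigma(A_{T,\mu}^\BR)=A_{T,\mu}(0)$.)

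The only points demanding care are bookkeeping ones: choosing the correct branch of the square root and correctly identifying which of the four poles lie in the upper half-plane, and justifying the interchange of summation and integration. The residue calculation itself is routine; alternatively one may note that $J_n$ is the convolution of $u\mapsto(u^2-\mu-iw_n)^{-1}$ with $u\mapsto(u^2-\mu+iw_n)^{-1}$, whose Fourier transforms are constant multiples of $e^{-s_n|k|}$, so that $\widehat{J_n}$ is a multiple of $e^{-2s_n|k|}$ and $J_n$ is the corresponding Lorentzian in $p$. I do not anticipate any further difficulty.
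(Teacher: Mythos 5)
Your proof is correct, and it takes a genuinely different route from the paper's. The paper's argument is a one-line pointwise bound: it uses the elementary scalar inequality
\begin{equation*}
\frac{\tanh x+\tanh y}{x+y}\;\le\;\frac{1}{2}\left(\frac{\tanh x}{x}+\frac{\tanh y}{y}\right),
\end{equation*}
(proved by a short symmetrization argument) to obtain $B_{T,\mu}(p,q)\le\tfrac12\bigl[F_{T,\mu}(\tfrac{p+q}{2})+F_{T,\mu}(\tfrac{p-q}{2})\bigr]$, whence $\int_\BR B_{T,\mu}(p,q)\,\dd q\le\int_\BR F_{T,\mu}(q/2)\,\dd q=\int_\BR B_{T,\mu}(0,q)\,\dd q$ after one change of variables. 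Your approach instead pushes the Mittag--Leffler series through the $q$-integral and evaluates each summand by residues (or, equivalently, by the Fourier--convolution observation you make at the end), exposing the exact structure
\begin{equation*}
\int_\BR B_{T,\mu}(p,q)\,\dd q\;=\;4T\sum_{n\in\BZ}\frac{2\pi s_n}{\sqrt{\mu^2+w_n^2}\,\bigl(p^2+4s_n^2\bigr)},\qquad 4s_n^2=2\bigl(\sqrt{\mu^2+w_n^2}-\mu\bigr),
\end{equation*}
a sum of Lorentzians each centred at $p=0$. The interchange of sum and integral is justified as you say: Cauchy--Schwarz and translation invariance reduce the $n$-th term to $\int_\BR\bigl((u^2-\mu)^2+w_n^2\bigr)^{-1}\dd u=\tfrac{\pi}{2s_n\sqrt{\mu^2+w_n^2}}=O(|w_n|^{-3/2})$, which is summable. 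The paper's proof is shorter and avoids complex analysis; yours is more computational but buys an explicit closed form for $4\pi A_{T,\mu}(p)$ as a sum of Lorentzians and, as a byproduct, the strict inequality for $p\neq0$ together with termwise monotonicity --- strictly more information than the lemma asks for. Both are correct.
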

Consequently, 
\begin{equation}
a_{T,\mu}:=\sup \sigma(A_{T,\mu}^\BR)=\frac{1}{4\pi} \int_\BR B_{T,\mu}(0,q) \dd q \,.
\end{equation}
Hence, in the translation invariant case superconductivity is equivalent to $a_{T,\mu}>\frac{1}{v}$ and the critical temperature is determined by \eqref{T_c^0_explicit}.
Note that $a_{T,\mu}$ is decreasing in $T$.
Therefore, $T_c^\BR(v)$ is a monotonically increasing function of $v$.

\section{Existence of Boundary Superconductivity}\label{sec:ex}
From now on we assume that $\mu>0$.
In this Section, we show that for weak coupling the half-line critical temperature is higher than the bulk critical temperature.
The idea is to prove that for $T$ below a threshold $T_0>0$ we have 
\begin{equation}\label{A1>a}
\sup \sigma(A_{T,\mu}^{\BR_+})>a_{T,\mu}\,.
\end{equation}
Then consider $v<\tilde v:= a_{T_0,\mu}^{-1}$.
We must have $T_c^\BR(v)<T_0$ by the monotonicity of $T_c^\BR(v)$.
By definition and continuity of $\inf \sigma(H_T^{\BR^+})$ in $T$, $\sup \sigma\left(A_{T_c^{\BR_+}(v),\mu}^{\BR_+}\right)=\frac{1}{v}=a_{T_c^\BR(v),\mu}$.
If $T_c^\BR(v)=T_c^{\BR_+}(v)$, we would get a contradiction to \eqref{A1>a}.
Thus, $T_c^\BR(v)\neq T_c^{\BR_+}(v)$ and, together with \eqref{T1>T0}, part \eqref{thm_ex} of Theorem~\ref{main_result} follows.

To prove \eqref{A1>a}, we use the variational principle with a trial function mimicking the ground state found in \cite{samoilenka_boundary_2020}.
We choose $\psi_\epsilon^\lambda(x)=e^{-\epsilon \vert x\vert} + \lambda g(x)$, where $\lambda\in\BR$ and the cosine Fourier transform 
$\widehat g(p)=\frac{1}{\sqrt{\pi}}\int_0^\infty g(x) \cos(px) \dd x$ is real, continuous and centered at $2\sqrt{\mu}$.

\begin{prop}\label{prop:deltainteraction}
Let $\widehat g(p)=e^{-(\vert p\vert-2\sqrt{\mu})^2/b}$ for some constant $b>0$.
For  $\mu>0$ there exists $T_0>0$ such that for $T<T_0$ 
$$\max_\lambda \lim_{\epsilon\to 0} \langle \psi_\epsilon^\lambda\vert  A_{T,\mu}^{\BR_+}-a_{T,\mu} \BI \vert \psi_\epsilon^\lambda\rangle >0.$$ 
\end{prop}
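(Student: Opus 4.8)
The plan is to compute $\lim_{\epsilon\to 0}\langle\psi_\epsilon^\lambda|A_{T,\mu}^{\BR_+}-a_{T,\mu}\BI|\psi_\epsilon^\lambda\rangle$ as a quadratic polynomial in $\lambda$ and show its maximum over $\lambda$ is positive for $T$ small. By Lemma~\ref{A_T_momentum}, $\langle\psi|A_{T,\mu}^{\BR_+}|\psi\rangle = \langle\psi|A_{T,\mu}^{\BR}|\psi\rangle - \frac{1}{4\pi}\int\int\overline{\widehat\psi(p)}B_{T,\mu}(p,q)\widehat\psi(q)\,\dd p\,\dd q$, so the quantity of interest is
\begin{equation*}
\int_\BR (A_{T,\mu}(p)-a_{T,\mu})|\widehat\psi_\epsilon^\lambda(p)|^2\,\dd p - \frac{1}{4\pi}\int_\BR\int_\BR \overline{\widehat\psi_\epsilon^\lambda(p)}B_{T,\mu}(p,q)\widehat\psi_\epsilon^\lambda(q)\,\dd p\,\dd q.
\end{equation*}
Here $\widehat\psi_\epsilon^\lambda = \widehat h_\epsilon + \lambda\widehat g$ where $h_\epsilon(x)=e^{-\epsilon|x|}$ has $\widehat h_\epsilon(p) = \sqrt{2/\pi}\,\epsilon/(\epsilon^2+p^2)$, which concentrates at $p=0$ as $\epsilon\to0$ but is not $L^2$-normalizable in the limit — so I expect the $\epsilon\to 0$ limit to be finite only because the integrand $A_{T,\mu}(p)-a_{T,\mu}$ vanishes at $p=0$ (by Lemma~\ref{intBdq_max}, $a_{T,\mu}=A_{T,\mu}(0)$ is the max) and because $B_{T,\mu}(p,q)$ is continuous. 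The first step is thus to pin down the behavior of $A_{T,\mu}(p)-a_{T,\mu}$ near $p=0$: I claim $A_{T,\mu}(0)-A_{T,\mu}(p)\sim c\,|p|$ for small $|p|$ with $c=c(T,\mu)>0$, the non-smoothness coming from the $|q|$-integration of $B_{T,\mu}$ near the Fermi surface $|q|=2\sqrt\mu$ — this linear (not quadratic) vanishing is what makes $\int(A_{T,\mu}(p)-a_{T,\mu})|\widehat h_\epsilon(p)|^2\dd p$ converge to a finite \emph{negative} constant of order $-c\cdot(\text{const})$ rather than to $0$ or $-\infty$.

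Next I would evaluate all cross terms in the $\epsilon\to0$ limit. Since $\widehat h_\epsilon \to \sqrt{2\pi}\,\delta_0$ weakly (as a measure, up to normalization; more precisely $\widehat h_\epsilon(p)\,\dd p$ has total mass $\to$ const and concentrates at $0$), the terms linear in $\lambda$ pick out $\widehat g(0)$ and $B_{T,\mu}(0,q)$; since $\widehat g(0)=e^{-4\mu/b}$ is a fixed nonzero number and $B_{T,\mu}(0,\cdot)$ is a fixed function, these limits are explicit finite constants. The term quadratic in $\lambda$, namely $\int(A_{T,\mu}(p)-a_{T,\mu})|\widehat g(p)|^2\dd p - \frac{1}{4\pi}\int\int\overline{\widehat g(p)}B_{T,\mu}(p,q)\widehat g(q)\dd p\,\dd q$, has no $\epsilon$ in it at all. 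So the limit is $P(\lambda) = \alpha + \beta\lambda + \gamma\lambda^2$ with $\alpha = \lim_\epsilon[\text{the }h_\epsilon\text{-only part}]$ (a negative number), and $\beta,\gamma$ explicit. The maximum of $P$ over $\lambda$ is $\alpha - \beta^2/(4\gamma)$ when $\gamma<0$, and is $+\infty$ when $\gamma>0$; in either case it is positive provided $\beta\neq 0$ and, in the $\gamma<0$ case, provided $\beta^2 > 4\alpha\gamma$, i.e. $|\beta|$ is not too small relative to $|\alpha||\gamma|$.

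The final step is the limit $T\to 0$: I would show that as $T\to 0$ the coefficient $\alpha(T)$ stays bounded (it is $\sim -c(T)\cdot\text{const}$ and $c(T)$ has a finite limit, or at worst grows slowly) while $\beta(T)\to\beta_0\neq 0$ — this is where the choice of $\widehat g$ centered exactly at the Fermi momentum $2\sqrt\mu$ matters, because $B_{T,\mu}(0,q)$ develops a logarithmic singularity (the BCS log) concentrated near $|q|=2\sqrt\mu$ as $T\to0$, so $\int B_{T,\mu}(0,q)\widehat g(q)\dd q\to\infty$, forcing $|\beta(T)|\to\infty$ and hence $\max_\lambda P(\lambda)\to+\infty$. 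Thus for $T$ below some threshold $T_0$ the maximum is strictly positive. \textbf{The main obstacle} I anticipate is controlling the interchange of $\lim_{\epsilon\to0}$ with the (improper, unbounded-domain) $p$- and $q$-integrals — in particular justifying that $\int(A_{T,\mu}(p)-a_{T,\mu})|\widehat h_\epsilon(p)|^2\dd p$ converges and identifying its limit, which requires a quantitative local estimate $|A_{T,\mu}(0)-A_{T,\mu}(p)|\le C|p|$ together with integrability of $A_{T,\mu}(p)-a_{T,\mu}$ against the slowly-decaying tail of $|\widehat h_\epsilon|^2$ — and, relatedly, establishing the $T\to0$ asymptotics of $\beta(T)$ with enough precision to beat $\alpha(T)\gamma(T)$. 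I would expect these technical estimates on $A_{T,\mu}$ and $B_{T,\mu}$ to be the content of the auxiliary lemmas deferred to Section~\ref{sec:pf}.
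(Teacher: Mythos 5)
Your overall strategy matches the paper's: expand $\lim_{\epsilon\to 0}\langle\psi_\epsilon^\lambda|A_{T,\mu}^{\BR_+}-a_{T,\mu}\BI|\psi_\epsilon^\lambda\rangle$ as a quadratic $\alpha+2\beta\lambda+\gamma\lambda^2$ with $\gamma<0$, optimize to $\alpha-\beta^2/\gamma$, and let the BCS logarithm force $\beta^2/|\gamma|\sim\ln(1/T)\to\infty$ while $\alpha$ stays bounded. The roles you assign to $h_\epsilon$ (concentrating at $p=0$) and to $\widehat g$ (centered at the Fermi momentum, feeding the log in $\int B_{T,\mu}(0,q)\widehat g(q)\,\dd q$) are exactly the paper's.

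However, your identification of the coefficient $\alpha$ rests on a false intermediate claim. You assert $A_{T,\mu}(0)-A_{T,\mu}(p)\sim c|p|$ with $c>0$, attributing the cusp to the Fermi surface $|q|=2\sqrt\mu$, and conclude that $\int(A_{T,\mu}(p)-a_{T,\mu})|\widehat h_\epsilon(p)|^2\,\dd p$ tends to a strictly negative constant. But at fixed $T>0$ there is no Fermi-surface singularity: $L_{T,\mu}$, hence $B_{T,\mu}$, is smooth, and $A_{T,\mu}(p)$ is an even differentiable function of $p$, so $A_{T,\mu}(0)-A_{T,\mu}(p)=o(|p|)$ and that integral tends to \emph{zero}. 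This is precisely what Lemma~\ref{lea:ex_dom_conv} establishes, by showing $f(p,q)=p^{-1}\bigl(B_{T,\mu}(p,q)-B_{T,\mu}(0,q)\bigr)$ vanishes at $p=0$ and admits a $q$-integrable dominating function uniformly in $p$. The entire negative value $\alpha=-\tfrac14 B_{T,\mu}(0,0)$ comes from the off-diagonal piece $-\tfrac{1}{4\pi}\iint\widehat h_\epsilon(p)B_{T,\mu}(p,q)\widehat h_\epsilon(q)\,\dd p\,\dd q$, which concentrates at $(p,q)=(0,0)$. Because you only need $\alpha$ finite and $T$-uniformly bounded, this error does not derail the proof, but the mechanism you describe for $\alpha$ would not survive the computation. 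The remaining ingredients you defer to auxiliary lemmas (showing $\gamma<0$, $|\gamma|=O(\ln(1/T))$, and $\int B_{T,\mu}(0,q)\widehat g(q)\,\dd q\gtrsim\ln(1/T)$) are indeed the content of Lemmas~\ref{asymptotic_gltag} and~\ref{asymptotic_int_F}.
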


As discussed above, Theorem~\ref{main_result} \eqref{thm_ex} follows directly from Prop.~\ref{prop:deltainteraction}.

\begin{proof}
Let $h_\epsilon(x)=e^{-\epsilon \vert x \vert}$.
The cosine Fourier transform of the trial state is $\widehat \psi_\epsilon^\lambda (p)=\widehat h_\epsilon(p)+\lambda \widehat g(p)$, where $\widehat h_\epsilon(p)=\frac{1}{\sqrt{\pi}}\frac{\epsilon}{\epsilon^2+p^2}.$
We have $\lim_{\epsilon\to 0} \langle \psi_\epsilon^\lambda\vert  A_{T,\mu}^{\BR_+}-a_{T,\mu} \BI \vert \psi_\epsilon^\lambda\rangle=\lim_{\epsilon\to 0} \langle {h_\epsilon} \vert   A_{T,\mu}^{\BR_+}-a_{T,\mu} \BI \vert  {h_\epsilon}\rangle+2\lambda \lim_{\epsilon\to 0} \langle g \vert   A_{T,\mu}^{\BR_+}-a_{T,\mu} \BI \vert  {h_\epsilon}\rangle+\lambda^2 \langle  g \vert  A_{T,\mu}^{\BR_+}-a_{T,\mu} \BI \vert  g \rangle$.
In Lemma~\ref{asymptotic_gltag} we show $\langle  g \vert  A_{T,\mu}^{\BR_+}-a_{T,\mu} \BI \vert  g \rangle<0$.
Maximizing over $\lambda$ thus yields 
\begin{equation}\label{max_poly}
\max_\lambda \lim_{\epsilon\to 0} \langle \psi_\epsilon^\lambda\vert   A_{T,\mu}^{\BR_+}-a_{T,\mu} \BI \vert \psi_\epsilon^\lambda\rangle = \lim_{\epsilon\to 0} \langle {h_\epsilon} \vert   A_{T,\mu}^{\BR_+}-a_{T,\mu} \BI \vert {h_\epsilon}\rangle-\frac{\lim_{\epsilon\to 0} \langle g \vert   A_{T,\mu}^{\BR_+}-a_{T,\mu}\BI \vert {h_\epsilon}\rangle^2}{ \langle  g \vert   A_{T,\mu}^{\BR_+}-a_{T,\mu} \BI \vert  g \rangle}
\end{equation}

We now compute the two limits. 
Note that for bounded continuous functions $f$, we have $\lim_{\epsilon\to 0} \int_\BR \frac{1}{\sqrt{\pi}}\frac{\epsilon}{\epsilon^2+p^2} f(p) \dd p= \sqrt{\pi}f(0)$.
Moreover, for bounded functions $f$ such that $\lim_{p\to 0} \frac{f(p)}{p}$ exists, $\lim_{\epsilon\to 0} \int_\BR \frac{1}{\pi}\frac{\epsilon^2}{(\epsilon^2+p^2)^2} f(p) \dd p= \frac{1}{\pi}\lim_{p\to 0} \frac{f(p)}{p}$.
With the momentum space representation of $A_{T,\mu}^{\BR_+}$ in Lemma~\ref{A_T_momentum} we thus obtain 
\begin{multline}\label{l1part}
 \lim_{\epsilon\to 0} \langle  {h_\epsilon} \vert  A_{T,\mu}^{\BR_+}-a_{T,\mu} \BI \vert  g \rangle 
=\lim_{\epsilon\to 0}\int_\BR \dd p\ \widehat h_\epsilon(p) \widehat g (p)\left(A_{T,\mu}(p)- A_{T,\mu}(0)\right)\\
-\lim_{\epsilon\to 0}\int_\BR \dd p\, \widehat h_\epsilon(p)\int_\BR \dd q\  \frac{1}{4\pi}B_{T,\mu}(p,q)\widehat g(q)
=-\frac{1}{4\sqrt{\pi}}\int_\BR \dd q\  B_{T,\mu}(0,q)\widehat g(q).
\end{multline}
Moreover, 
\begin{multline}\label{ex_hh}
 \lim_{\epsilon\to 0} \langle  {h_\epsilon} \vert  A_{T,\mu}^{\BR_+}-a_{T,\mu} \BI \vert {h_\epsilon} \rangle \\
=\lim_{\epsilon\to 0}\int_\BR \dd p \,\widehat h_\epsilon ^2(p)\int_\BR \dd q\  \frac{1}{4\pi} (B_{T,\mu}(p,q)- B_{T,\mu}(0,q))
-\lim_{\epsilon\to 0}\int_\BR \dd p\, \widehat h_\epsilon(p) \int_\BR \dd q\  \frac{1}{4\pi} B_{T,\mu}(p,q) \widehat h_\epsilon (q)\\
=\frac{1}{\pi}\lim_{p\to 0}\frac{1}{p} \int_\BR \dd q\  \frac{1}{4\pi}  (B_{T,\mu}(p,q)- B_{T,\mu}(0,q))
-\frac{1}{4} B_{T,\mu}(0,0).
\end{multline}
In the first summand, we want to interchange limit and integration using dominated convergence.
The following Lemma is proved below.

\begin{lemma}\label{lea:ex_dom_conv}
The function $f(p,q)=\frac{1}{p}   (B_{T,\mu}(p,q)- B_{T,\mu}(0,q))$ 
\begin{enumerate}[(i)]
\item is continuous at $p=0$ and satisfies $f(0,q)=0$ for all $q$. \label{lea_ex_dom_conv_1}
\item There is a $g\in L^1(\BR)\cap L^\infty(\BR)$ such that $\vert f(p,q) \vert\leq g(q)$ for all $p$ and $q$. \label{lea_ex_dom_conv_2}
\end{enumerate}
\end{lemma}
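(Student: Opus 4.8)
\textbf{Proof plan for Lemma~\ref{lea:ex_dom_conv}.}

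The plan rests on two structural facts about $B_{T,\mu}$. First, $L_{T,\mu}$, and hence $B_{T,\mu}$, is $C^\infty$ on $\BR^2$: in the series \eqref{L_series} every denominator stays bounded away from zero (since $w_n=\pi(2n+1)T\neq 0$), and the series converges locally uniformly together with all of its term-by-term derivatives. Second, $B_{T,\mu}$ is even in $p$: replacing $p$ by $-p$ in the explicit formula for $B_{T,\mu}$ merely interchanges the two $\tanh$-terms in the numerator and fixes the denominator $(p^2+q^2)/2-2\mu$, so $B_{T,\mu}(-p,q)=B_{T,\mu}(p,q)$. Part \eqref{lea_ex_dom_conv_1} is then immediate: evenness gives $\partial_p B_{T,\mu}(0,q)=0$, and the fundamental theorem of calculus yields, for $p\neq 0$,
\[
f(p,q)=\frac{B_{T,\mu}(p,q)-B_{T,\mu}(0,q)}{p}=\int_0^1 \partial_p B_{T,\mu}(tp,q)\,\dd t,
\]
whose right-hand side is continuous in $(p,q)$ and equals $\partial_p B_{T,\mu}(0,q)=0$ at $p=0$.

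For part \eqref{lea_ex_dom_conv_2} I would split on the size of $|p|$. Lemma~\ref{lea:L_laplace}, together with the identity $\big(\tfrac{p+q}{2}\big)^2+\big(\tfrac{p-q}{2}\big)^2=\tfrac{p^2+q^2}{2}$, gives the uniform bound $B_{T,\mu}(p,q)\le \tfrac{2}{C_1(T,\mu)}(1+p^2+q^2)^{-1}$. Hence for $|p|\ge 1$ one has $|f(p,q)|\le |B_{T,\mu}(p,q)|+|B_{T,\mu}(0,q)|\le \tfrac{4}{C_1(T,\mu)}(1+q^2)^{-1}$. For $|p|\le 1$ the integral formula above reduces the task to bounding $\partial_p B_{T,\mu}(s,q)$ uniformly over $|s|\le 1$: on the compact set $\{|s|\le 1,\,|q|\le Q_0\}$ this is bounded by continuity, and for $|q|>Q_0$, with $Q_0$ chosen large enough that $(p^2+q^2)/2-2\mu\ge q^2/4$ and $(p\pm q)^2/4-\mu\ge q^2/32$ whenever $|p|\le 1$, I would differentiate the explicit quotient: the denominator is $\ge q^2/4$ with $p$-derivative of modulus $\le 1$, the numerator has modulus $\le 2$, and its $p$-derivative is a sum of terms $\tanh'\big(\tfrac{(p\pm q)^2/4-\mu}{2T}\big)\tfrac{p\pm q}{4T}$ in which $\tanh'$ is evaluated at a point at least of order $q^2/T$ and hence decays exponentially in $q$. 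The quotient rule then bounds $|\partial_p B_{T,\mu}(s,q)|$ by $C(T,\mu)(1+q^2)^{-1}$ for $|q|>Q_0$ as well. Combining the two ranges of $q$ gives $\sup_{|s|\le 1}|\partial_p B_{T,\mu}(s,q)|\le C(T,\mu)(1+q^2)^{-1}$, so $g(q):=C(1+q^2)^{-1}$ (with $C$ the largest of the constants above) dominates $f$ and lies in $L^1(\BR)\cap L^\infty(\BR)$.

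The routine parts are the quotient-rule bookkeeping and the choice of $Q_0$; the only point that needs real care is the locus $p^2+q^2=4\mu$, where the denominator of $B_{T,\mu}$ vanishes and the quotient-rule estimate breaks down. The remedy is that, once $|p|\le 1$, this locus is compact and is therefore absorbed into the continuity bound, while the regime $|q|>Q_0$ stays uniformly away from it. A secondary subtlety is that the dominating function must decay strictly faster than $|q|^{-1}$, which is exactly why one must exploit the exponential decay of $\tanh'$ in the $p$-derivative of the numerator rather than the crude linear bound $|p\pm q|\le 1+|q|$.
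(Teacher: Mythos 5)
Your proof is correct but proceeds along a genuinely different route from the paper's. The paper combines the two terms $B_{T,\mu}(p,q)$ and $B_{T,\mu}(0,q)$ inside the Mittag--Leffler series \eqref{L_series} to obtain an explicit closed-form series for $f(p,q)$ itself, with numerator $8\mu p - p^3 + 2pq^2 - 16iqw_n$ and a four-factor denominator; part~(i) is then read off from the series (the $n\leftrightarrow -n-1$ symmetry kills the $iqw_n$ term at $p=0$), and part~(ii) comes from a uniform-in-$n$ bound on the numerator-over-two-factors, leaving a $q$-dependent series of the form $\sum_n ((q^2/4-\mu)^2 + w_n^2)^{-1}$ that is compared to an integral. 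You instead exploit the evenness $B_{T,\mu}(-p,q)=B_{T,\mu}(p,q)$ plus the fundamental theorem of calculus to reduce both parts to control of $\partial_p B_{T,\mu}$, which you estimate from the explicit $\tanh$ quotient rather than from the series, splitting on $|p|\lessgtr 1$ and $|q|\lessgtr Q_0$. Your approach has the advantage of using the series only for qualitative smoothness, not for quantitative bounds, and the evenness observation makes part~(i) essentially automatic; it also cleanly isolates the one genuine subtlety, the need for the exponential decay of $\tanh'$ rather than a linear bound on $|p\pm q|$. The paper's approach is more uniform in that it never has to argue by compactness near the zero set $p^2+q^2=4\mu$ of the denominator, since the series regularizes the denominator term by term, but the cancellation at $p=0$ is less transparent there. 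Both arguments are sound; yours is a bit more elementary in the estimates, at the price of a case analysis.
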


By dominated convergence the first term on the right hand side of  \eqref{ex_hh} vanishes and thus $ \lim_{\epsilon\to 0} \langle h_\epsilon \vert  A_{T,\mu}^{\BR_+}-a_{T,\mu} \BI \vert h_\epsilon \rangle =-\frac{1}{4} B_{T,\mu}(0,0).$
Combining this with (\ref{max_poly}) and (\ref{l1part}) yields
\begin{equation}\label{varprincprop}
\max_\lambda \lim_{\epsilon\to 0} \langle \psi_\epsilon^\lambda\vert   A_{T,\mu}^{\BR_+}-a_{T,\mu} \BI \vert \psi_\epsilon^\lambda\rangle = -\frac{1}{4} B_{T,\mu}(0,0) -\frac{1}{16\pi}\frac{\left(\int_\BR B_{T,\mu}(0,q)\widehat g(q)\dd q\right)^2}{\langle g \vert A_{T,\mu}^{\BR_+}-a_{T,\mu}\vert g\rangle }
\end{equation}
For $T \to 0$ the term $ B_{T,\mu}(0,0)$ is bounded while the second summand diverges logarithmically, which is content of the following Lemma.
\begin{lemma}\label{asymptotic_gltag}
Let $\widehat g(p)=e^{-\frac{(\vert p\vert-2\sqrt{\mu})^2}{b}}$ for some $b>0$.
Then, 
\begin{enumerate}[(i)]
\item{$\frac{4}{\sqrt{\mu}}e^{-\frac{4\mu}{b}}<\lim_{T\to 0}\left(\ln \frac{\mu}{T}\right)^{-1}\int_\BR B_{T,\mu}(0,q)\widehat g(q) \dd q<\frac{4}{\sqrt{\mu}}$,} \label{fg}
\item{$0\geq \lim_{T\to 0}\left(\ln \frac{\mu}{T}\right)^{-1}\langle g \vert A_{T,\mu}^{\BR_+}-a_{T,\mu} \vert g\rangle>-\infty $.} \label{glag}
\end{enumerate}
\end{lemma}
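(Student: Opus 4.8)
The plan is to reduce everything to the known behaviour of the bulk quantity $a_{T,\mu}$, which by \eqref{T_c^0_explicit} and $a_{T,\mu}=\frac{1}{4\pi}\int_\BR B_{T,\mu}(0,q)\,\dd q$ diverges logarithmically, $a_{T,\mu}\sim c\ln(\mu/T)$ as $T\to 0$, with the logarithm coming entirely from the region where $q^2\approx\mu$, i.e.\ $B_{T,\mu}(0,q)=L_{T,\mu}(q/2,-q/2)=\tanh\!\big(\frac{q^2/4-\mu}{2T}\big)\big/\big(q^2/4-\mu\big)$ has an integrable singularity smeared to scale $T$ near $|q|=2\sqrt\mu$. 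For part~\eqref{fg}, I would insert $\widehat g(q)=e^{-(q-2\sqrt\mu)^2/b}$ into $\int_\BR B_{T,\mu}(0,q)\widehat g(q)\,\dd q$. Away from $|q|=2\sqrt\mu$ the integrand is bounded uniformly in $T$, so it contributes $O(1)$ and drops out after dividing by $\ln(\mu/T)$. Near $q=2\sqrt\mu$ one writes $\widehat g(q)=\widehat g(2\sqrt\mu)+(\widehat g(q)-\widehat g(2\sqrt\mu))$; since $\widehat g(2\sqrt\mu)=1$ and $\widehat g$ is Lipschitz, the correction term kills the singularity (the factor $q-2\sqrt\mu$ cancels the pole) and again contributes $O(1)$. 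What remains is $\widehat g(2\sqrt\mu)\int_{\text{near }2\sqrt\mu}B_{T,\mu}(0,q)\,\dd q$, whose leading log is, by the same computation that gives the asymptotics of $a_{T,\mu}$, equal to $\frac{2}{\sqrt\mu}\ln(\mu/T)(1+o(1))$ from that neighbourhood (and the same from $-2\sqrt\mu$ by symmetry), giving the limit $\frac{4}{\sqrt\mu}$. To get the stated strict two-sided bound rather than an exact limit — presumably all that is needed downstream — one can simply use $e^{-4\mu/b}=\widehat g(0)\le\widehat g(q)\le 1$ on $(0,\infty)$ together with positivity of $B_{T,\mu}(0,q)$ and the known asymptotics $\frac{1}{4\pi}\int_\BR B_{T,\mu}(0,q)\,\dd q=a_{T,\mu}\sim\frac{1}{\sqrt\mu\,\pi}\ln(\mu/T)$... wait, I should just bound $\widehat g$ crudely and compare with $\int B_{T,\mu}(0,q)\,\dd q$ directly; the constant $4/\sqrt\mu$ is then $4\pi a_{T,\mu}/(\pi\ln(\mu/T))\to 4/\sqrt\mu$, and the strict inequalities follow because $\widehat g$ is not constant.

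For part~\eqref{glag}, the upper bound $\le 0$ is immediate: $a_{T,\mu}=\sup\sigma(A_{T,\mu}^\BR)=\sup\sigma_{\mathrm{ess}}(A_{T,\mu}^{\BR_+})\ge$ anything in the spectrum, but more simply $A^{\BR_+}_{T,\mu}-a_{T,\mu}\BI$ has non-positive... actually no, it may have a positive eigenvalue — that is the whole point of the paper. So instead I use the explicit quadratic form: by Lemma~\ref{A_T_momentum}, $\langle g|A^{\BR_+}_{T,\mu}-a_{T,\mu}\BI|g\rangle=\int_\BR|\widehat g(p)|^2(A_{T,\mu}(p)-a_{T,\mu})\,\dd p-\frac{1}{4\pi}\int\int\overline{\widehat g(p)}B_{T,\mu}(p,q)\widehat g(q)\,\dd p\,\dd q$. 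The first integral is $\le 0$ since $A_{T,\mu}(p)\le a_{T,\mu}$ by Lemma~\ref{intBdq_max}. For the double integral, $B_{T,\mu}$ is a positive kernel (by the series representation \eqref{L_series}, $B_{T,\mu}(p,q)=2T\sum_n (\text{something})(p)\overline{(\text{something})(q)}$ after the change of variables in \eqref{def:B}, so it is positive semidefinite), hence $\frac{1}{4\pi}\langle\widehat g|B_{T,\mu}|\widehat g\rangle\ge 0$, giving $\langle g|A^{\BR_+}_{T,\mu}-a_{T,\mu}\BI|g\rangle\le 0$. For the lower bound $>-\infty$ after dividing by $\ln(\mu/T)$: the first term is bounded below by $-\int|\widehat g(p)|^2 a_{T,\mu}\,\dd p=-\|g\|_2^2\,a_{T,\mu}$, wait that is not good enough since $\|g\|_2^2 a_{T,\mu}$ itself diverges like $\ln(\mu/T)$ — but that is fine, it gives a finite limit after division. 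More carefully: $A_{T,\mu}(p)-a_{T,\mu}\ge -a_{T,\mu}$ trivially, so the first term is $\ge -\|\widehat g\|_2^2\,a_{T,\mu}$, which divided by $\ln(\mu/T)$ stays bounded below. The double-integral term needs an upper bound: $\frac{1}{4\pi}\int\int\overline{\widehat g(p)}B_{T,\mu}(p,q)\widehat g(q)\le\frac{1}{4\pi}\|\widehat g\|_\infty^2\int\int_{\text{supp}}B_{T,\mu}(p,q)\,\dd p\,\dd q$, and since $\widehat g$ is a Gaussian, integrating $B_{T,\mu}(p,q)$ against it in both variables produces at most a single logarithm (the singularity set $\{p^2+q^2=\ldots\}$... precisely, with $p=\frac{p'+q'}{2},q=\frac{p'-q'}{2}$, $B$ is singular where $p'q'\approx$ const, and against a Gaussian this is $O(\ln(\mu/T))$ — in fact it is controlled by $a_{T,\mu}$ up to constants). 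Dividing by $\ln(\mu/T)$ then keeps everything bounded.

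The main obstacle is making the near-singularity asymptotics in part~\eqref{fg} precise enough to extract the exact constant $4/\sqrt\mu$ as a limit, or — if one only wants the strict inequalities stated — verifying that the crude bound $\widehat g(0)\le\widehat g\le 1$ combined with the (independently needed) asymptotics of $a_{T,\mu}$ gives strict inequality, which requires knowing that $B_{T,\mu}(0,q)$ has positive mass both inside and outside any neighbourhood of $\pm2\sqrt\mu$ so that neither extreme value of $\widehat g$ is attained in the limit. A secondary technical point is the positive-definiteness of the kernel $B_{T,\mu}$ needed for the $\le 0$ bound in part~\eqref{glag}; this should follow cleanly from \eqref{L_series} after tracking the change of variables \eqref{def:B}, writing $L_{T,\mu}((p+q)/2,(p-q)/2)=2T\sum_n \phi_n(p)\overline{\phi_n(q)}$ with $\phi_n(p)=((p+q)^2/4-\mu-iw_n)^{-1}$ evaluated appropriately — one must check the two factors are complex conjugates under $n\leftrightarrow -n-1$, which they are since $w_{-n-1}=-w_n$.
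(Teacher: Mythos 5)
Your overall strategy for part~(i)---isolating the neighbourhood of $|q|=2\sqrt\mu$ where $B_{T,\mu}(0,q)=F_{T,\mu}(q/2)$ produces the logarithm, and treating the rest as $O(1)$---is the same as the paper's. However, both concrete variants you sketch break down on the same point: $\widehat g(q)=e^{-(q-2\sqrt\mu)^2/b}$ is centred at $+2\sqrt\mu$ and decays to zero at infinity. In your ``precise'' version, the contribution from $q\approx -2\sqrt\mu$ carries the weight $\widehat g(-2\sqrt\mu)=e^{-16\mu/b}$, not $1$, so ``the same from $-2\sqrt\mu$ by symmetry'' is false; the exact leading coefficient your argument would actually give is $\frac{2}{\sqrt\mu}\bigl(1+e^{-16\mu/b}\bigr)$, not $\frac{4}{\sqrt\mu}$. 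In your ``crude'' version, the claimed pointwise bound $\widehat g(0)\le\widehat g(q)$ on $(0,\infty)$ is simply false for $q>4\sqrt\mu$, and $\widehat g(q)\to0$ at infinity, so the advertised lower bound on the integral over $\BR$ does not follow. The paper avoids this by truncating to a compact interval first: since $B_{T,\mu}(0,q)=F_{T,\mu}(q/2)\le (q^2/4-\mu)^{-1}$ for $|q|>2\sqrt{2\mu}$ and $\widehat g\le1$, the tail is $O(1)$ uniformly in $T$; on the compact interval $[-2\sqrt{2\mu},2\sqrt{2\mu}]$ one can sandwich $\widehat g$ between its (positive) minimum there and $1$, and then apply $\int_{-2\sqrt{2\mu}}^{2\sqrt{2\mu}}B_{T,\mu}(0,q)\,\dd q=2\int_{-\sqrt{2\mu}}^{\sqrt{2\mu}}F_{T,\mu}(p)\,\dd p=\frac{4}{\sqrt\mu}\ln\frac{\mu}{T}+O(1)$ from Lemma~\ref{asymptotic_int_F}. (As an aside, that minimum is $\widehat g(-2\sqrt{2\mu})=e^{-4\mu(1+\sqrt2)^2/b}$, not $e^{-4\mu/b}$; the constant in the statement appears to be a slip in the paper as well, though only the positivity of the limit is used downstream.)

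For part~(ii), the upper bound $\le0$ is correct but your route is an unnecessary detour: since $B_{T,\mu}(p,q)\ge0$ pointwise and $\widehat g\ge0$, the double integral is nonnegative with no positive-definiteness needed, and this is exactly how the paper argues. (Your sketch of positive definiteness from \eqref{L_series} does not in fact go through as written---each summand involves functions of $p+q$ and of $p-q$ jointly, not a product of a function of $p$ with a function of $q$---but this is moot.) The first term being $\le0$ by Lemma~\ref{intBdq_max} is fine. For the bound $>-\infty$, your estimate of the diagonal term by $-\|\widehat g\|_2^2\,a_{T,\mu}$ is fine, but your control of the double integral is left as a heuristic (``produces at most a single logarithm''). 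The paper makes this precise via Lemma~\ref{intBdq_max} together with $0<\widehat g\le1$: $\int\!\!\int\widehat g(p)B_{T,\mu}(p,q)\widehat g(q)\,\dd p\,\dd q\le\int_\BR\widehat g(p)\,\dd p\cdot\int_\BR B_{T,\mu}(0,q)\,\dd q$, where the first factor is a finite $T$-independent constant and the second equals $4\pi a_{T,\mu}=O(\ln\frac{\mu}{T})$ by Lemma~\ref{asymptotic_int_F}.
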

Therefore, the last term in \eqref{varprincprop}  dominates for small $T$ and makes the right hand side positive.  This completes the proof of Prop.~\ref{prop:deltainteraction}. 
\end{proof}

\begin{rem}
For Neumann boundary conditions, one obtains $ \lim_{\epsilon\to 0} \langle h_\epsilon \vert  A_{T,\mu}^{\BR_+}-a_{T,\mu} \BI \vert h_\epsilon \rangle =\frac{1}{4} L_{T,\mu}(0,0)>0$.
Hence, the trial state $h_\epsilon$ suffices to prove $\sup \sigma(A_{T,\mu}^{\BR_+})>a_{T,\mu}$ for all $T>0$. 
\end{rem}

\begin{proof}[Proof of Lemma~\ref{lea:ex_dom_conv}]
Using \eqref{L_series} one obtains the series representation 
\begin{equation}
f(p,q)=\frac{T}{8}\sum_{n\in \BZ} \frac{8\mu p -p^3+2pq^2-16 i q w_n}{\left(\left(\frac{p+q}{2}\right)^2-\mu-iw_n\right)\left(\left(\frac{p-q}{2}\right)^2-\mu+iw_n\right)\left(\left(\frac{q}{2}\right)^2-\mu-iw_n\right)\left(\left(\frac{q}{2}\right)^2-\mu+iw_n\right)}
\end{equation}
where $w_n=(2n+1)\pi T$.
From this, claim~\eqref{lea_ex_dom_conv_1} is easy to see.
For part~\eqref{lea_ex_dom_conv_2}, note that by Lemma~\ref{lea:L_laplace}, $\vert f(p,q)\vert<\frac{C}{1+q^2}=:g_1(q)$ for $\vert p \vert>\sqrt{\mu}$. 
For $\vert p \vert<\sqrt{\mu}$,
\begin{multline}
\sup_{(p,q)\in \BR^2,\vert p \vert<\sqrt{\mu}} \frac{\vert 8\mu p -p^3+2pq^2 \vert }{\left \vert \left(\left(\frac{p+q}{2}\right)^2-\mu-iw_n\right)\left(\left(\frac{p-q}{2}\right)^2-\mu+iw_n\right)\right \vert}\\
\leq \sup_{(p,q)\in \BR^2,\vert p \vert<\sqrt{\mu}} \frac{ 8\mu \vert p\vert +\vert p\vert ^3+2\vert p\vert q^2 }{\sqrt{\left[\left(\frac{p+q}{2}\right)^2-\mu\right]^2+w_0^2}\sqrt{\left[\left(\frac{p-q}{2}\right)^2-\mu\right]^2+w_0^2}}=:c_1 <\infty
\end{multline}
and 
\begin{multline}
\sup_{(p,q)\in \BR^2} \frac{16 \vert q  w_n\vert }{\left \vert \left(\left(\frac{p+q}{2}\right)^2-\mu-iw_n\right)\left(\left(\frac{p-q}{2}\right)^2-\mu+iw_n\right)\right \vert}\\
\leq \sup_{(p,q)\in \BR^2} \frac{ 16 \vert q \vert }{\sqrt{\left[\left(\frac{\vert p \vert +\vert q\vert }{2}\right)^2-\mu\right]^2+w_0^2}}=:c_2 <\infty
\end{multline}
With these estimates, one obtains for $\vert p \vert<\sqrt{\mu}$
\begin{equation}
\vert f(p,q) \vert \leq  \frac{T(c_1+c_2)}{8} \sum_{n\in \BZ} \frac{1}{\left(\frac{q^2}{4}-\mu \right)^2+w_n^2}
\end{equation}
Using that the summands are decreasing in $n$, we can estimate the sum by an integral
\begin{multline}
\vert f(p,q) \vert \leq \frac{T(c_1+c_2)}{4} \left[ \frac{1}{\left(\frac{q^2}{4}-\mu \right)^2+w_0^2} +\int_{1/2}^\infty  \frac{1}{\left(\frac{q^2}{4}- \mu\right)^2+4\pi^2T^2 x^2} \dd x \right]\\
=\frac{T(c_1+c_2)}{4} \left[ \frac{1}{\left(\frac{q^2}{4}-\mu \right)^2+w_0^2} +\frac{\arctan \left(\frac{\vert \frac{q^2}{4}- \mu\vert}{\pi T}\right)}{2\pi T \vert \frac{q^2}{4}- \mu\vert}\right]=:g_2(q)
\end{multline}
Clearly, $g=\max\{g_1,g_2\}\in L^1(\BR)\cap L^\infty(\BR)$.
\end{proof}

The logarithmic divergence in Lemma~\ref{asymptotic_gltag} originates from the following asymptotics proved in Section~\ref{sec:pfex}.
\begin{lemma}\label{asymptotic_int_F}
Let $\mu>0$. As $T\rightarrow 0$ 
\begin{equation}\label{int_F_asy}
\int_\BR F_{T,\mu}(p) \dd p= \frac{2}{\sqrt{\mu}} \left(\ln \frac{\mu}{T}+\gamma+\ln \frac{8}{\pi}\right)+o(1)= \int_{-\sqrt{2\mu}}^{\sqrt{2\mu}} F_{T,\mu}(p) \dd p+O(1),
\end{equation}
where $\gamma$ denotes the Euler--Mascheroni constant.
\end{lemma}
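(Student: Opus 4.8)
The plan is to turn the two-sided integral into a one-dimensional integral in the variable $s=p^2-\mu$ and to separate the logarithmically divergent piece from convergent remainders. Since $F_{T,\mu}$ is even, the substitution $p=\sqrt{s+\mu}$ in $\int_\BR F_{T,\mu}(p)\,\dd p=2\int_0^\infty F_{T,\mu}(p)\,\dd p$ gives
\begin{equation*}
\int_\BR F_{T,\mu}(p)\,\dd p=\int_{-\mu}^\infty \frac{\tanh(s/2T)}{s\sqrt{s+\mu}}\,\dd s,
\end{equation*}
and the region $|p|\ge\sqrt{2\mu}$ corresponds exactly to $s\ge\mu$. On $(\mu,\infty)$ the integrand is bounded by $(s\sqrt{s+\mu})^{-1}\in L^1$, uniformly in $T$, and $\tanh(s/2T)\to1$ pointwise, so dominated convergence gives $\int_\mu^\infty\tanh(s/2T)(s\sqrt{s+\mu})^{-1}\,\dd s=\int_\mu^\infty(s\sqrt{s+\mu})^{-1}\,\dd s+o(1)=O(1)$. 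Since this last integral is precisely $\int_\BR F_{T,\mu}(p)\,\dd p-\int_{-\sqrt{2\mu}}^{\sqrt{2\mu}}F_{T,\mu}(p)\,\dd p$, the second equality of the lemma follows at once.

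For $|s|\le\mu$ I would write $(s+\mu)^{-1/2}=\mu^{-1/2}+\big((s+\mu)^{-1/2}-\mu^{-1/2}\big)$. The correction term gives a convergent integral: using $|\tanh(s/2T)|\le1$, the integrand of $\int_{-\mu}^\mu\frac{\tanh(s/2T)}{s}\big((s+\mu)^{-1/2}-\mu^{-1/2}\big)\,\dd s$ is dominated by $|s|^{-1}\big|(s+\mu)^{-1/2}-\mu^{-1/2}\big|$, which lies in $L^1((-\mu,\mu))$ — it is bounded near $s=0$ because $(s+\mu)^{-1/2}-\mu^{-1/2}=O(|s|)$ there, and is $O((s+\mu)^{-1/2})$ near $s=-\mu$ — so dominated convergence yields $\int_{-\mu}^\mu|s|^{-1}\big((s+\mu)^{-1/2}-\mu^{-1/2}\big)\,\dd s+o(1)$, a finite constant. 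The remaining main term, with its integrand even and after the substitution $u=s/2T$, is
\begin{equation*}
\frac1{\sqrt\mu}\int_{-\mu}^\mu\frac{\tanh(s/2T)}{s}\,\dd s=\frac2{\sqrt\mu}\int_0^{\mu/2T}\frac{\tanh u}{u}\,\dd u=\frac2{\sqrt\mu}\Big(\ln\frac{\mu}{2T}+\gamma+\ln\frac4\pi\Big)+o(1),
\end{equation*}
the last step being the classical asymptotics $\int_0^X u^{-1}\tanh u\,\dd u=\ln X+\gamma+\ln(4/\pi)+o(1)$, which can be derived from the Mittag--Leffler expansion of $\tanh$ (and is standard in the BCS literature).

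It then remains only to assemble the three constants. Rescaling $s=\mu t$ turns the two leftover constants (times $\sqrt\mu$) into the elementary integrals
\begin{equation*}
\int_0^1\frac1t\Big(\frac1{\sqrt{1-t}}+\frac1{\sqrt{1+t}}-2\Big)\,\dd t+\int_1^\infty\frac{\dd t}{t\sqrt{t+1}},
\end{equation*}
each evaluable in closed form via $w=\sqrt{1\mp t}$ and partial fractions; their values are $4\ln2-2\ln(1+\sqrt2)$ and $2\ln(1+\sqrt2)$, summing to $4\ln2$. Combining with the main term and using $\ln(\mu/2T)=\ln(\mu/T)-\ln2$, the constant part is $\frac2{\sqrt\mu}\big(-\ln2+\gamma+\ln\frac4\pi\big)+\frac{4\ln2}{\sqrt\mu}=\frac2{\sqrt\mu}\big(\gamma+\ln\frac8\pi\big)$, which together with the divergent part $\frac2{\sqrt\mu}\ln\frac\mu T$ is exactly the claimed expansion. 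I expect the only genuine work to be (a) a self-contained proof (or a precise reference) for $\int_0^X u^{-1}\tanh u\,\dd u=\ln X+\gamma+\ln(4/\pi)+o(1)$, and (b) the slightly tedious evaluation of the elementary constants; the dominated-convergence steps are routine once one notices the $O(|s|)$ vanishing at $s=0$ that controls the otherwise large factor $\tanh(s/2T)/s$ there.
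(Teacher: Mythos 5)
Your argument is correct and, up to bookkeeping, is the same as the paper's: you make the substitution $s=p^2-\mu$, isolate the logarithmically divergent piece $\mu^{-1/2}\int_{-\mu}^\mu \tanh(s/2T)/s\,\dd s$, invoke the classical asymptotics $\int_0^X u^{-1}\tanh u\,\dd u=\ln X+\gamma+\ln(4/\pi)+o(1)$ (the paper cites this in the equivalent form $\int_0^\mu t^{-1}\tanh(t/2T)\,\dd t=\ln(\mu/T)+\gamma-\ln(\pi/2)+o(1)$ from Hainzl--Seiringer), and evaluate the same three elementary remainder constants by monotone/dominated convergence. The paper uses two one-sided substitutions ($t=p^2-\mu$ and $t=\mu-p^2$) and monotone convergence, while you use a single signed variable $s$ and dominated convergence, but this is only a cosmetic difference.
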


\begin{proof}[Proof of Lemma~\ref{asymptotic_gltag}]
Part (\ref{fg}). 
On the interval $[-2\sqrt{2\mu},2\sqrt{2\mu}]$ the minimum of $\widehat g$ is $e^{-\frac{4\mu}{b}}$.
We estimate 
\begin{align}\nonumber 
\int_{-2\sqrt{2\mu}}^{2\sqrt{2\mu}} B_{T,\mu}(0,p) e^{-\frac{4\mu}{b}}\dd p & \leq \int_\BR  B_{T,\mu}(0,p) \widehat g(p) \dd p
\\ &\leq \int_{-2\sqrt{2\mu}}^{2\sqrt{2\mu}}   B_{T,\mu}(0,p) \dd p+\int_\BR \chi_{\vert p \vert>2\sqrt{2\mu}} \frac{e^{-\frac{(\vert p\vert -2\sqrt{\mu})^2}{b}}}{(p/2)^2-\mu}\dd p, \label{bound_fg}
\end{align}
where we used $\widehat g(k)\leq 1$ and $\tanh(x)\leq 1$.
The last summand is some constant independent of $T$.
Using that $B_{T,\mu}(0,p)= F_{T,\mu}(p/2)$ and Lemma~\ref{asymptotic_int_F} the asymptotic behavior for $T\to 0$ is
\begin{equation}
\int_{-2\sqrt{2\mu}}^{2\sqrt{2\mu}} B_{T,\mu}(0,p) \dd p=\int_{-2\sqrt{2\mu}}^{2\sqrt{2\mu}} F_{T,\mu}(p/2) \dd p=2 \int_{-\sqrt{2\mu}}^{\sqrt{2\mu}}  F_{T,\mu}(p) \dd p = \frac{4}{\sqrt{\mu}} \ln \frac{\mu}{T}+O(1)
\end{equation}
and the claim follows.

Part (\ref{glag}).
Recall that
\begin{equation}
\langle g \vert A_{T,\mu}^{\BR_+}-a_{T,\mu}  \vert g\rangle=\int_\BR \dd p\,  \widehat g (p)^2\left(A_{T,\mu}(p)- a_{T,\mu}\right)
-\int_\BR \dd p\,\widehat g(p)\int_\BR \dd q\  \frac{1}{4\pi} B_{T,\mu}(p,q)\widehat g(q).
\end{equation}
By Lemma~\ref{intBdq_max}, the first summand is negative and thus also $\langle g \vert A_{T,\mu}^{\BR_+}-a_{T,\mu}  \vert g\rangle<0$.
Moreover, using Lemma~\ref{intBdq_max} and $0<\widehat g(p)\leq 1$ we have
\begin{equation}\label{bound_gg}
\vert \langle g \vert A_{T,\mu}^{\BR_+}-a_{T,\mu}  \vert g\rangle \vert \leq \int_\BR \dd p\, \widehat g (p)^2 a_{T,\mu} 
+\int_\BR \dd p\, \widehat g(p) \int_\BR \dd q\  \frac{1}{4\pi} B_{T,\mu}(0,q).
\end{equation}
In both terms, the integral over $p$ gives a finite constant independent of $T$.
The claim follows from the asymptotics in Lemma~\ref{asymptotic_int_F}.
\end{proof}

\section{Weak Coupling Limit}\label{sec:weak}
In \cite{samoilenka_boundary_2020} it was observed by numerical and non-rigorous analytical computations that the effect of boundary superconductivity disappears in the weak coupling limit, in the sense that  $\frac{T_{c}^{\BR_+}(v)-T_{c}^\BR(v)}{T_{c}^\BR(v)} \to 0$ for $v\to 0$.
In this section we shall verify this claim. 

Recall that the bulk critical temperature $T_c^\BR(v)$ is the unique $T>0$ such that $a_{T,\mu}=\frac{1}{v}$.
For the system on the half-line, we have by continuity of $\inf \sigma(H_T^{\BR^+})$ in $T$
\begin{equation}\label{Tc_halfline}
T_c^{\BR_+}(v)=\min\{T \in [0,\infty) \vert \sup\sigma(A_{T,\mu}^{\BR_+})=v^{-1}\}.
\end{equation}
We want to invert this function and view $v$ as function of $T_c^{\BR_+}$.
We define $\mathfrak{v}(T):=(\sup \sigma(A_{T,\mu}^{\BR_+}))^{-1}$.
Note that $\mathfrak{v}\circ T_c^{\BR_+} =\mathrm{id}$ and for all  $T>0$ we have $T_c^{\BR_+}(\mathfrak{v}(T))\leq T$.
 
The claim can be reformulated in terms of the operator $A_{T,\mu}^{\BR_+}$ and $a_{T,\mu}$ in the following way.

\begin{lemma}\label{weak_lim_reformulation}
$ \lim_{v\to 0}\frac{T_{c}^{\BR_+}(v)-T_{c}^\BR(v)}{T_{c}^\BR(v)}  = 0   \Leftrightarrow \lim_{T\to 0} \inf \sigma(a_{T,\mu}\BI-A_{T,\mu}^{\BR_+} )=0$.
\end{lemma}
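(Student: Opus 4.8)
The plan is to reduce the stated equivalence to a single asymptotic identity relating the relative shift in critical temperature to the quantity
\[
D(T):=\sup\sigma\big(A_{T,\mu}^{\BR_+}\big)-a_{T,\mu}\ge 0,
\]
where positivity holds because $\sigma_{\mathrm{ess}}(A_{T,\mu}^{\BR_+})=\sigma(A_{T,\mu}^{\BR})$ has supremum $a_{T,\mu}$. Since $\inf\sigma\big(a_{T,\mu}\BI-A_{T,\mu}^{\BR_+}\big)=-D(T)$, the right-hand side of the asserted equivalence is exactly $\lim_{T\to 0}D(T)=0$, so it suffices to connect $D(T_c^{\BR_+}(v))$ with $\frac{T_c^{\BR_+}(v)-T_c^{\BR}(v)}{T_c^{\BR}(v)}$.

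\emph{Step 1: both critical temperatures vanish as $v\to 0$.} Because $a_{T,\mu}$ is strictly decreasing with $a_{T,\mu}\to\infty$ as $T\to 0$ (Lemma~\ref{asymptotic_int_F}), the relation $a_{T_c^{\BR}(v),\mu}=v^{-1}$ forces $T_c^{\BR}(v)\to 0$. For the half-line I would first record that $\sup\sigma(A_{T,\mu}^{\BR_+})\le a_{T,\mu}+\tfrac{1}{4\pi}\lVert B_{T,\mu}\rVert_{\mathrm{HS}}$, which follows from the momentum representation in Lemma~\ref{A_T_momentum} (together with Lemma~\ref{intBdq_max} and $A_{T,\mu}^{\BR_+}\ge 0$); and that, by the second bound in Lemma~\ref{lea:L_laplace}, $B_{T,\mu}(p,q)\le C_3(T_0,\mu)^{-1}\big(T_0+\tfrac{1}{2}(p^2+q^2)\big)^{-1}$ for $T\ge T_0$, so $\lVert B_{T,\mu}\rVert_{\mathrm{HS}}$ is bounded uniformly on $[T_0,\infty)$. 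Hence $\sup\sigma(A_{T,\mu}^{\BR_+})$ is bounded on $[T_0,\infty)$ for every $T_0>0$, and since it equals $v^{-1}\to\infty$ at $T=T_c^{\BR_+}(v)$, we conclude $T_c^{\BR_+}(v)\to 0$ as $v\to 0$.

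\emph{Step 2: the asymptotic identity.} Using $a_{T,\mu}=\tfrac{1}{2\pi}\int_\BR F_{T,\mu}(p)\,\dd p$ and Lemma~\ref{asymptotic_int_F}, there are a constant $\kappa_\mu$ and a remainder $r(T)\to 0$ (as $T\to 0$) with $a_{T,\mu}=\tfrac{1}{\pi\sqrt\mu}\ln\tfrac{1}{T}+\kappa_\mu+r(T)$. Fix $v$ small, put $T_0:=T_c^{\BR}(v)$ and $T_1:=T_c^{\BR_+}(v)$; both tend to $0$ by Step~1, and by the defining relations $a_{T_0,\mu}=v^{-1}=\sup\sigma(A_{T_1,\mu}^{\BR_+})=a_{T_1,\mu}+D(T_1)$. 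Substituting the expansion, the $\kappa_\mu$ cancel and $\tfrac{1}{\pi\sqrt\mu}\ln\tfrac{T_1}{T_0}=D(T_1)+r(T_1)-r(T_0)=D(T_1)+o(1)$ as $v\to 0$, so
\[
\frac{T_c^{\BR_+}(v)-T_c^{\BR}(v)}{T_c^{\BR}(v)}=\frac{T_1}{T_0}-1=\exp\!\big(\pi\sqrt\mu\,(D(T_1)+o(1))\big)-1 .
\]
As $D(T_1)\ge 0$, the left-hand side tends to $0$ precisely when $D(T_c^{\BR_+}(v))\to 0$.

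\emph{Step 3 and the main obstacle.} The implication $\lim_{T\to 0}D(T)=0\Rightarrow\lim_{v\to 0}\frac{T_c^{\BR_+}(v)-T_c^{\BR}(v)}{T_c^{\BR}(v)}=0$ — the one actually needed for Theorem~\ref{main_result}\eqref{thm_weak} — is now immediate from Step~1 (so that $D(T_c^{\BR_+}(v))\to 0$) and Step~2. For the converse, Step~2 yields $D(T_c^{\BR_+}(v))\to 0$, and upgrading this to $\lim_{T\to 0}D(T)=0$ requires that every sufficiently small $T>0$ arise as $T_c^{\BR_+}(v)$; this holds once $T\mapsto\sup\sigma(A_{T,\mu}^{\BR_+})$ is continuous and non-increasing (so that $T_c^{\BR_+}$ is a genuine inverse of $\mathfrak{v}$), giving $D(T)=D\big(T_c^{\BR_+}(\mathfrak{v}(T))\big)\to 0$. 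Continuity follows from norm-continuity of $T\mapsto A_{T,\mu}^{\BR_+}$ via dominated convergence and Lemma~\ref{lea:L_laplace}. The delicate point — and where I expect the real work to lie, rather than in any of the asymptotic bookkeeping — is the monotonicity of $\sup\sigma(A_{T,\mu}^{\BR_+})$ in $T$: this does not follow from pointwise monotonicity of $L_{T,\mu}$ in $T$, which in fact fails, and so must be established by a separate argument. (One should also track the fact that the remainders $r(T_0)$ and $r(T_1)$ in Step~2 involve the two temperatures separately, which is exactly why Step~1 is needed for both.)
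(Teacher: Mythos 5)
Your Step 1 and Step 2 establish the implication $\lim_{T\to 0}\inf\sigma(a_{T,\mu}\BI-A_{T,\mu}^{\BR_+})=0 \Rightarrow \lim_{v\to 0}\frac{T_c^{\BR_+}(v)-T_c^\BR(v)}{T_c^\BR(v)}=0$ correctly, and this is the only direction invoked in the proof of Theorem~\ref{main_result}\eqref{thm_weak}. The ingredients are the same as the paper's (the log-asymptotics of $a_{T,\mu}$ from Lemma~\ref{asymptotic_int_F}, together with the identity $\sup\sigma(A_{T_1,\mu}^{\BR_+})=a_{T_0,\mu}$), but you parametrize by $v$ rather than by $T$. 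The paper instead introduces $\mathfrak{v}(T)=(\sup\sigma(A_{T,\mu}^{\BR_+}))^{-1}$, uses the chain $a_{T_c^\BR(\mathfrak{v}(T)),\mu}=\sup\sigma(A_{T,\mu}^{\BR_+})$, and then exploits the squeeze $0\le T_c^\BR(\mathfrak{v}(T))\le T_c^{\BR_+}(\mathfrak{v}(T))\le T$; this gives $T_c^\BR(\mathfrak{v}(T))\to 0$ for free, without ever needing $T_c^{\BR_+}(v)\to 0$ as a separate input. Your Step 1, which does establish $T_c^{\BR_+}(v)\to 0$ from the uniform Hilbert--Schmidt bound on $B_{T,\mu}$ for $T\ge T_0$, is a sound and in fact useful explicit estimate, and it is also what legitimizes the paper's claim that a sequence $T_n\to 0$ with $T_c^{\BR_+}(\mathfrak{v}(T_n))=T_n$ exists. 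So for the direction that matters, the two routes are close in spirit and yours is a bit more explicit about the underlying estimates.

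The gap you flag in the converse direction is real, but your proposed cure (monotonicity of $T\mapsto\sup\sigma(A_{T,\mu}^{\BR_+})$) is not the mechanism the paper uses. The paper sidesteps monotonicity: it observes that $\frac{T_c^{\BR_+}(\mathfrak{v}(T))-T_c^\BR(\mathfrak{v}(T))}{T_c^\BR(\mathfrak{v}(T))}\le \frac{T-T_c^\BR(\mathfrak{v}(T))}{T_c^\BR(\mathfrak{v}(T))}$ always, with equality along the sequence $T_n$ in the range of $T_c^{\BR_+}$, and then identifies the limit of the left-hand side with $\lim_{v\to 0}\frac{T_c^{\BR_+}(v)-T_c^\BR(v)}{T_c^\BR(v)}$ via the reparametrization $v=\mathfrak{v}(T)$. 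You are right that this inequality by itself only transfers smallness from $\frac{T-T_c^\BR(\mathfrak{v}(T))}{T_c^\BR(\mathfrak{v}(T))}$ to the other fraction, i.e.\ it cleanly gives the forward direction; for the converse the paper relies on the sequence argument, which, as you note, controls only the limit along the range of $T_c^{\BR_+}$. So your concern is legitimate, but you should not present monotonicity as the paper's missing step --- the paper deliberately works with the left inverse $\mathfrak{v}$ and the range of $T_c^{\BR_+}$ precisely to avoid that question. If you want to tighten the converse, the cleaner route is the paper's: show that for any sequence $T_k\to 0$ there is $T_k'=T_c^{\BR_+}(\mathfrak{v}(T_k))\le T_k$ with $T_k'\to 0$, $\mathfrak{v}(T_k')=\mathfrak{v}(T_k)$, and hence $T_c^\BR(\mathfrak{v}(T_k'))=T_c^\BR(\mathfrak{v}(T_k))$, and then argue from there, rather than trying to prove global monotonicity of the Birman--Schwinger supremum.
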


\begin{proof}
By definition, we have $\sup \sigma(A_{T,\mu}^{\BR_+})=\frac{1}{\mathfrak{v}(T)}=a_{T_c^\BR(\mathfrak{v}(T)),\mu}$.
Hence, 
\begin{equation}
\lim_{T\to 0} \inf \sigma(a_{T,\mu}\BI-A_{T,\mu}^{\BR_+} )=\lim_{T\to 0} (a_{T,\mu}-a_{T_c^\BR(\mathfrak{v}(T)),\mu})=\frac{1}{\pi \sqrt{\mu}} \lim_{T\to 0} \ln\left(\frac{T_c^\BR ( \mathfrak{v}(T))}{T}\right)
\end{equation}
where in the last equality we used  Lemma~\ref{asymptotic_int_F} and that $T\geq T_c^{\BR_+}(\mathfrak{v}(T))\geq T_c^\BR(\mathfrak{v}(T))\geq 0$ and thus $ \lim_{T\to 0} T_c^\BR(\mathfrak{v}(T))=0$.
Therefore, 
\begin{equation}
\lim_{T\to 0} \inf \sigma(a_{T,\mu}\BI-A_{T,\mu}^{\BR_+})=0 \Leftrightarrow \lim_{T\to 0} \frac{T-T_{c}^\BR(\mathfrak{v}(T))}{T_{c}^\BR(\mathfrak{v}(T))} = 0\,.
\end{equation} 
There exists a sequence $(T_n)$ such that $T_n \to 0$ as $n\to \infty$ and $T_c^{\BR_+}(\mathfrak{v}(T_n))=T_n$ for all $n$.
Therefore,
\begin{equation}
\lim_{T\to 0} \frac{T-T_{c}^\BR(\mathfrak{v}(T))}{T_{c}^\BR(\mathfrak{v}(T))} = \lim_{T\to 0} \frac{T_c^{\BR_+}(\mathfrak{v}(T))-T_{c}^\BR(\mathfrak{v}(T))}{T_{c}^\BR(\mathfrak{v}(T))} \,.
\end{equation}
Since $\lim_{T\to 0} T_c^\BR(\mathfrak{v}(T))=0$, also $\lim_{T\to 0} \mathfrak{v}(T)=0$.
Thus,
\begin{equation}
\lim_{T\to 0} \frac{T_c^{\BR_+}(\mathfrak{v}(T))-T_{c}^\BR(\mathfrak{v}(T))}{T_{c}^\BR(\mathfrak{v}(T))} =\lim_{v\to 0} \frac{T_c^{\BR_+}(v)-T_{c}^\BR(v)}{T_{c}^\BR(v)}
\end{equation}
and the claim follows.
\end{proof}

Recall the definition of $A_{T,\mu}$ in \eqref{intBdq}. With the notation 
\begin{equation}
E_{T,\mu}(p)=4 \pi \left(a_{T,\mu}-A_{T,\mu}(p)\right)
\end{equation}
we have for all $\psi \in L^2((0,\infty))$
\begin{equation}
4\pi (a_{T,\mu}\BI-A_{T,\mu}^{\BR_+}) \psi (p)= E_{T,\mu}(p) \psi(p)+\int_\BR B_{T,\mu}(p,q) \psi(q) \dd q.
\end{equation}

For the proof of Theorem~\ref{main_result} \eqref{thm_weak}, we need the following intermediate results which are proved in Section~\ref{sec:weaklea}.
\begin{lemma}\label{lea:supB}
Let $\mu>0$. Then
\begin{equation}
\sup_{T>0} \lVert B_{T,\mu}\rVert <\infty
\end{equation}
\end{lemma}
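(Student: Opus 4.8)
The plan is to show that $B_{T,\mu}$ is bounded as an integral operator by exhibiting, for each $T$, a uniform-in-$T$ bound on a suitable combination of norms of its kernel, and then controlling the large-$T$ and small-$T$ regimes separately. Recall $B_{T,\mu}(p,q)=L_{T,\mu}\big(\tfrac{p+q}{2},\tfrac{p-q}{2}\big)$, so by Lemma~\ref{lea:L_laplace} we have the pointwise bound
\begin{equation}
0<B_{T,\mu}(p,q)\leq \frac{C_2(T,\mu)}{1+\tfrac12(p^2+q^2)} \,,
\end{equation}
which already shows $B_{T,\mu}$ is Hilbert--Schmidt for each fixed $T$; the issue is purely the uniformity of the constant as $T$ ranges over $(0,\infty)$.

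First I would dispose of large $T$. For $T\geq T_0$ (any fixed $T_0>0$), the second bound in Lemma~\ref{lea:L_laplace} gives $B_{T,\mu}(p,q)\leq C_3(T_0,\mu)^{-1}\big(T+\tfrac12(p^2+q^2)\big)^{-1}$, and in fact one sees directly from $F_{T,\mu}(p)=\tanh(\tfrac{p^2-\mu}{2T})/(p^2-\mu)$ together with $B_{T,\mu}(0,q)=F_{T,\mu}(q/2)$ and $|\tanh x|\leq |x|$ that $B_{T,\mu}(p,q)\leq \tfrac{1}{2T}$ everywhere, so that for $T$ bounded away from $0$ the Schur test with weight $1$ is hopeless on its own but the decay $\sim(p^2+q^2)^{-1}$ persists with a $T$-independent constant once $T\geq T_0$ — more precisely $B_{T,\mu}(p,q)\le \mathrm{const}(T_0,\mu)\,(1+p^2+q^2)^{-1}$ uniformly for $T\geq T_0$, and the operator norm is bounded by the Hilbert--Schmidt norm, which is then finite and uniform. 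So $\sup_{T\geq T_0}\lVert B_{T,\mu}\rVert<\infty$.

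The substance is the regime $T\to 0$, and here I would use the Schur test rather than Hilbert--Schmidt, because the natural bound on $\int_\BR B_{T,\mu}(p,q)\,\dd q$ is $4\pi A_{T,\mu}(p)\leq 4\pi a_{T,\mu}$ by Lemma~\ref{intBdq_max}, and $a_{T,\mu}\sim \tfrac{1}{2\pi\sqrt\mu}\ln(\mu/T)$ by Lemma~\ref{asymptotic_int_F} — which \emph{diverges}. So a plain Schur test with constant weights fails logarithmically, and this is the main obstacle. The resolution is that the logarithmic mass of $\int B_{T,\mu}(p,\cdot)$ is concentrated near the two points $q=\pm 2\sqrt\mu$ where $(q/2)^2=\mu$ (equivalently, where one of the arguments of $L_{T,\mu}$ sits on the Fermi surface), and it is concentrated there for \emph{every} $p$ in essentially the same way; a Schur weight $w(p)$ that grows mildly near $p=\pm2\sqrt\mu$ — for instance $w(p)=1+\big|\ln\mathrm{dist}(p,\{\pm2\sqrt\mu\})\big|$ or a power $\mathrm{dist}(p,\{\pm2\sqrt\mu\})^{-\beta}$ with small $\beta>0$ — absorbs the divergence, because moving one Fermi-surface singularity against the other only produces an \emph{integrable} (in fact bounded, after the weight) enhancement. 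Concretely I would split $\BR_q$ into the region where $|(q/2)^2-\mu|\gtrsim \sqrt{T}$, on which $B_{T,\mu}(p,q)\leq C|(q/2)^2-\mu|^{-1}\wedge\,(\text{decay in }p,q)$ with $T$-independent $C$ and the $q$-integral against $w$ is $O(1)$ uniformly in $T$ and $p$, and the complementary region of measure $O(\sqrt T)$ where $B_{T,\mu}(p,q)\leq \tfrac{1}{2T}$ contributes $O(T^{-1/2})\cdot(\text{extra }w\text{-factor})$ — so instead I would not bound by $1/(2T)$ but keep the Matsubara series \eqref{L_series} and bound $B_{T,\mu}(p,q)\le 2T\sum_n |(\tfrac{p+q}{2})^2-\mu-iw_n|^{-1}|(\tfrac{p-q}{2})^2-\mu+iw_n|^{-1}$, comparing the sum to an integral as in the proof of Lemma~\ref{lea:ex_dom_conv}, to get $B_{T,\mu}(p,q)\lesssim \big(|(\tfrac{p+q}{2})^2-\mu|+T\big)^{-1/2}\big(|(\tfrac{p-q}{2})^2-\mu|+T\big)^{-1/2}$ up to constants; this square-root form is exactly what makes $\int_\BR B_{T,\mu}(p,q)\,w(q)\,\dd q \leq C$ uniformly once $w$ has the mild singularity above, and by symmetry $B_{T,\mu}(p,q)=B_{T,\mu}(q,p)$ the same bound holds for $\int B_{T,\mu}(p,q)w(p)\,\dd p$. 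The Schur test then yields $\lVert B_{T,\mu}\rVert\leq C$ uniformly for $T\in(0,T_0]$, and combined with the large-$T$ bound this proves the lemma. The one genuinely delicate point to get right is the interplay of the two factors $|(\tfrac{p\pm q}{2})^2-\mu|$: when $p$ itself is near $\pm2\sqrt\mu$ both factors can be small simultaneously for $q$ near $0$ or near $\pm4\sqrt\mu$, and one must check the two-dimensional singularity $|(\tfrac{p+q}{2})^2-\mu|^{-1/2}|(\tfrac{p-q}{2})^2-\mu|^{-1/2}$ is still locally integrable in $q$ with the weight — which it is, since each factor is only a $-1/2$ power of a function with simple zeros, hence locally $L^{2-\epsilon}$, and the product is locally $L^{1-\epsilon}$.
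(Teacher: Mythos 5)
Your overall strategy — bound the kernel uniformly in $T$ and run a weighted Schur test — is the same in spirit as the paper's, and the large-$T$ disposal is fine, but two of your ingredients do not work as stated. Regarding the kernel bound: the Cauchy--Schwarz estimate on the Matsubara series does give
$B_{T,\mu}(p,q)\lesssim \big((|a|+T)(|b|+T)\big)^{-1/2}$ with $a=(\tfrac{p+q}{2})^2-\mu$, $b=(\tfrac{p-q}{2})^2-\mu$ (using $2T\sum_n(a^2+w_n^2)^{-1}=\tanh(a/2T)/a\lesssim(|a|+T)^{-1}$), but this is strictly weaker than what the paper uses. The paper's Lemma~\ref{lea:estB1} gives the completely $T$-independent bound $B_{T,\mu}(p,q)\le f(p,q)=2/(|a|+|b|)$, which is smaller than your bound by a factor $\sqrt{|a|/|b|}$ when $|a|\gg|b|$ and, crucially, whose singular set consists of four isolated points $\{|p|,|q|\}=\{0,2\sqrt\mu\}$, whereas your dominating kernel is singular along the four full lines $|p\pm q|=2\sqrt\mu$. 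You are making the Schur test harder than it needs to be.

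The genuine gap is your choice of Schur weight. The singular corners of the kernel are $(0,\pm2\sqrt\mu)$ and $(\pm2\sqrt\mu,0)$, so the weight must be singular at $p=0$ as well as at $p=\pm2\sqrt\mu$; yours is $O(1)$ near $p=0$. Since your $w\ge 1$ and $w(0)=O(1)$, the Schur quantity at $p=0$ satisfies
$w(0)^{-1}\int_\BR B_{T,\mu}(0,q)\,w(q)\,\dd q\ \ge\ \int_\BR B_{T,\mu}(0,q)\,\dd q\ =\ 4\pi a_{T,\mu}\ \sim\ \tfrac{2}{\sqrt\mu}\ln(\mu/T)$,
which diverges as $T\to 0$; so the uniform estimate $\int B_{T,\mu}(p,q)w(q)\,\dd q\le C$ you assert already fails at $p=0$, precisely the value where the difficulty sits. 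Your guiding remark that the logarithmic mass of $\int B_{T,\mu}(p,\cdot)$ is concentrated near $q=\pm2\sqrt\mu$ ``for every $p$'' is also not correct: the Fermi lines are $q=\pm2\sqrt\mu\mp p$, so the location depends on $p$, and in fact (Lemma~\ref{lea:Eln}) the $q$-integral is bounded uniformly in $T$ for $|p|\ge\epsilon$ — the $\ln(1/T)$ divergence occurs only for $p$ near $0$. The paper's weight is $1/h$ with $h(p)=\min\{|p|,\,|2\sqrt\mu-|p||\}^{1/2}$, which vanishes at all three critical abscissas; the factor $h(p)\sim|p|^{1/2}$ in $\sup_p h(p)\int f(p,q)/h(q)\,\dd q$ is exactly what kills the divergence near $p=0$, and the proof then reduces to the nine-region estimate of Table~\ref{tab:1}. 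Finally, the local-$L^{1-\epsilon}$ check at the end only shows $\int B(p,q)w(q)\,\dd q<\infty$ for each fixed $p$; the Schur test requires a bound uniform in $p$, and that is what is missing.
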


\begin{lemma}\label{lea:1B1}
Let $\BI_{\leq \epsilon}$ denote multiplication with the characteristic function of the interval $[-\epsilon,\epsilon]$ in momentum space.
Let $\mu>0$.
Then
\begin{equation}
\lim_{\epsilon \to 0} \sup_T \lVert \BI_{\leq \epsilon} B_{T,\mu} \BI_{\leq \epsilon} \lVert \leq \lim_{\epsilon \to 0} \sup_T \lVert \BI_{\leq \epsilon} B_{T,\mu} \BI_{\leq \epsilon} \lVert_{\rm{HS}}  =0,
\end{equation}
where $\lVert \cdot \rVert_{\rm{HS}}$ denotes the Hilbert--Schmidt norm.
\end{lemma}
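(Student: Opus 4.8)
\emph{Proof plan.} The first inequality in the statement is immediate: for any Hilbert--Schmidt operator the operator norm is dominated by the Hilbert--Schmidt norm, and this bound survives taking $\sup_{T}$ and then $\lim_{\epsilon\to 0}$. So the only content is to show
\[
\lim_{\epsilon\to 0}\ \sup_{T>0}\ \lVert \BI_{\leq\epsilon} B_{T,\mu}\BI_{\leq\epsilon}\rVert_{\rm HS}=0 .
\]
I would prove this by a crude pointwise bound on the kernel. Writing out the Hilbert--Schmidt norm,
\[
\lVert \BI_{\leq\epsilon} B_{T,\mu}\BI_{\leq\epsilon}\rVert_{\rm HS}^2=\int_{-\epsilon}^{\epsilon}\int_{-\epsilon}^{\epsilon} |B_{T,\mu}(p,q)|^2\,\dd p\,\dd q ,
\]
so it suffices to bound $B_{T,\mu}(p,q)$ on the square $[-\epsilon,\epsilon]^2$ uniformly in $T$, and then exploit that the area of that square is $(2\epsilon)^2$.

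For the pointwise bound, recall $B_{T,\mu}(p,q)=L_{T,\mu}\bigl(\tfrac{p+q}2,\tfrac{p-q}2\bigr)$ and note $\bigl(\tfrac{p+q}2\bigr)^2+\bigl(\tfrac{p-q}2\bigr)^2=\tfrac{p^2+q^2}2$, so the explicit formula for $L_{T,\mu}$ gives
\[
B_{T,\mu}(p,q)=\frac{\tanh\!\bigl(\frac{(p+q)^2/4-\mu}{2T}\bigr)+\tanh\!\bigl(\frac{(p-q)^2/4-\mu}{2T}\bigr)}{(p^2+q^2)/2-2\mu}.
\]
If $|p|,|q|\leq\epsilon$ with $\epsilon\leq\sqrt\mu$, the denominator satisfies $\bigl|(p^2+q^2)/2-2\mu\bigr|\geq 2\mu-\epsilon^2\geq\mu$, while the numerator is at most $2$ since $|\tanh|\leq 1$; both estimates are independent of $T$. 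Hence $|B_{T,\mu}(p,q)|\leq 2/\mu$ on $[-\epsilon,\epsilon]^2$ for every $T>0$, and therefore
\[
\sup_{T>0}\lVert \BI_{\leq\epsilon} B_{T,\mu}\BI_{\leq\epsilon}\rVert_{\rm HS}^2\leq (2\epsilon)^2\,\frac{4}{\mu^2}=\frac{16\,\epsilon^2}{\mu^2},
\]
which tends to $0$ as $\epsilon\to 0$, giving the claim.

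There is really no obstacle here; the only point deserving a moment's attention is the uniformity in $T$, and this is automatic because the $T$-dependence of $B_{T,\mu}$ sits entirely in the two $\tanh$ factors, which are bounded by $1$ regardless of $T$. The estimate works precisely because cutting both momenta down to a small neighbourhood of the origin stays well away from the shell $\{p^2+q^2=4\mu\}$, where $B_{T,\mu}$ exhibits its only interesting ($T$-sensitive, though removable) behaviour; near the origin $B_{T,\mu}$ is simply uniformly bounded, and the vanishing area $(2\epsilon)^2$ of the integration region finishes the argument. Here the hypothesis $\mu>0$ enters through the requirement $\epsilon\leq\sqrt\mu$.
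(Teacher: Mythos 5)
Your proof is correct and takes essentially the same approach as the paper: you bound $|B_{T,\mu}(p,q)|$ uniformly in $T$ on $[-\epsilon,\epsilon]^2$ (using $|\tanh|\leq 1$ and the fact that the denominator $\big|\tfrac{p^2+q^2}{2}-2\mu\big|$ stays bounded away from zero for small $\epsilon$ when $\mu>0$), and then use the vanishing area of the integration region. The only difference from the paper is a slightly different (but equivalent) choice of constant in the pointwise bound.
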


\begin{lemma}\label{lea:Eln}
Let $0<\epsilon<2 \sqrt{\mu}$.
For $\vert p\vert>\epsilon$ 
we have 
\begin{equation}
E_{T,\mu}(p) \geq c_1  \ln \left( \frac{c_2}{T}\right)
\end{equation}
for constants $c_1, c_2>0$ and $T$ small enough.
\end{lemma}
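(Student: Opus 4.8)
The plan is to reduce the statement to a uniform upper bound on the multiplier $A_{T,\mu}(p)$ away from the origin. By the definition $E_{T,\mu}(p)=4\pi(a_{T,\mu}-A_{T,\mu}(p))$ and \eqref{intBdq},
\[
E_{T,\mu}(p)=\int_\BR\big(B_{T,\mu}(0,q)-B_{T,\mu}(p,q)\big)\,\dd q\ge 0,
\]
where the nonnegativity is Lemma~\ref{intBdq_max}. Since $B_{T,\mu}(0,q)=F_{T,\mu}(q/2)$, one has $4\pi a_{T,\mu}=2\int_\BR F_{T,\mu}(u)\,\dd u$, so Lemma~\ref{asymptotic_int_F} gives $4\pi a_{T,\mu}=\tfrac{4}{\sqrt\mu}\ln\tfrac\mu T+O(1)$ as $T\to 0$. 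Thus the lemma follows once one shows $C_\epsilon:=\sup_{0<T<T_0,\,|p|\ge\epsilon}A_{T,\mu}(p)<\infty$: then $E_{T,\mu}(p)\ge\tfrac{4}{\sqrt\mu}\ln\tfrac\mu T-4\pi C_\epsilon-O(1)$, which exceeds $\tfrac{2}{\sqrt\mu}\ln\tfrac\mu T$ once $T$ is small enough, giving the claim with $c_1=\tfrac{2}{\sqrt\mu}$ and a suitable $c_2$.

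To bound $A_{T,\mu}(p)$, I would substitute $u=\tfrac{p+q}{2}$ to get $4\pi A_{T,\mu}(p)=2\int_\BR L_{T,\mu}(u,p-u)\,\dd u$, and then use $\tanh x+\tanh y=\sinh(x+y)/(\cosh x\cosh y)$ together with $\cosh x\cosh y=\tfrac12[\cosh(x+y)+\cosh(x-y)]$. With $D_p(u):=u^2+(p-u)^2-2\mu$, the two $\tanh$-arguments are $\tfrac{u^2-\mu}{2T}$ and $\tfrac{(p-u)^2-\mu}{2T}$, with sum $\tfrac{D_p(u)}{2T}$ and difference $\tfrac{(2u-p)p}{2T}$, so
\[
L_{T,\mu}(u,p-u)=\frac{2\sinh\!\big(\tfrac{D_p(u)}{2T}\big)}{D_p(u)\Big[\cosh\!\big(\tfrac{D_p(u)}{2T}\big)+\cosh\!\big(\tfrac{(2u-p)p}{2T}\big)\Big]}.
\]
From this I read off the crude bound $L_{T,\mu}(u,p-u)\le 2/|D_p(u)|$ (the numerator $\tanh(\cdot)+\tanh(\cdot)$ has modulus at most $2$ and the same sign as $D_p(u)$) and, on $\{|D_p(u)|\le 2T\}$, the better bound $L_{T,\mu}(u,p-u)\le C/\big(T\cosh(\tfrac{(2u-p)p}{2T})\big)$, whose integral over $u$ is $\le C\pi/|p|$. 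The decisive structural point is that $D_p$ has zeros only for $|p|\le2\sqrt\mu$, at $u_\pm(p)=\tfrac12(p\pm\sqrt{4\mu-p^2})$, and at such a zero the $\tanh$-arguments are exact negatives of each other, so the numerator of $L_{T,\mu}(u,p-u)$ vanishes identically: the potential logarithmic divergence of $\int|D_p(u)|^{-1}\,\dd u$ near a zero of $D_p$ is simply absent from $L_{T,\mu}$ itself. This is precisely why $A_{T,\mu}(0)=a_{T,\mu}$ diverges (for $p=0$ the zeros of $D_0$ sit at the Fermi momenta $\pm\sqrt\mu$, where the individual $\tanh$-arguments also vanish and the cancellation degenerates) while $A_{T,\mu}(p)$ stays bounded for $p$ bounded away from $0$.

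Carrying this out, for $|p|\ge\epsilon$ also bounded away from $2\sqrt\mu$ and above (large $|p|$ is trivial, since $L_{T,\mu}(u,p-u)\le 2/D_p(u)$ gives $A_{T,\mu}(p)\le(p^2-4\mu)^{-1/2}\to0$), the zeros $u_\pm(p)$ have slope $|D_p'(u_\pm)|=2\sqrt{4\mu-p^2}$ bounded below and the arguments $\tfrac{u_\pm^2-\mu}{2T}=\pm\tfrac{|p|\sqrt{4\mu-p^2}}{4T}$ of order $T^{-1}$, so on a small fixed neighbourhood of each zero the displayed formula makes $L_{T,\mu}(u,p-u)$ exponentially small in $1/T$, while outside these neighbourhoods $|D_p(u)|$ is bounded below and $2/|D_p(u)|$ (integrable at infinity) closes the estimate. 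This yields $A_{T,\mu}(p)=O(1)$ uniformly over such $p$.

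The one genuine obstacle is uniformity as $|p|\to2\sqrt\mu$, where $u_+(p)$ and $u_-(p)$ coalesce at $\sqrt\mu$ and both $\tanh$-arguments become small near $u=\sqrt\mu$, so the $\cosh$-factors no longer supply suppression by themselves. Here I would work in $\sigma=u-p/2$, with $\sigma_0:=\tfrac12\sqrt{4\mu-p^2}$ small and $D_p=2\sigma^2-2\sigma_0^2$, and split according to the scale of $\sigma$, the relevant scales being $|\sigma|\sim T$ and $|\sigma|\sim\sqrt T$. For $|\sigma|\lesssim T$ one uses $L_{T,\mu}\le 1/(2T)$ over a short interval ($O(1)$); for $T\lesssim|\sigma|$ with $|D_p|\lesssim T$ one uses the bound $C/\big(T\cosh(\tfrac{p\sigma}{T})\big)$, whose integral is $\le C\pi/|p|\le C\pi/\epsilon$; and for $|\sigma|\gtrsim\sqrt T$ (so $|D_p|\approx 2\sigma^2\gg T$) the factor $\cosh(\tfrac{p\sigma}{T})\ge\tfrac12 e^{\epsilon|\sigma|/T}$ beats $\sinh(\tfrac{D_p}{2T})$, since in the relevant range $|D_p|-2|p\sigma|\le -\tfrac{\epsilon|\sigma|}{2}$, so $L_{T,\mu}(u,p-u)\lesssim \sigma^{-2}e^{-\epsilon|\sigma|/2T}$, which integrates to something $\to0$. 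Assembling all pieces gives $A_{T,\mu}(p)\le C_\epsilon$ uniformly in $T<T_0$ and $|p|\ge\epsilon$, and hence the lemma; the bookkeeping in this last regime near $|p|=2\sqrt\mu$ is the only delicate part of the argument.
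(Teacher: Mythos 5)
Your overall strategy is the same as the paper's: reduce the lemma to the uniform bound $\sup_{|p|\geq\epsilon,\,T>0}\int_\BR B_{T,\mu}(p,q)\,\dd q<\infty$ (equivalently $\sup A_{T,\mu}(p)<\infty$), and then combine it with the asymptotics $4\pi a_{T,\mu}=\tfrac{4}{\sqrt\mu}\ln\tfrac{\mu}{T}+O(1)$ from Lemma~\ref{asymptotic_int_F}. That reduction is exactly step one of the printed proof. Where you diverge is in how the uniform bound is established. The paper's route is to split the $q$-integral into the regions $I_1,\dots,I_4$, bound $B_{T,\mu}$ by $\frac{8}{|(p+q)^2-4\mu|+|(p-q)^2-4\mu|}$ off the ``dangerous'' strip $|p-2\sqrt\mu|<q<p+2\sqrt\mu$ via Lemma~\ref{lea:estB1}, and on the strip use Lemma~\ref{lea:estB2} to get $B_{T,\mu}\leq\frac{2}{T}\exp\!\bigl(-\min\{(p+q)^2-4\mu,\,4\mu-(p-q)^2\}/4T\bigr)$, then rescale to Gaussian integrals and close the $T$-uniformity by bounding $\sup_{x>0}xe^{-x^2}\mathrm{erfi}(x)$ and $\sup_{x>0}xe^{x^2}\mathrm{erfc}(x)$. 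You instead pass to $u=(p+q)/2$ and rewrite $L_{T,\mu}(u,p-u)=\frac{2\sinh(D_p/2T)}{D_p[\cosh(D_p/2T)+\cosh(p\sigma/T)]}$, which makes the cancellation near the zeros $u_\pm(p)$ of $D_p$ manifest: the $\cosh(p\sigma/T)$ factor in the denominator supplies the exponential suppression automatically. Conceptually this is the same mechanism as Lemma~\ref{lea:estB2} (which is proved in the paper by the very inequality $e^{-2y}-e^{-2x}\leq 2(x-y)e^{-2y}$), so both proofs are exploiting the same structure, just packaged differently. Your packaging is arguably slicker, because the single displayed identity replaces the two auxiliary tanh-inequalities.

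The one place where you are genuinely sketchier is the regime $|p|\to 2\sqrt\mu$, which you flag honestly. Your multiscale split in $\sigma$ has the right scales ($|\sigma|\lesssim T$, $T\lesssim|\sigma|\lesssim\sqrt T$, $|\sigma|\gtrsim\sqrt T$), but the inequality $|D_p|-2|p\sigma|\leq-\epsilon|\sigma|/2$ as stated needs $|\sigma|$ also bounded \emph{above} (else $|D_p|\sim 2\sigma^2$ eventually dominates $2|p\sigma|$), and one must also account for $\sigma_0=\tfrac12\sqrt{4\mu-p^2}\neq 0$ sitting at an arbitrary scale relative to $T$; the split should really be organized around $\max\{\sigma_0,\sqrt T\}$ rather than $\sqrt T$ alone, and the large-$|\sigma|$ tail handled by $2/|D_p|$. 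None of this is an obstruction---it is exactly the kind of bookkeeping the paper's erf/erfc rescaling handles cleanly in one stroke---but as written your third regime is not yet a complete argument. With that caveat, your proposal is correct in substance and gives a legitimate alternative to the paper's estimates in Lemmas~\ref{lea:estB1}--\ref{lea:estB2} and the $I_1,\dots,I_7$ decomposition.
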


\begin{proof}[Proof of Theorem~\ref{main_result} \eqref{thm_weak}]
By Lemma \ref{weak_lim_reformulation} it suffices to prove $0=\lim_{T\to 0} \inf \sigma(a_{T,\mu}\BI-A_{T,\mu}^{\BR_+} )=\lim_{T\to 0}\frac{1}{4\pi} \inf \sigma(E_{T,\mu}+B_{T,\mu})$.
By \eqref{T1>T0}, we only need to show that $\lim_{T\to0} \inf \sigma(E_{T,\mu}+B_{T,\mu}) \geq 0$.
For $\delta>0$ 
we can write
\begin{equation}\label{E+B+d}
E_{T,\mu}+B_{T,\mu}+\delta=\sqrt{E_{T,\mu}+\delta}\left(\BI + \frac{1}{\sqrt{E_{T,\mu}+\delta}}B_{T,\mu}\frac{1}{\sqrt{E_{T,\mu}+\delta}}\right)\sqrt{E_{T,\mu}+\delta}
\end{equation}
since $E_{T,\mu}(p)\geq 0$ by Lemma \ref{intBdq_max}.
We shall show that for all $\delta>0$ 
\begin{equation}\label{eq:EBEto0}
\lim_{T\to 0} \left\lVert \frac{1}{\sqrt{E_{T,\mu}+\delta}}B_{T,\mu}\frac{1}{\sqrt{E_{T,\mu}+\delta}}\right\rVert=0 \,.
\end{equation}
Hence, the operator in the bracket in \eqref{E+B+d} is positive for small $T$.
This implies that for all $\delta>0$ for $T$ small enough we have $\inf \sigma(E_{T,\mu}+B_{T,\mu}+\delta)>0$.
Since $\delta$ can be arbitrarily small, the theorem follows.

To prove \eqref{eq:EBEto0}, we use the notation of Lemma~\ref{lea:1B1} and estimate for an arbitrary $0<\epsilon<2\sqrt{\mu}$
\begin{multline}
\left\lVert \frac{1}{\sqrt{E_{T,\mu}+\delta}}B_{T,\mu}\frac{1}{\sqrt{E_{T,\mu}+\delta}}\right\rVert \leq 
\left\lVert \BI_{\leq\epsilon} \frac{1}{\sqrt{E_{T,\mu}+\delta}}B_{T,\mu}\frac{1}{\sqrt{E_{T,\mu}+\delta}}\BI_{\leq\epsilon}\right\rVert\\
+\left\lVert \BI_{\leq\epsilon} \frac{1}{\sqrt{E_{T,\mu}+\delta}}B_{T,\mu}\frac{1}{\sqrt{E_{T,\mu}+\delta}}\BI_{>\epsilon}\right\rVert
+\left\lVert \BI_{>\epsilon} \frac{1}{\sqrt{E_{T,\mu}+\delta}}B_{T,\mu}\frac{1}{\sqrt{E_{T,\mu}+\delta}}\right\rVert\,.
\end{multline}
Now we use that $E_{T,\mu}\geq 0$ and Lemma~\ref{lea:Eln} to obtain
\begin{equation}
\lim_{T\to 0}\left\lVert \frac{1}{\sqrt{E_{T,\mu}+\delta}}B_{T,\mu}\frac{1}{\sqrt{E_{T,\mu}+\delta}}\right\rVert \leq \lim_{T\to 0} \frac{1}{\delta}\left\lVert \BI_{\leq\epsilon} B_{T,\mu} \BI_{\leq\epsilon}\right\rVert+\lim_{T\to 0} \frac{2 c_1^{1/2}}{(\delta \vert \ln(c_2/T)\vert)^{1/2}} \lVert B_{T,\mu} \rVert.
\end{equation}
 With Lemma~\ref{lea:supB} it follows that the second term vanishes and 
\begin{equation}
\lim_{T\to 0}\left\lVert \frac{1}{\sqrt{E_{T,\mu}+\delta}}B_{T,\mu}\frac{1}{\sqrt{E_{T,\mu}+\delta}}\right\rVert \leq \sup_T \frac{1}{\delta}\left\lVert \BI_{\leq\epsilon} B_{T,\mu} \BI_{\leq\epsilon}\right\rVert.
\end{equation}
Since $\epsilon>0$ was arbitrary, \eqref{eq:EBEto0} follows from Lemma~\ref{lea:1B1}.
\end{proof}

\begin{rem}
In the case of Neumann boundary conditions, the same argument proves \eqref{thm_n_weak}.
\end{rem}

\subsection{Proofs of Intermediate Results}\label{sec:weaklea}

\begin{proof}[Proof of Lemma~\ref{lea:supB}]
In order to bound $B_{T,\mu}(p,q)$ we apply the following inequality proved in Section~\ref{sec:pfweak}.
\begin{lemma}\label{lea:estB1}
For all  $x,y \in \BR$ and $T>0$ it holds that 
\begin{equation}
\frac{\tanh(x/T)+\tanh(y/T)}{x+y} < \frac{2}{\vert x\vert +\vert y \vert}.
\end{equation}
\end{lemma}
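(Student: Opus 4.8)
The plan is to reduce to the case $T=1$ by the substitution $x\mapsto x/T$, $y\mapsto y/T$, which multiplies both sides of the claimed inequality by $T$; so it suffices to prove $\frac{\tanh x+\tanh y}{x+y}<\frac{2}{|x|+|y|}$ for all $x,y\in\BR$. Here I would exploit the same Mittag--Leffler expansion of $\tanh$ that was used to obtain \eqref{L_series}. Writing $c_k=\pi^2(2k+1)^2$, one has $\tanh t=\sum_{k\ge 0}\frac{8t}{c_k+4t^2}$, and combining the two resulting fractions (the numerator of $\frac{x}{c_k+4x^2}+\frac{y}{c_k+4y^2}$ factors as $(x+y)(c_k+4xy)$) gives the identity
\[
\frac{\tanh x+\tanh y}{x+y}=\sum_{k\ge 0}\frac{8\,(c_k+4xy)}{(c_k+4x^2)(c_k+4y^2)},
\]
valid for $x+y\neq 0$; the factor $x+y$ cancels before one divides, so this also serves as the natural (continuous) definition of the left-hand side on the diagonal $x+y=0$, where both sides equal $1/\cosh^2 x$, and at $x=y=0$ the series sums to $1$.

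Next I would compare this term by term with the corresponding expansion of $\frac{\tanh|x|+\tanh|y|}{|x|+|y|}$, which is the very same series with $4xy$ replaced by $4|x|\,|y|=4|xy|$ and with $x^2,y^2$ unchanged (since $|x|^2=x^2$). Every denominator $c_k+4x^2$, $c_k+4y^2$ is strictly positive, and $xy\le|xy|$ for all real $x,y$, so summing yields
\[
\frac{\tanh x+\tanh y}{x+y}\le\frac{\tanh|x|+\tanh|y|}{|x|+|y|},
\]
with strict inequality whenever $x$ and $y$ have opposite signs (then $4xy<0<4|xy|$ in every term). Finally, $\tanh s<1$ for every finite $s\ge 0$, so $\tanh|x|+\tanh|y|<2$ and hence $\frac{\tanh|x|+\tanh|y|}{|x|+|y|}<\frac{2}{|x|+|y|}$ when $|x|+|y|>0$ (and the bound is trivially $1<\infty$ when $x=y=0$). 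Chaining the two estimates gives the claim, with strict inequality in all cases.

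I do not expect a genuine obstacle: once the series identity is in place the argument is essentially the elementary facts $xy\le|xy|$ and $\tanh<1$. The only point that needs a little care is the interpretation of the left-hand side on the diagonal $x+y=0$ (and the corner $x=y=0$), which the series representation handles automatically. One could alternatively prove the inequality by a direct case analysis on the signs of $x$ and $y$ together with a monotonicity argument in one variable (reducing, after symmetry, to showing $t\mapsto 2(t-s)-(\tanh t-\tanh s)(t+s)$ is increasing on $[s,\infty)$, which follows from $\tanh t<1$ and $2t\le\cosh^2 t$), but that route is longer and I would avoid it.
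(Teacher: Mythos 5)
Your proof is correct, and it takes a genuinely different route from the paper's. The paper splits into cases $xy>0$ (immediate from $|\tanh|<1$) and $xy<0$ (after $y\to-y$, reduce to $\frac{x+y}{x-y}(\tanh x-\tanh y)<2$ for $x>y>0$, then bound $\tanh x-\tanh y$ by $2(e^{-2y}-e^{-2x})$, exploit convexity of $s\mapsto e^{-2s}$, and run a small one-variable optimization). You instead invoke the Mittag--Leffler expansion of $\tanh$ (the same tool the paper already uses for \eqref{L_series}), obtain the factored identity
\begin{equation}
\frac{\tanh x+\tanh y}{x+y}=\sum_{k\ge 0}\frac{8\,(c_k+4xy)}{(c_k+4x^2)(c_k+4y^2)},\qquad c_k=\pi^2(2k+1)^2,
\end{equation}
and conclude from the manifest monotonicity of each term in $xy$ (at fixed $x^2,y^2$) together with $\tanh<1$. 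Both arguments are complete; yours is arguably more conceptual and, as a bonus, yields the intermediate improvement $\frac{\tanh x+\tanh y}{x+y}\le\frac{\tanh|x|+\tanh|y|}{|x|+|y|}$, which is slightly stronger than the stated bound $\frac{2}{|x|+|y|}$. It also handles the removable singularity on $x+y=0$ (where the quotient should be read as $\operatorname{sech}^2 x$) automatically through the series, a point the paper leaves implicit. The paper's proof is more elementary in the sense of not relying on the series representation, at the cost of a case split and an ad hoc optimization. Either proof would serve; I have no corrections to yours.
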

Hence,  $B_{T,\mu}(p,q)$ is bounded above by 
\begin{equation}
f(p,q)= \frac{2}{\vert \left(\frac{p+q}{2}\right)^2-\mu \vert +\vert \left(\frac{p-q}{2}\right)^2-\mu \vert}.
\end{equation}
The function $f$ has singularities at the four points where $\{\vert p\vert ,\vert q\vert \}=\{0,2\sqrt{\mu}\}$.
Since $f$ diverges linearly at those points, the idea is to do a Schur test with a test function of the form $d(p)^\alpha$, where $d(p)$ is the distance from the singularities in variable $p$ and $\alpha \in (0,1)$.
We choose the function $h(p)=\min\{\vert p\vert , \vert 2\sqrt{\mu}-\vert p\vert \vert\}^{1/2}$.
The Schur test gives
\begin{equation}\label{schurtestb}
\sup_T \lVert B_{T,\mu} \rVert \leq  \sup_T \sup_ph(p) \int_\BR \frac{B_{T,\mu}(p,q)}{h(q)} \dd q \leq \sup_p  h(p) \int_\BR \frac{f(p,q)}{h(q)} \dd q =2 \sup_{p>0}  h(p) \int_0^\infty \frac{f(p,q)}{h(q)} \dd q,
\end{equation}
where we used that $\frac{h(p)f(p,q)}{h(q)}=\frac{h(\vert p \vert)f(\vert p\vert , \vert q \vert)}{h(\vert q \vert)}$ for the last equality.

In order to estimate $h(p)\int_0^\infty \frac{f(p,q)}{h(q)} \dd q$, we split the domain into nine regions as indicated in Figure~\ref{supB_domains}.
The finiteness of the right hand side of \eqref{schurtestb} follows from the bounds listed in Table~\ref{tab:1}.
In the following, we prove the bounds in Table~\ref{tab:1}.

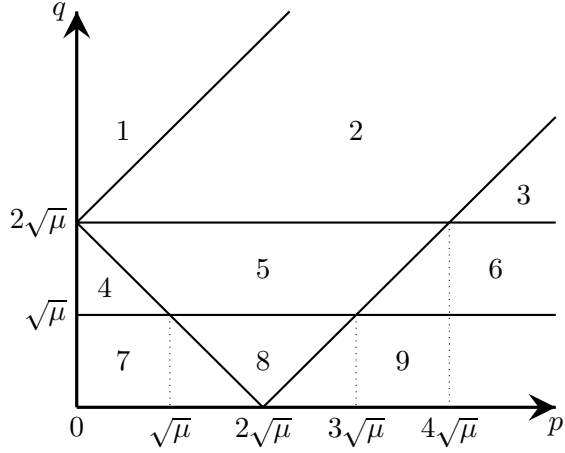
\begin{figure}
\centering
\begin{tikzpicture}[scale=0.35]
\tikzmath{\smu=3.5;\l1=18;\l2=15;} 

\draw (0,0) node[below]{$0$};
\draw (0,\smu) node[left]{$\sqrt{\mu}$};
\draw (0,2*\smu) node[left]{$2\sqrt{\mu}$};
\draw (\smu,0) node[below]{$\sqrt{\mu}$};
\draw (2*\smu,0) node[below]{$2\sqrt{\mu}$};
\draw (3*\smu,0) node[below]{$3\sqrt{\mu}$};
\draw (4*\smu,0) node[below]{$4\sqrt{\mu}$};
\draw (0,\l2) node[left]{$q$};
\draw (\l1,0) node[below]{$p$};

\draw (\smu/2,3*\smu) node{1};
\draw (3*\smu,3*\smu) node{2};
\draw (4.8*\smu,2.3*\smu) node{3};
\draw (0.3*\smu,1.3*\smu) node{4};
\draw (2*\smu,1.5*\smu) node{5};
\draw (4.5*\smu,1.5*\smu) node{6};
\draw (0.5*\smu,0.5*\smu) node{7};
\draw (2*\smu,0.5*\smu) node{8};
\draw (3.5*\smu,0.5*\smu) node{9};

 \draw[black, very thick, decoration={markings, mark=at position 1 with {\arrow[scale=2,>=stealth]{>}}},
        postaction={decorate}]
 (0,0)--(\l1,0);
 \draw[black, very thick, decoration={markings, mark=at position 1 with {\arrow[scale=2,>=stealth]{>}}},
        postaction={decorate}]
 (0,0)--(0,\l2);

 \draw[black, thick]
 (0,2*\smu)--(2*\smu,0)
 (0,\smu)--(\l1,\smu)
 (0,2*\smu)--(\l1,2*\smu)
 (0,2*\smu)--(\l2-2*\smu,\l2)
 (2*\smu,0)--(\l1,\l1-2*\smu);
 
 \draw[black, dotted]
(\smu,\smu)--(\smu,0)
(3*\smu,\smu)--(3*\smu,0)
(4*\smu,2*\smu)--(4*\smu,0);

\end{tikzpicture} 
\caption{The nine regions of the domain of $p,q$ in the proof of Lemma~\ref{lea:supB}.}
\label{supB_domains}
\end{figure}

\begin{table}[ht]
\begin{center}
\begin{tabular}{|c|c|c |c|}
\hline
 Region & Expression & Upper bound & Proof \\
    \hline
    1 & $ h(p)\int_{2\sqrt{\mu}+p}^\infty \frac{f(p,q)}{h(q)} \dd q $ & $\frac{2}{\sqrt{\mu}}$ & \eqref{reg1} \\ 
    2 & $ h(p)\int_{\max\{2 \sqrt{\mu},p-2\sqrt{\mu}\}}^{2\sqrt{\mu}+p} \frac{f(p,q)}{h(q)} \dd q $ & $\frac{2}{\sqrt{\mu}}$ & \eqref{reg2} \\ 
    3 & $ h(p)\int_{2 \sqrt{\mu}}^{p-2\sqrt{\mu}} \frac{f(p,q)}{h(q)} \dd q $ & $\frac{2}{\sqrt{\mu}}$ & \eqref{reg3} \\ 
    4 & $ h(p)\int_{\sqrt{\mu}}^{2\sqrt{\mu}-p} \frac{f(p,q)}{h(q)} \dd q $ & $\frac{8\cdot 2^{1/2}}{\sqrt{\mu}}$  & \eqref{reg4} \\ 
    5 & $ h(p)\int_{\max\{2 \sqrt{\mu}-p,\sqrt{\mu},p-2\sqrt{\mu}\}}^{2\sqrt{\mu}} \frac{f(p,q)}{h(q)} \dd q $ &$\frac{4}{\sqrt{\mu}}$  & \eqref{reg5} \\ 
    6 & $ h(p)\int_{\sqrt{\mu}}^{\min\{2\sqrt{\mu},p-2\sqrt{\mu}\}} \frac{f(p,q)}{h(q)} \dd q $ & $ \frac{2^{1/2} 8}{\sqrt{\mu}(3-3^{3/4})}$  & \eqref{reg6} \\ 
    7 & $ h(p)\int_{0}^{\min\{\sqrt{\mu},2\sqrt{\mu}-p\}} \frac{f(p,q)}{h(q)} \dd q $ & $\frac{4}{\sqrt{\mu}}$ & \eqref{reg7} \\ 
    8 & $ h(p)\int_{\vert 2 \sqrt{\mu}-p\vert}^{\sqrt{\mu}} \frac{f(p,q)}{h(q)} \dd q $ &$\frac{4}{\sqrt{\mu}}$ & \eqref{reg8} \\ 
    9 & $ h(p)\int_{0}^{\min \{\sqrt{\mu},p-2\sqrt{\mu} \}} \frac{f(p,q)}{h(q)} \dd q $ & $\frac{2}{\sqrt{\mu}}$  & \eqref{reg9} \\ 
    \hline
  \end{tabular}
\caption{Overview of the estimates used in the proof of Lemma~\ref{lea:supB}.}
\label{tab:1}
\end{center}
\end{table}

In region 1, we have 
\begin{multline}\label{reg1}
\int_{2\sqrt{\mu}+p}^\infty \frac{f(p,q)}{h(q)} \dd q = \int_{2\sqrt{\mu}+p}^\infty \frac{4}{p^2+q^2-4\mu} \frac{1}{(q-2\sqrt{\mu})^{1/2}} \dd q \leq  
\int_{2\sqrt{\mu}+p}^\infty \frac{4}{(q+2\sqrt{\mu})(q-2\sqrt{\mu})^{3/2}} \dd q\\
\leq  
\frac{1}{\sqrt{\mu}}\int_{2\sqrt{\mu}+p}^\infty \frac{1}{(q-2\sqrt{\mu})^{3/2}} \dd q = \frac{2}{\sqrt{\mu} p^{1/2}}.
\end{multline}

In region 2, we have 
\begin{multline}\label{reg2}
\int_{\max\{2 \sqrt{\mu},p-2\sqrt{\mu}\}}^{2\sqrt{\mu}+p} \frac{f(p,q)}{h(q)} \dd q = \int_{\max\{2 \sqrt{\mu},p-2\sqrt{\mu}\}}^{2\sqrt{\mu}+p} \frac{2}{pq (q-2 \sqrt{\mu})^{1/2}} \dd q\leq \frac{2}{p}\int_{2 \sqrt{\mu}}^{2\sqrt{\mu}+p} \frac{1}{q (q-2 \sqrt{\mu})^{1/2}}  \dd q \\
\leq \frac{1}{p\sqrt{\mu} }\int_{2 \sqrt{\mu}}^{2\sqrt{\mu}+p} \frac{1}{ (q-2 \sqrt{\mu})^{1/2}}  \dd q = \frac{2 }{\sqrt{ \mu} p^{1/2} }.
\end{multline}

In region 3, we have $p>4 \sqrt{\mu}$ and 
\begin{multline}\label{reg3}
\int_{2 \sqrt{\mu}}^{p-2\sqrt{\mu}} \frac{f(p,q)}{h(q)} \dd q = \int_{2 \sqrt{\mu}}^{p-2\sqrt{\mu}} \frac{4}{p^2+q^2-4\mu} \frac{1}{(q-2\sqrt{\mu})^{1/2}} \dd q\\
\leq
\frac{4}{p^2}\int_{2 \sqrt{\mu}}^{p-2\sqrt{\mu}} \frac{1}{(q-2\sqrt{\mu})^{1/2}} \dd q
=\frac{8}{p^2} (p-4\sqrt{\mu})^{1/2} \leq \frac{8}{p^{3/2}} \leq \frac{2}{\sqrt{\mu} p^{1/2}}
\end{multline}
where we used $p>4 \sqrt{\mu}$ in the last inequality.

In region 4, we have $p<\sqrt{\mu}$ and
\begin{multline}\label{reg4}
\int_{\sqrt{\mu}}^{2\sqrt{\mu}-p} \frac{f(p,q)}{h(q)} \dd q= \int_{\sqrt{\mu}}^{2\sqrt{\mu}-p}  \frac{4}{4\mu-p^2-q^2} \frac{1}{(2\sqrt{\mu}-q)^{1/2}} \dd q \\ = \int_{\sqrt{\mu}}^{2\sqrt{\mu}-p}  \frac{4}{(\sqrt{4\mu-p^2}+q)(\sqrt{4\mu-p^2}-q)} \frac{1}{(2\sqrt{\mu}-q)^{1/2}} \dd q \\
 \leq \frac{1}{\sqrt{\mu}}\int_{\sqrt{\mu}}^{2\sqrt{\mu}-p}  \frac{4}{(\sqrt{4\mu-p^2}-q)} \frac{1}{(2\sqrt{\mu}-q)^{1/2}} \dd q  \leq \frac{1}{\sqrt{\mu}}\int_{-\infty}^{2\sqrt{\mu}-p}  \frac{4}{(\sqrt{4\mu-p^2}-q)^{3/2}}\dd q \\
=\frac{8}{\sqrt{\mu}(\sqrt{4\mu-p^2}-2\sqrt{\mu}+p)^{1/2}}=\frac{8}{\sqrt{\mu}(2\sqrt{\mu}-p)^{1/4}\left[(2\sqrt{\mu}+p)^{1/2}-(2\sqrt{\mu}-p)^{1/2}\right]^{1/2}}\\
= \frac{8\left[(2\sqrt{\mu}+p)^{1/2}+(2\sqrt{\mu}-p)^{1/2}\right]^{1/2}}{\sqrt{\mu}(2\sqrt{\mu}-p)^{1/4}\left(2p\right)^{1/2}}\leq \frac{8\left(4 \mu^{1/4}\right)^{1/2}}{\sqrt{\mu}\mu^{1/8}\left(2p\right)^{1/2}}=\frac{8 \cdot 2^{1/2}}{\sqrt{\mu} p^{1/2}},
\end{multline}
where we used $p<\sqrt{\mu}$ in the last inequality.

In region 5, we have 
\begin{multline}\label{reg5}
\int_{\max\{2 \sqrt{\mu}-p,\sqrt{\mu},p-2\sqrt{\mu}\}}^{2\sqrt{\mu}} \frac{f(p,q)}{h(q)} \dd q= \int_{\max\{2 \sqrt{\mu}-p,\sqrt{\mu},p-2\sqrt{\mu}\}}^{2\sqrt{\mu}}   \frac{2}{pq} \frac{1}{(2\sqrt{\mu}-q)^{1/2}} \dd q \\
\leq  \frac{2}{p\sqrt{\mu}}\int_{\max\{2 \sqrt{\mu}-p,\sqrt{\mu},p-2\sqrt{\mu}\}}^{2\sqrt{\mu}}\frac{1}{(2\sqrt{\mu}-q)^{1/2}} \dd q= \frac{4}{p\sqrt{\mu}} \min\{p, \sqrt{\mu}, 4\sqrt{\mu}-p\}^{1/2}\leq \frac{4}{\sqrt{\mu}p^{1/2}}.
\end{multline}

In region 6, we have $p>3\sqrt{\mu}$ and
\begin{multline}\label{reg6}
\int_{\sqrt{\mu}}^{\min\{2\sqrt{\mu},p-2\sqrt{\mu}\}}  \frac{f(p,q)}{h(q)} \dd q= \int_{\sqrt{\mu}}^{\min\{2\sqrt{\mu},p-2\sqrt{\mu}\}}   \frac{4}{p^2+q^2-4\mu} \frac{1}{(2\sqrt{\mu}-q)^{1/2}} \dd q \\
\leq \frac{4}{p^2-3\mu}  \int_{0}^{2\sqrt{\mu}}   \frac{1}{(2\sqrt{\mu}-q)^{1/2}} \dd q =\frac{8}{p^2-3\mu} (2\sqrt{\mu})^{1/2}=\frac{8}{(p+\sqrt{3\mu})(p-\sqrt{3\mu})} (2\sqrt{\mu})^{1/2}\\
\leq \frac{8}{\sqrt{3\mu}(p^{1/2}-(3\mu)^{1/4})(p^{1/2}+(3\mu)^{1/4})} (2\sqrt{\mu})^{1/2}
\leq  \frac{2^{1/2} 8}{\sqrt{\mu}(3-3^{3/4})p^{1/2}} .
\end{multline}

In region 7, we have 
\begin{multline}\label{reg7}
\int_{0}^{\min\{\sqrt{\mu},2\sqrt{\mu}-p\}} \frac{f(p,q)}{h(q)} \dd q= \int_{0}^{\min\{\sqrt{\mu},2\sqrt{\mu}-p\}}  \frac{4}{4\mu-p^2-q^2} \frac{1}{q^{1/2}} \dd q\\
 \leq \frac{4}{4\mu-p^2-\min\{\sqrt{\mu},2\sqrt{\mu}-p\}^2} \int_{0}^{\min\{\sqrt{\mu},2\sqrt{\mu}-p\}}   \frac{1}{q^{1/2}} \dd q = \frac{8  \min\{\sqrt{\mu},2\sqrt{\mu}-p\}^{1/2}  }{4\mu-p^2-\min\{\sqrt{\mu},2\sqrt{\mu}-p\}^2}\\
= \left\{ \begin{array}{cr} \frac{8 \mu^{1/4}}{3\mu-p^2}& \mathrm{if}\  p< \sqrt{\mu} \\  \frac{4}{ p (2\sqrt{\mu}-p)^{1/2}} &\mathrm{if}\  p>\sqrt{\mu} \end{array}\right.
\leq \left\{ \begin{array}{cr} \frac{4 \mu^{1/4}}{\mu}& \mathrm{if}\  p< \sqrt{\mu} \\  \frac{4}{\sqrt{\mu} (2\sqrt{\mu}-p)^{1/2}} &\mathrm{if}\  p>\sqrt{\mu} \end{array}\right.
\end{multline}

In region 8, we have $p>\sqrt{\mu}$ and
\begin{multline}\label{reg8}
\int_{\vert 2 \sqrt{\mu}-p\vert}^{\sqrt{\mu}}  \frac{f(p,q)}{h(q)} \dd q= \int_{\vert 2 \sqrt{\mu}-p\vert}^{\sqrt{\mu}}  \frac{2}{pq} \frac{1}{q^{1/2}} \dd q \leq \frac{2}{\sqrt{\mu}}\int_{\vert 2 \sqrt{\mu}-p\vert}^{\infty}  \frac{1}{q^{3/2}} \dd q = \frac{4}{\sqrt{\mu}} \vert 2 \sqrt{\mu}-p\vert^{-1/2}.
\end{multline}

In region 9, we have $p>2\sqrt{\mu}$ and 
\begin{multline}\label{reg9}
\int_{0}^{\min\{\sqrt{\mu},p-2\sqrt{\mu}\}}\frac{f(p,q)}{h(q)} \dd q= \int_{0}^{\min\{\sqrt{\mu},p-2\sqrt{\mu}\}} \frac{4}{p^2+q^2-4\mu} \frac{1}{q^{1/2}} \dd q\\
 \leq  \frac{4}{p^2-4\mu}\int_{0}^{\min\{\sqrt{\mu},p-2\sqrt{\mu}\}} \frac{1}{q^{1/2}} \dd q= \frac{8}{(p+2\sqrt{\mu})(p-2\sqrt{\mu})}\min\{\mu^{1/4}, (p-2\sqrt{\mu})^{1/2}\}\\
\leq \frac{2}{\sqrt{\mu} (p-2\sqrt{\mu})^{1/2}}
\end{multline}
\end{proof}

\begin{proof}[Proof of Lemma~\ref{lea:1B1}]
Let $0<\epsilon<\sqrt{\mu}$.
For $0\leq \vert p\vert ,\vert q\vert\leq\epsilon$ we have $2\mu-\left(\frac{p+q}{2}\right)^2-\left(\frac{p-q}{2}\right)^2 \geq 2\mu-2\epsilon^2$.
Together with $0\leq \tanh (x)\leq 1$ for $x\geq0$ we obtain
\begin{equation}
0\leq B_{T,\mu}(p,q)\leq \frac{1}{\mu-\epsilon^2}.
\end{equation}
Using this estimate, we bound the Hilbert--Schmidt norm as
\begin{equation}
\lVert \BI_{\leq \epsilon} B_{T,\mu} \BI_{\leq \epsilon} \lVert_\text{HS}^2=\int_{-\epsilon}^\epsilon  \int_{-\epsilon}^\epsilon B_{T,\mu}(p,q)^2 \dd p  \dd q \leq \frac{4\epsilon^2}{(\mu-\epsilon^2)^2}.
\end{equation}
\end{proof}

\begin{proof}[Proof of Lemma~\ref{lea:Eln}]
Recall that $E_{T,\mu}(p)=4\pi a_{T,\mu}-\int_\BR B_{T,\mu}(p,q) \dd q$.
The idea is to show that the supremum $\sup_{p>\epsilon,T>0} \int_\BR B_{T,\mu}(p,q) \dd q <\infty$.
Then, for $T\to 0$ we have $\inf_{\vert p\vert >\epsilon} E_{T,\mu}(p) \sim 4\pi a_{T,\mu} \sim \frac{4}{\sqrt{\mu}} \ln \frac{\mu}{T}$.

We shall prove that the following four expressions are finite.
\begin{equation} I_1:= \sup_{p>\epsilon,T>0} \int_{p+2\sqrt{\mu}}^\infty B_{T,\mu}(p,q) \dd q \end{equation}
\begin{equation} I_2:=\sup_{2\sqrt{\mu}>p>\epsilon,T>0} \int_0^{2\sqrt{\mu}-p}B_{T,\mu}(p,q) \dd q \end{equation}
\begin{equation}I_3:= \sup_{p>2\sqrt{\mu},T>0} \int_0^{p-2\sqrt{\mu}} B_{T,\mu}(p,q) \dd q \end{equation} 
\begin{equation} I_4:= \sup_{p>\epsilon,T>0} \int_{\vert p-2\sqrt{\mu} \vert}^{p+2\sqrt{\mu}} B_{T,\mu}(p,q) \dd q \end{equation}
From this, together with $B_{T,\mu}(p,q)=B_{T,\mu}(\vert p\vert ,\vert q\vert)$ it follows that
\begin{multline}
\sup_{\vert p\vert>\epsilon,T>0} \int_\BR B_{T,\mu}(p,q) \dd q \leq 2 \max\left\{\sup_{2\sqrt{\mu}> p >\epsilon,T>0} \int_0^\infty B_{T,\mu}(p,q) \dd q,\sup_{ p >2\sqrt{\mu},T>0}\int_0^\infty B_{T,\mu}(p,q) \dd q\right\} \\
\leq 2\max\{I_2+I_4+I_1, I_3+I_4+I_1\}<\infty.
\end{multline}

The following inequality is proved in Section~\ref{sec:pfweak}.
\begin{lemma}\label{lea:estB2}
For $x,y>0$
\begin{equation}
\frac{\tanh(x)-\tanh(y)}{x-y} \leq 4 e^{-2\min\{x,y\}}
\end{equation}
\end{lemma}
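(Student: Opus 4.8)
The plan is to reduce the two-variable statement to a one-variable calculus estimate using symmetry and monotonicity. Since the left-hand side $\frac{\tanh x-\tanh y}{x-y}$ is symmetric under exchanging $x$ and $y$, I may assume without loss of generality that $x\ge y>0$, so that $\min\{x,y\}=y$. In the degenerate case $x=y$ the quotient is to be read as the derivative $\tanh'(y)=\cosh^{-2}(y)$, which is covered by the endpoint bound below; so from now on assume $x>y>0$.

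First I would record that the function $t\mapsto \cosh^{-2}(t)=\tanh'(t)$ is decreasing on $[0,\infty)$: indeed $\frac{\dd}{\dd t}\cosh^{-2}(t)=-2\cosh^{-2}(t)\tanh(t)\le 0$ for $t\ge 0$ (equivalently, $\tanh$ is concave on $[0,\infty)$ since $\tanh''(t)=-2\cosh^{-2}(t)\tanh(t)<0$ there). Writing $\tanh x-\tanh y=\int_y^x \cosh^{-2}(t)\,\dd t$ and bounding the integrand by its value at the left endpoint $t=y$ gives
\[
\frac{\tanh x-\tanh y}{x-y}\;\le\;\cosh^{-2}(y).
\]

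Then I would finish with the elementary inequality
\[
\cosh^{-2}(y)=\frac{4}{(e^{y}+e^{-y})^{2}}\;\le\;\frac{4}{e^{2y}}=4e^{-2y}=4e^{-2\min\{x,y\}},
\]
where the middle step uses $e^{y}+e^{-y}\ge e^{y}$. Combining the two displays yields the claim. I do not expect a genuine obstacle here; the only points that need a word of care are the reduction to $\min\{x,y\}=y$ (justified by the symmetry of the quotient) and the degenerate case $x=y$ (handled by continuity, the bound being exactly the endpoint estimate).
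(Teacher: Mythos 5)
Your proof is correct, but it proceeds by a genuinely different route than the paper. The paper writes $\tanh x-\tanh y$ explicitly in terms of exponentials, simplifies to $2\frac{e^{x-y}-e^{y-x}}{(e^x+e^{-x})(e^y+e^{-y})}$, drops the cross terms in the denominator to get $2(e^{-2y}-e^{-2x})$, and then invokes the convexity estimate \eqref{exp_conv_est}, $\frac{e^{-2y}-e^{-2x}}{x-y}\le 2e^{-2y}$, already set up for Lemma~\ref{lea:estB1}. You instead use the mean value estimate: since $\tanh'=\cosh^{-2}$ is decreasing on $[0,\infty)$, the difference quotient is at most $\cosh^{-2}(y)$, which you then bound by $4e^{-2y}$ via $(e^y+e^{-y})^2\ge e^{2y}$. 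Your intermediate bound $\cosh^{-2}(y)$ is in fact sharper than what is needed, and the argument is more elementary and self-contained; the paper's version has the advantage of reusing machinery already in place for the adjacent lemma. Both are fine proofs of the stated inequality.
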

Applying Lemmas~\ref{lea:estB1} and \ref{lea:estB2} we estimate 
\begin{equation}\label{Eln_estB}
B_{T,\mu}(p,q)\leq \left\{\!\!\begin{array}{cr} \frac{2}{T} \exp\left(-{\min\left\{ (p+q)^2-4\mu , 4\mu-(p-q)^2 \right\}}/{4T}\right) & \text{for $\vert p-2\sqrt{\mu} \vert <q< p+2 \sqrt{\mu}$} \\
 \frac{8}{\lvert (p+q)^2-4\mu\rvert+\lvert (p-q)^2-4\mu\rvert } &\text{otherwise.} 
\end{array} \right.
\end{equation}
With \eqref{Eln_estB} we have
\begin{equation}
I_1 \leq \sup_{p>\epsilon} \int_{p+2\sqrt{\mu}}^\infty  \frac{4}{ p^2+q^2-4\mu} \dd q \leq \int_{\epsilon+2\sqrt{\mu}}^\infty  \frac{4}{ q^2-4\mu} \dd q <\infty\,.
\end{equation}
Furthermore,
\begin{multline}
I_2 \leq \sup_{2\sqrt{\mu}>p>\epsilon} \int_0^{2\sqrt{\mu}-p}  \frac{4}{ 4\mu-p^2-q^2} \dd q \leq \sup_{2\sqrt{\mu}>p>\epsilon} (2\sqrt{\mu}-p) \frac{4}{4\mu-p^2-(2\sqrt{\mu}-p)^2}\\
= \sup_{2\sqrt{\mu}>p>\epsilon} \frac{2}{p} = \frac{2}{\epsilon}
\end{multline}
Moreover,
\begin{equation}
I_3 \leq \sup_{p>2\sqrt{\mu}} \int_0^{p-2\sqrt{\mu}}  \frac{4}{ p^2+q^2-4\mu}  \dd q =\sup_{p>2\sqrt{\mu}}\frac{ 4 \arctan \left(\sqrt{\frac{p-2\sqrt{\mu}}{p+2\sqrt{\mu}}}\right)}{\sqrt{p^2-4\mu}} \leq \sup_{0<x<1}\frac{ \arctan \left(x\right)}{\sqrt{\mu} x}\leq \frac{1}{\sqrt{\mu}}
\end{equation}

In order to estimate $I_4$, note that $(p+q)^2-4\mu < 4\mu-(p-q)^2  \Leftrightarrow q<\sqrt{4\mu-p^2}$.
Let 
\begin{equation}
I_5:=\sup_{\epsilon<p<2\sqrt{\mu}, T>0} \frac{2}{T} \int_{2\sqrt{\mu}-p}^{\sqrt{4\mu-p^2}} e^{\mu/T-(p+q)^2/4T} \dd q,
\end{equation}
\begin{equation}
I_6:=\sup_{\epsilon<p<2\sqrt{\mu}, T>0} \frac{2}{T} \int_{\sqrt{4\mu-p^2}}^{2\sqrt{\mu}+p}e^{(q-p)^2/4T-\mu/T} \dd q,
\end{equation}
and
\begin{equation}
I_7:=\sup_{p>2\sqrt{\mu}, T>0} \frac{2}{T} \int_{p-2\sqrt{\mu}}^{p+2\sqrt{\mu}} e^{(q-p)^2/4T-\mu/T} \dd q.
\end{equation}
Then we have $I_4 \leq \max\{I_5+I_6, I_7\} $.
We can bound both $I_6$ and $I_7$ using
\begin{multline}
I_6, I_7 \leq \sup_{p>\epsilon, T>0} \frac{2}{T} \int_{p-2\sqrt{\mu}}^{p+2\sqrt{\mu}} e^{(q-p)^2/4T-\mu/T} \dd q = \sup_{ T>0} \frac{2}{T} \int_{-2\sqrt{\mu}}^{2\sqrt{\mu}} e^{q^2/4T-\mu/T} \dd q\\
 =\sup_{ T>0}  \frac{4 \sqrt{\pi} e^{-\mu/T}}{\sqrt{T}} \mathrm{erfi}\left(\sqrt{\frac{\mu}{T}}\right)=\frac{4\sqrt{\pi}}{\sqrt{\mu}}\sup_{x>0} x e^{-x^2}\mathrm{erfi}(x).
\end{multline}
Since $\sqrt{\pi} \lim_{x\to \infty} x e^{-x^2}\mathrm{erfi}(x)=1$, it follows that $I_6,I_7<\infty$.

Finally,
\begin{multline}
I_5 =\sup_{\epsilon<p<2\sqrt{\mu}, T>0} \frac{2 e^{\mu/T}}{T}\int_{2\sqrt{\mu}}^{\sqrt{4\mu-p^2}+p} e^{-q^2/4T} \dd q\leq \sup_{ T>0} \frac{2 e^{\mu/T}}{T}\int_{2\sqrt{\mu}}^{\infty} e^{-q^2/4T} \dd q\\
=\sup_{ T>0}\frac{2 \sqrt{\pi} e^{\mu/T}}{\sqrt{T}}\mathrm{erfc}\left(\sqrt{\frac{\mu}{T}}\right)=\frac{2 \sqrt{\pi}}{\sqrt{\mu}}\sup_{x>0} x e^{x^2} \mathrm{erfc}(x)
\end{multline}
Since $0 \leq \mathrm{erfc}(x)\leq 1$ and for $x\to \infty$ asymptotically $\mathrm{erfc}(x)\sim \frac{e^{-x^2}}{x \sqrt{\pi}}+o(e^{-x^2}/x)$, we have $\sup_{x>0} x e^{x^2} \mathrm{erfc}(x)<\infty$ and obtain $I_5 <\infty$.
\end{proof}

\section{Strong Coupling Limit}\label{sec:strong}
The goal of this section is to prove part \eqref{thm_strong} of Theorem~\ref{main_result}.
As for the weak coupling limit, we first translate the question about the relative temperature difference into a condition on $A_{T,\mu}^{\BR_+}$ and $a_{T,\mu}$.
While the weak coupling limit turned out to be equivalent to a low temperature limit, the strong coupling limit corresponds to a high temperature limit.
In this limit, the relevant quantities behave as follows.
\begin{lemma}\label{lea:limits}
Let $\mu>0$. Then
\begin{enumerate}[(i)]
\item $\lim_{v\to \infty} T_c^{\BR_+}(v)=\infty$ \label{lim2}
\item $\lim_{T\to \infty} T_c^\BR(\mathfrak{v}(T))=\infty$ \label{lim1}
\item $\lim_{T\to \infty} T^{1/2} a_{T,\mu}=a_{1,0}$ \label{lim3}
\item $\lim_{T\to \infty} T^{1/2} \sup \sigma(A_{T,\mu}^{\BR_+})=\sup \sigma(A_{1,0}^{\BR_+})$ \label{lim4}
\end{enumerate}
\end{lemma}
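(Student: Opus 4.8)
The plan is to deduce all four statements from a single scaling identity together with a continuity statement at $\mu=0$. First note that, by the factorisation $A^\Omega_{T,\mu}=M_\Omega M_\Omega^\dagger$ from the proof of Lemma~\ref{bs_condition}, the operators $A^\Omega_{T,\mu}$ are positive, so $\sup\sigma(A^{\BR_+}_{T,\mu})=\lVert A^{\BR_+}_{T,\mu}\rVert$. A direct computation (or the series \eqref{L_series}) gives $L_{T,\mu}(p,q)=T^{-1}L_{1,\mu/T}(p/\sqrt T,q/\sqrt T)$; inserting this into \eqref{A^1_xy} and the analogous formula for $A^\BR_{T,\mu}$ shows that the position-space kernels satisfy $A^\Omega_{T,\mu}(x,y)=A^\Omega_{1,\mu/T}(\sqrt T x,\sqrt T y)$ for $\Omega\in\{\BR,\BR_+\}$. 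Conjugating by the unitary dilation $(U_Tf)(x)=T^{1/4}f(\sqrt T x)$ then yields $U_T^{-1}A^\Omega_{T,\mu}U_T=T^{-1/2}A^\Omega_{1,\mu/T}$, hence
\begin{equation}\label{scaling}
T^{1/2}a_{T,\mu}=a_{1,\mu/T},\qquad T^{1/2}\sup\sigma(A^{\BR_+}_{T,\mu})=\sup\sigma(A^{\BR_+}_{1,\mu/T}) .
\end{equation}
Writing $\nu=\mu/T$, this reduces parts \eqref{lim3} and \eqref{lim4} to the continuity statements $\lim_{\nu\to0^+}a_{1,\nu}=a_{1,0}$ and $\lim_{\nu\to0^+}\sup\sigma(A^{\BR_+}_{1,\nu})=\sup\sigma(A^{\BR_+}_{1,0})$. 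The limiting objects are well defined and bounded: $a_{1,0}=\tfrac1{4\pi}\int_\BR F_{1,0}(q/2)\,\dd q<\infty$, the function $A_{1,0}(p)$ is bounded, and $B_{1,0}$ is a bounded operator since (by Schur's test and $B_{1,0}(p,q)=B_{1,0}(q,p)$) $\sup_p\int_\BR B_{1,0}(p,q)\,\dd q<\infty$, the only potential singularity of $B_{1,0}$, at the origin, being removable.

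For part \eqref{lim3} I substitute $q=2\sqrt T s$ in $a_{T,\mu}=\tfrac1{4\pi}\int_\BR F_{T,\mu}(q/2)\,\dd q$ to get $T^{1/2}a_{T,\mu}=\tfrac1{2\pi}\int_\BR\frac{\tanh((s^2-\nu)/2)}{s^2-\nu}\,\dd s$ with $\nu=\mu/T$. The integrand is bounded by $\tfrac12$ everywhere and by $(s^2-\nu)^{-1}\le 2s^{-2}$ for $|s|\ge2$ and $\nu\le1$, so dominated convergence gives $T^{1/2}a_{T,\mu}\to\tfrac1{2\pi}\int_\BR\frac{\tanh(s^2/2)}{s^2}\,\dd s=a_{1,0}$ as $T\to\infty$. (This is the scalar version of the argument for part \eqref{lim4}.)

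Part \eqref{lim4} is the technical heart, and I expect the main obstacle to lie here. It suffices to prove $\lVert A^{\BR_+}_{1,\nu}-A^{\BR_+}_{1,0}\rVert\to0$ as $\nu\to0^+$. In the cosine-momentum representation of Lemma~\ref{A_T_momentum} write $A^{\BR_+}_{1,\nu}=\mathcal M_\nu-\tfrac1{4\pi}B_{1,\nu}$, where $\mathcal M_\nu$ is multiplication by $A_{1,\nu}(p)$. For the integral-operator part I bound $\lVert B_{1,\nu}-B_{1,0}\rVert\le\lVert B_{1,\nu}-B_{1,0}\rVert_{\mathrm{HS}}$ and pass to the limit by dominated convergence, splitting $\BR^2$ into a large ball $\{p^2+q^2\le R^2\}$ (where $0\le B_{1,\nu}\le\tfrac12$ and convergence is pointwise) and its complement (where, for $\nu\le1$, Lemma~\ref{lea:estB1} gives the $\nu$-uniform decay $B_{1,\nu}(p,q)\le C(p^2+q^2)^{-1}$, the singular points of $B_{1,\nu}$ remaining in a fixed bounded region as $\nu\to0$). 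The delicate point is the multiplication part, $\sup_p|A_{1,\nu}(p)-A_{1,0}(p)|\to0$, which is \emph{not} a Hilbert--Schmidt perturbation problem: for $|p|$ large I use the $\nu$-uniform decay $A_{1,\nu}(p)=\tfrac1{4\pi}\int_\BR B_{1,\nu}(p,q)\,\dd q\le C|p|^{-1}$ (again from Lemma~\ref{lea:estB1}); for $|p|$ bounded I write $B_{1,\nu}(p,q)=\tfrac12\,g(a_\nu,b_\nu)$ with $g(a,b)=(\tanh a+\tanh b)/(a+b)$ and $a_\nu=(p+q)^2/8-\nu/2=a_0-\nu/2$, $b_\nu=b_0-\nu/2$; since $g$ is bounded with bounded gradient (in particular it extends smoothly across $a+b=0$) it is globally Lipschitz, so $|B_{1,\nu}(p,q)-B_{1,0}(p,q)|\le C\nu$, and combining this with the decay $B_{1,\nu}(p,q)+B_{1,0}(p,q)\le C q^{-2}$ valid for $|q|$ large and $|p|$ bounded, a $q$-integral split at a $\nu$-independent cutoff finishes the estimate. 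This proves \eqref{lim4}.

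Finally, parts \eqref{lim2} and \eqref{lim1} are soft consequences. The map $T\mapsto a_{T,\mu}$ is continuous and strictly decreasing on $(0,\infty)$ (each $F_{T,\mu}(p)$ is strictly decreasing in $T$), with $a_{T,\mu}\to\infty$ as $T\to0$ by Lemma~\ref{asymptotic_int_F} and $a_{T,\mu}\to0$ as $T\to\infty$ by \eqref{lim3}; hence it is a bijection onto $(0,\infty)$, and $T_c^\BR$ is its inverse composed with $v\mapsto1/v$. Therefore $T_c^\BR(v)\to\infty$ as $v\to\infty$, and since $T_c^{\BR_+}(v)\ge T_c^\BR(v)$ by \eqref{T1>T0}, part \eqref{lim2} follows. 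For part \eqref{lim1}, $T_c^\BR(\mathfrak v(T))$ is by definition the unique $T'$ with $a_{T',\mu}=1/\mathfrak v(T)=\sup\sigma(A^{\BR_+}_{T,\mu})=T^{-1/2}\sup\sigma(A^{\BR_+}_{1,\mu/T})$ by \eqref{scaling}; the right-hand side tends to $0$ as $T\to\infty$ because $\sup\sigma(A^{\BR_+}_{1,\mu/T})$ converges, hence stays bounded, by \eqref{lim4}, so applying the inverse of $T\mapsto a_{T,\mu}$ gives $T_c^\BR(\mathfrak v(T))\to\infty$.
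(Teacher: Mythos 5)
Your proof is correct, and it follows the same overall scheme as the paper: establish the scaling identity $U_T^{-1}A^\Omega_{T,\mu}U_T = T^{-1/2}A^\Omega_{1,\mu/T}$, reduce \eqref{lim3} and \eqref{lim4} to continuity at $\mu=0$, handle \eqref{lim3} by dominated convergence, handle \eqref{lim4} by splitting $A^{\BR_+}_{1,\nu}$ into its multiplication and Hilbert--Schmidt pieces, and obtain \eqref{lim2} and \eqref{lim1} as soft consequences of monotonicity of $a_{T,\mu}$ and \eqref{lim4}. The treatment of the Hilbert--Schmidt piece $B_{1,\nu}\to B_{1,0}$ by dominated convergence with a $\nu$-uniform $L^2$ dominating function is also the paper's route.

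Where you genuinely diverge is the multiplication part, $\sup_p|A_{1,\nu}(p)-A_{1,0}(p)|\to0$. The paper differentiates under the integral sign, writing $A_{1,\nu}(p)-A_{1,0}(p)=\frac{1}{4\pi}\int_0^\nu\int_\BR\partial_{\nu'}B_{1,\nu'}(p,q)\,\dd q\,\dd\nu'$, and controls $\sup_p\sup_{\nu'}\int_\BR|\partial_{\nu'}B_{1,\nu'}(p,q)|\,\dd q$ uniformly via the Matsubara series \eqref{L_series}; this gives a clean $O(\nu)$ rate at the cost of a more technical computation. You instead observe that $B_{1,\nu}(p,q)=\frac12 g\bigl((p+q)^2/8-\nu/2,\,(p-q)^2/8-\nu/2\bigr)$ with $g(a,b)=(\tanh a+\tanh b)/(a+b)$, and that $g$ is globally Lipschitz, so $|B_{1,\nu}-B_{1,0}|=O(\nu)$ pointwise; combined with the $\nu$-uniform tail decay $B_{1,\nu}(p,q)\lesssim q^{-2}$ for $|q|$ large, a cutoff in $q$ yields the uniform convergence. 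This is shorter and avoids the Matsubara machinery. Two small points you should tighten: first, the boundedness of $\nabla g$ is asserted but not shown --- it does hold, most transparently from the representation $g(a,b)=\int_0^1\mathrm{sech}^2\bigl(sa-(1-s)b\bigr)\,\dd s$, which gives $\|\nabla g\|_\infty\le\|(\mathrm{sech}^2)'\|_\infty$; second, your phrase ``a $q$-integral split at a $\nu$-independent cutoff finishes the estimate'' is slightly misleading --- the cutoff $M$ must be sent to infinity as $\nu\to0$ (giving, say, an $O(\nu^{1/2})$ rate after optimizing $M\sim\nu^{-1/2}$), though since the shift in $(a,b)$ is $-\nu/2$ uniformly in $(p,q)$ and the tail decay is also uniform in $p$, the auxiliary case split in $|p|$ you set up is actually redundant. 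These are presentation issues, not gaps; the argument is sound.
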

The proof is provided in Section~\ref{sec:stronglea}.
We can reformulate Theorem~\ref{main_result}\eqref{thm_strong} as follows.
\begin{lemma}\label{lea:thm3}
\begin{equation}
\lim_{v\to \infty} \frac{T_c^{\BR_+}(v)-T_c^\BR(v)}{T_c^\BR(v)}=0 \Leftrightarrow \sup \sigma(A_{1,0}^{\BR_+}) = a_{1,0}
\end{equation}
\end{lemma}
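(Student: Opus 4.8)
The plan is to run the same kind of reparametrisation as in the proof of Lemma~\ref{weak_lim_reformulation}, but with the low-temperature behaviour of $a_{T,\mu}$ replaced by the high-temperature scaling collected in Lemma~\ref{lea:limits}. The starting point is the identity
\[
\sup\sigma(A_{T,\mu}^{\BR_+})=\frac{1}{\mathfrak v(T)}=a_{T_c^\BR(\mathfrak v(T)),\mu},
\]
valid for every $T>0$: the first equality is the definition of $\mathfrak v$, and the second uses that $T_c^\BR(v)$ is the unique temperature with $a_{T_c^\BR(v),\mu}=v^{-1}$, applied at $v=\mathfrak v(T)$.

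Next I would multiply this identity by $T^{1/2}$ and factor the right-hand side as
\[
T^{1/2}\sup\sigma(A_{T,\mu}^{\BR_+})=\left(\frac{T}{T_c^\BR(\mathfrak v(T))}\right)^{1/2}\Big(T_c^\BR(\mathfrak v(T))\Big)^{1/2}a_{T_c^\BR(\mathfrak v(T)),\mu}.
\]
By Lemma~\ref{lea:limits}\eqref{lim1}, $T_c^\BR(\mathfrak v(T))\to\infty$ as $T\to\infty$, so Lemma~\ref{lea:limits}\eqref{lim3} gives $\big(T_c^\BR(\mathfrak v(T))\big)^{1/2}a_{T_c^\BR(\mathfrak v(T)),\mu}\to a_{1,0}$, while the left-hand side tends to $\sup\sigma(A_{1,0}^{\BR_+})$ by Lemma~\ref{lea:limits}\eqref{lim4}. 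Since $a_{1,0}=\frac{1}{4\pi}\int_\BR B_{1,0}(0,q)\,\dd q>0$, I may divide and conclude that $\big(T/T_c^\BR(\mathfrak v(T))\big)^{1/2}\to \sup\sigma(A_{1,0}^{\BR_+})/a_{1,0}$, hence
\[
L:=\lim_{T\to\infty}\frac{T}{T_c^\BR(\mathfrak v(T))}=\left(\frac{\sup\sigma(A_{1,0}^{\BR_+})}{a_{1,0}}\right)^{2}
\]
exists, and $L\ge1$ because $T\ge T_c^{\BR_+}(\mathfrak v(T))\ge T_c^\BR(\mathfrak v(T))$ (the second inequality being \eqref{T1>T0}).

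Finally I would reparametrise in $v$: for $v>0$ set $T=T_c^{\BR_+}(v)$, so that $\mathfrak v(T)=v$ (since $\mathfrak v\circ T_c^{\BR_+}=\mathrm{id}$) and hence $T_c^\BR(v)=T_c^\BR(\mathfrak v(T))$, which turns the quantity of interest into
\[
\frac{T_c^{\BR_+}(v)-T_c^\BR(v)}{T_c^\BR(v)}=\frac{T_c^{\BR_+}(v)}{T_c^\BR(\mathfrak v(T_c^{\BR_+}(v)))}-1 .
\]
By Lemma~\ref{lea:limits}\eqref{lim2}, $T_c^{\BR_+}(v)\to\infty$ as $v\to\infty$, so the previous paragraph yields $\lim_{v\to\infty}\frac{T_c^{\BR_+}(v)-T_c^\BR(v)}{T_c^\BR(v)}=L-1$. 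Since $L\ge1$, this limit vanishes precisely when $L=1$, i.e. precisely when $\sup\sigma(A_{1,0}^{\BR_+})=a_{1,0}$, which is the asserted equivalence. Given Lemma~\ref{lea:limits} there is no \emph{serious} obstacle here; the points that need care are bookkeeping the three limits of Lemma~\ref{lea:limits}\eqref{lim1}--\eqref{lim4} with the correct powers of $T$, noting $a_{1,0}\neq0$ so that $L$ is well defined, and using $T_c^{\BR_+}(v)\to\infty$ to transport the $T\to\infty$ limit into the $v\to\infty$ limit. The substantive work is deferred to the proof of Lemma~\ref{lea:limits} in Section~\ref{sec:stronglea}.
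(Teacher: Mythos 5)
Your proof is correct and follows essentially the same route as the paper: start from the identity $\sup\sigma(A_{T,\mu}^{\BR_+})=a_{T_c^\BR(\mathfrak v(T)),\mu}$, rescale by $T^{1/2}$, invoke all four parts of Lemma~\ref{lea:limits} in the same roles, and convert the $T\to\infty$ limit into a $v\to\infty$ limit via $T=T_c^{\BR_+}(v)$. The only difference is cosmetic: you name the limit $L$, explicitly record $L\ge 1$, and spell out the substitution step in a bit more detail than the paper does.
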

\begin{proof}
By Lemma~\ref{lea:limits}\eqref{lim4} and the definition of $\mathfrak{v}(T)$ we have 
\begin{equation}
\sup \sigma (A_{1,0}^{\BR_+})=\lim_{T\to \infty} T^{1/2} \sup \sigma(A_{T,\mu}^{\BR_+})=\lim_{T\to \infty} T^{1/2} a_{T_c^\BR(\mathfrak{v}(T)),\mu}
\end{equation}
By Lemma~\ref{lea:limits}\eqref{lim1} and \eqref{lim3} we get
\begin{equation}
\lim_{T\to \infty} T^{1/2} a_{T_c^\BR(\mathfrak{v}(T)),\mu}=a_{1,0}\lim_{T\to \infty}\left( \frac{T}{T_c^\BR(\mathfrak{v}(T))}\right)^{1/2}=a_{1,0}\lim_{v\to \infty}\left( \frac{T_c^{\BR_+}(v)}{T_c^\BR(v)}\right)^{1/2}
\end{equation}
where we used Lemma~\ref{lea:limits}\eqref{lim2} and $\mathfrak{v}(T_c^{\BR_+}(v))=v$ for the second equality.
Since $a_{1,0}>0$, the claim follows.
\end{proof}

\begin{rem}
In the case of Neumann boundary conditions, $d:=\sup \sigma(A_{1,0}^{\BR_+})-a_{1,0}>0$.
With the argument in Lemma~\ref{lea:thm3}, we have 
\begin{equation}
\lim_{v\to \infty} \frac{T_c^{\BR_+}(v)-T_c^\BR(v)}{T_c^\BR(v)}=\left(\frac{d}{a_{1,0}}+1\right)^2-1>0\,.
\end{equation}
\end{rem}

We are thus left with showing that $\sup \sigma(A_{1,0}^{\BR_+})=a_{1,0}$.
Recall that $\sup \sigma_{\text{ess}}(A_{1,0}^{\BR_+})=a_{1,0}$.
Hence it suffices  to prove that for all $\psi \in L^2((0,\infty))$
\begin{equation}\label{highTlimineq}
\langle \psi \vert A_{1,0}^{\BR_+} |\psi \rangle = \frac{1}{8\pi} \int_\BR \int_\BR B_{1,0}(p,q)\vert \psi(p)-\psi(q)\vert^2 \dd p \dd q \leq \frac{1}{4\pi} \int_\BR \vert \psi(p)\vert ^2 \dd p \int_\BR B_{1,0}(0,q) \dd q =\lVert \psi \rVert_2^2 a_{1,0}.
\end{equation}
In order to show this, we shall bound $B_{1,0}$ by a positive definite kernel $K$, in such a way that the right hand side of \eqref{highTlimineq} does not change.
\begin{lemma}\label{highTlim_K}
Let $K$ be the operator on $L^2(\BR^2) $ with integral kernel
\begin{equation}
K(p,q)= \min\{B_{1,0}(p,0),B_{1,0}(q,0)\}
\end{equation}
Then $K$ satisfies
\begin{enumerate}[(i)]
\item $B_{1,0}(p,q) \leq K(p,q)$ for all $p,q \in \BR$ \label{i1}
\item $K(p,q)=K(q,p)$ for all $p,q \in \BR$ \label{i3}
\item $K$ is positive definite \label{i4}
\item $\int_\BR K(p,q) \dd q \leq \int_\BR K(0,q)\dd q $ for all $p\in \BR$\label{i5}
\item $B_{1,0}(p,0)=K(p,0)$ for all $p\in \BR$ \label{i2}
\end{enumerate}
\end{lemma}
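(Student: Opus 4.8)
The plan is to reduce the whole lemma to two elementary one--variable facts: that $\tanh$ is concave on $[0,\infty)$, and that $\phi(c):=\tanh(c)/c$ (with $\phi(0):=1$) is strictly decreasing on $[0,\infty)$; both follow at once from $\tanh''(c)=-2\tanh(c)\,\mathrm{sech}^2(c)\le 0$ for $c\ge 0$. I will also use the representation $B_{1,0}(p,q)=\tfrac12\Psi(x,y)$ with $x=\tfrac{(p+q)^2}{8}$, $y=\tfrac{(p-q)^2}{8}$ and $\Psi(x,y):=\frac{\tanh x+\tanh y}{x+y}$, which is a direct rewriting of the definitions of $L_{T,\mu}$, $B_{T,\mu}$ and $F_{T,\mu}$. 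In particular $B_{1,0}(p,0)=\tfrac12\Psi(v,v)=\tfrac12\phi(v)$ with $v=\tfrac{p^2}{8}$, so $g(p):=B_{1,0}(p,0)$ is decreasing in $|p|$ because $\phi$ is decreasing, and $B_{1,0}(0,q)=B_{1,0}(q,0)=g(q)$ by the symmetry of $B$.

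First I would prove \eqref{i1}. For any $p,q\in\BR$, concavity of $\tanh$ on $[0,\infty)$ gives $\tanh x+\tanh y\le 2\tanh\tfrac{x+y}{2}$, hence $\Psi(x,y)\le \phi\!\left(\tfrac{x+y}{2}\right)$. Since $\tfrac{x+y}{2}=\tfrac{p^2+q^2}{8}\ge \tfrac{p^2}{8}=v$ and $\phi$ is decreasing, $\Psi(x,y)\le\phi(v)=2B_{1,0}(p,0)$, so $B_{1,0}(p,q)\le B_{1,0}(p,0)$; applying this with $p,q$ interchanged and using $B_{1,0}(p,q)=B_{1,0}(q,p)$ also gives $B_{1,0}(p,q)\le B_{1,0}(q,0)$, which together yield \eqref{i1}. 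Claim \eqref{i3} is immediate from the definition of $K$. Claim \eqref{i2} follows because $g(p)=B_{1,0}(p,0)\le B_{1,0}(0,0)=g(0)$ by monotonicity of $g$, so $K(p,0)=\min\{g(p),g(0)\}=g(p)=B_{1,0}(p,0)$.

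For \eqref{i4} I would use the layer--cake identity $K(p,q)=\min\{g(p),g(q)\}=\int_0^\infty \mathbbm{1}\{t<g(p)\}\,\mathbbm{1}\{t<g(q)\}\,\dd t$. Since $\int_\BR g=\int_\BR B_{1,0}(0,q)\,\dd q=4\pi a_{1,0}<\infty$ and $K(p,q)=\min\{g(p),g(q)\}\le\sqrt{g(p)g(q)}$, one has $\int\!\!\int \sqrt{g(p)g(q)}\,|\psi(p)\psi(q)|\,\dd p\,\dd q=\big(\int_\BR\sqrt{g}\,|\psi|\big)^2\le \|g\|_1\|\psi\|_2^2<\infty$, so Fubini applies and $\langle\psi|K|\psi\rangle=\int_0^\infty\big|\int_\BR \mathbbm{1}\{t<g(p)\}\,\psi(p)\,\dd p\big|^2\,\dd t\ge 0$; this also shows $K$ is bounded with $\|K\|\le\|g\|_1$, so \eqref{i4} holds. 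For \eqref{i5}, split $\int_\BR K(p,q)\,\dd q=\int_{\{g(q)\le g(p)\}}g(q)\,\dd q+g(p)\,|\{q:g(q)>g(p)\}|$, so that $\int_\BR g(q)\,\dd q-\int_\BR K(p,q)\,\dd q=\int_{\{g(q)>g(p)\}}(g(q)-g(p))\,\dd q\ge 0$; since $g(0)=\sup_q g(q)$ we have $\int_\BR K(0,q)\,\dd q=\int_\BR g(q)\,\dd q$, and \eqref{i5} follows.

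The only mildly delicate point is \eqref{i4}: one has to make sure the layer--cake/Fubini manipulation is legitimate, i.e.\ that $g\in L^1(\BR)$ and that the resulting quadratic form is finite — both are taken care of by the pointwise bound $K(p,q)\le\sqrt{g(p)g(q)}$ together with $\|g\|_1=4\pi a_{1,0}<\infty$. Everything else is a direct consequence of the concavity of $\tanh$ on $[0,\infty)$ and the monotonicity of $\phi(c)=\tanh(c)/c$.
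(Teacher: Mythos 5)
Your proof is correct and follows essentially the same route as the paper's: part (i) is the paper's Lemma~\ref{highTlimineq2} (concavity of $\tanh$) combined with monotonicity of $\tanh(c)/c$, parts (ii), (iv), (v) all follow from the observation that $B_{1,0}(p,0)$ is a decreasing function of $|p|$, and part (iii) is a factorization of $K$ as an integral of rank-one kernels over level sets. The only cosmetic difference is in (iii): you use the layer-cake identity $\min\{a,b\}=\int_0^\infty \chi_{\{t<a\}}\chi_{\{t<b\}}\,\dd t$ parametrized by the value $t$, whereas the paper writes the same decomposition parametrized by $r=p^2$ with an explicit nonnegative weight $h(r)$ extracted from $F_{1,0}'$; these are the same construction up to a change of variables.
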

This implies \eqref{highTlimineq} and hence part \eqref{thm_strong} of Theorem~\ref{main_result} since
\begin{align}
\frac{1}{2} \int_\BR \int_\BR B_{1,0}(p,q)\vert \psi(p)-\psi(q)\vert^2 \dd p \dd q &\leq \frac{1}{2} \int_\BR \int_\BR K(p,q)\vert\psi(p)-\psi(q)\vert^2 \dd p \dd q \nonumber \\
&= \int_\BR \vert \psi(p)\vert ^2 \int_\BR K(p,q)\dd q\dd p - \langle \psi \vert K| \psi \rangle \nonumber \\
&\leq \int_\BR \vert \psi(p)\vert ^2 \int_\BR K(p,q)\dd q\dd p \nonumber \\
&\leq \lVert \psi \rVert_2^2\int_\BR K(0,q)\dd q=\lVert \psi \rVert_2^2\int_\BR B_{1,0}(0,q)\dd q \,.
\end{align}

\begin{proof}[Proof of Lemma~\ref{highTlim_K}]
Property \eqref{i3} is obvious.
Properties \eqref{i5} and \eqref{i2} follow from the fact that 
\begin{equation}
K(p,q)=\min\{F_{1,0}(p/2),F_{1,0}(q/2)\}=F_{1,0}(\max\{\vert p\vert ,\vert q\vert\}/2),
\end{equation}
where $F_{1,0}(p)=\frac{\tanh(p^2/2)}{p^2}$ has a maximum at $p=0$ and is monotonously decreasing for $p>0$.
For \eqref{i1} consider the following inequality, which is proved in Section~\ref{sec:pfstrong}.
\begin{lemma}\label{highTlimineq2}
For all $p,q \in \BR$ 
\begin{equation}
B_{1,0}(p,q)\leq \frac{\tanh\left(\frac{p^2+q^2}{8}\right)}{\frac{p^2+q^2}{4}}
\end{equation}
\end{lemma}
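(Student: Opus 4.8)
\textbf{Proof proposal for Lemma~\ref{highTlimineq2}.}
The plan is to unfold the definition of $B_{1,0}$ and reduce the claim to an elementary concavity statement about $\tanh$. Recall $B_{1,0}(p,q)=L_{1,0}\!\left(\frac{p+q}{2},\frac{p-q}{2}\right)$ and $L_{1,0}(a,b)=\frac{\tanh(a^2/2)+\tanh(b^2/2)}{a^2+b^2}$. Since $\left(\frac{p+q}{2}\right)^2+\left(\frac{p-q}{2}\right)^2=\frac{p^2+q^2}{2}$, I would first record the identity
\begin{equation}
B_{1,0}(p,q)=\frac{2}{p^2+q^2}\left[\tanh\!\left(\frac{(p+q)^2}{8}\right)+\tanh\!\left(\frac{(p-q)^2}{8}\right)\right],
\end{equation}
valid for $(p,q)\neq(0,0)$, with the value at the origin obtained by continuity. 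Dividing the asserted inequality by $\frac{2}{p^2+q^2}>0$, the claim is then equivalent to
\begin{equation}\label{eq:reduced_tanh}
\tanh\!\left(\frac{(p+q)^2}{8}\right)+\tanh\!\left(\frac{(p-q)^2}{8}\right)\leq 2\tanh\!\left(\frac{p^2+q^2}{8}\right).
\end{equation}

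Next I would set $u:=\frac{(p+q)^2}{8}\geq 0$ and $w:=\frac{(p-q)^2}{8}\geq 0$, and note that $\frac{u+w}{2}=\frac{(p+q)^2+(p-q)^2}{16}=\frac{p^2+q^2}{8}$, so \eqref{eq:reduced_tanh} becomes the statement that
\begin{equation}
\frac{\tanh u+\tanh w}{2}\leq \tanh\!\left(\frac{u+w}{2}\right)\qquad\text{for all }u,w\geq 0.
\end{equation}
This is just midpoint concavity of $\tanh$ on $[0,\infty)$, which follows from the sign of its second derivative: $\tanh'(x)=1-\tanh^2 x$, hence $\tanh''(x)=-2\tanh(x)(1-\tanh^2 x)=-\frac{2\tanh x}{\cosh^2 x}\leq 0$ for $x\geq 0$. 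A concave function on an interval satisfies the midpoint inequality, which closes the argument.

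I do not expect a genuine obstacle here: the only points requiring a word of care are (a) the algebraic identity for $B_{1,0}$, which is a short computation from the definitions of $L$, $B$, and the $T=1$, $\mu=0$ specialization, and (b) the removable singularity at $(p,q)=(0,0)$, where both sides extend continuously to $\tfrac14$ (using $\tanh t\sim t$). Everything else is the elementary concavity of $\tanh$ on the half-line.
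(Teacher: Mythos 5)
Your proposal is correct and follows essentially the same route as the paper: both reduce the claim, via the substitution $x=(p+q)^2/8$, $y=(p-q)^2/8$, to midpoint concavity of $\tanh$ on $[0,\infty)$. The paper states this reduction more tersely (as an equivalence of inequalities divided by $x+y$), while you spell out the algebraic identity for $B_{1,0}$ and verify concavity via the sign of $\tanh''$, but the underlying argument is identical.
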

Together with the monotonicity of $\tanh(p)/p$ for $p\geq0$, it implies \eqref{i1}.
For property \eqref{i4} it suffices to show that there is a real-valued function $g$ such that 
\begin{equation}\label{prod}
K(p,q)=\int_\BR g(r,p)g(r,q) \dd r.
\end{equation}
In fact, let $g(r,p)=\sqrt{h(r) }\chi_{r>p^2}$ with
\begin{equation}
h(r)=\left.\frac{\dd}{\dd x} \frac{\tanh(x/2)}{x}\right\vert_{x=-r} \geq 0.
\end{equation}
With this choice, \eqref{prod} holds since
\begin{multline}
\int_\BR g(r,p)g(r,q) \dd r=\int_{\max\{p^2,q^2\}}^\infty h(r)\dd r= \int_{-\infty}^{-\max\{p^2,q^2\}}\frac{\dd}{\dd x} \left. \frac{\tanh(x/2)}{x}\right\vert_{x=r} \dd r\\
=  \frac{\tanh(\max\{p^2,q^2\}/2)}{\max\{p^2,q^2\}}=K(p,q)
\end{multline}
\end{proof}

\subsection{Proof of  Lemma~\ref{lea:limits}}\label{sec:stronglea}
\begin{proof}[Proof of Lemma~\ref{lea:limits}]
For \eqref{lim2} we have $\lim_{v\to \infty} T_c^{\BR_+}(v) \geq \lim_{v\to \infty} T_c^\BR(v)= \infty$ by \eqref{T1>T0}.

\eqref{lim1} follows easily from \eqref{lim4}: 
Clearly \eqref{lim4} implies that 
\begin{equation}
\lim_{T\to \infty} \sup \sigma(A_{T,\mu}^{\BR_+})=0.
\end{equation}
Since $a_{T_c^\BR(\mathfrak{v}(T)),\mu}=\sup \sigma (A_{T_,\mu}^{\BR_+})$ this is equivalent to 
\begin{equation}
\lim_{T\to \infty}a_{T_c^\BR(\mathfrak{v}(T)),\mu}=0.
\end{equation}
Using that $a_{T,\mu}$ is strictly decreasing in $T$ with $\lim_{T\to \infty} a_{T,\mu}=0$, this in turn is equivalent to
\begin{equation}
\lim_{T\to \infty}T_c^\BR(\mathfrak{v}(T))=\infty.
\end{equation}

For \eqref{lim3} we have after substituting $q/2T^{1/2} \to q$
\begin{equation}
\lim_{T\to \infty} T^{1/2} a_{T,\mu}= \frac{1}{2\pi} \lim_{T\to \infty} \int_\BR \frac{\tanh\left(\frac{q^2-\mu/T}{2}\right)}{q^2-\mu/T} \dd q.
\end{equation}
Fix some $T_0>0$.
Since $\tanh(x)/x$ is decreasing for $x\geq 0$ and bounded by $1$, the integrand is bounded by $\frac{1}{2}\chi_{\vert q\vert <2\sqrt{\mu/T_0}}+\frac{1}{q^2-\mu/T_0}\chi_{\vert q\vert >2\sqrt{\mu/T_0}}$ for $T>T_0$.
This is an $L^1$ function, so by dominated convergence we can pull the limit into the integral and arrive at the claim.

\eqref{lim4} 
Let $U_T$ denote the unitary transformation $U_T \psi(p)= T^{1/4} \psi(T^{1/2} p)$ on $L^2(\BR^2)$.
We shall prove that $\lim_{T\to \infty} \lVert U_T T^{1/2} A_{T,\mu}^{\BR_+} U_T^\dagger- A_{1,0}^{\BR_+}\rVert=0$, which implies the claim.
Note that 
\begin{equation}
 U_T T^{1/2} A_{T,\mu}^{\BR_+} U_T^\dagger = A_{1,\mu/T}^{\BR_+} 
\end{equation}
Therefore, we have
\begin{multline}\label{high_T_lim_sum}
 \lim_{T\to \infty} \lVert U_T T^{1/2} A_{T,\mu}^{\BR_+} U_T^\dagger- A_{1,0}^{\BR_+}\rVert= \lim_{\mu\to 0} \lVert A_{1,\mu}^{\BR_+}- A_{1,0}^{\BR_+}\rVert\\
 \leq \frac{1}{4\pi}\lim_{\mu \to 0 }  \sup_p  \left \vert \int_\BR ( B_{1,\mu}(p,q)-B_{1,0}(p,q)) \dd q\right \vert +\frac{1}{4\pi}\lim_{\mu \to 0} \lVert B_{1,\mu}-B_{1,0} \rVert
\end{multline}
For the second term on the second line of \eqref{high_T_lim_sum} we bound the operator norm by the Hilbert--Schmidt norm
\begin{equation}
\lVert B_{1,\mu}-B_{1,0} \rVert^2 \leq \lVert B_{1,\mu}-B_{1,0} \rVert_{\text{HS}}^2 = \int_\BR \dd p \int_\BR \dd q \left(B_{1,\mu}(p,q)-B_{1,0}(p,q)\right)^2
\end{equation}
Using that $B_{T,\mu}(p,q)\leq 1/2T$ and $\vert \tanh(x) \vert \leq 1 $ one can bound
\begin{equation}
B_{1,\mu}(p,q)^2\leq \frac{1}{4}\chi_{p^2+q^2\leq 4 \mu}(p,q)+\chi_{p^2+q^2> 4 \mu}(p,q) \min \left\{ \frac{1}{4}, \frac{16}{(p^2+q^2-4\mu)^2}\right\}=: f_\mu (p,q).
\end{equation}
By the monotonicity of $f_\mu $ in $\mu$, we have for all $\nu \leq \mu$ that $\left(B_{1,\nu}(p,q)-B_{1,0}(p,q)\right)^2\leq 2 f_\mu (p,q)$.
Since $f_\mu$ is an $L^1$ function, dominated convergence implies $\lim_{\mu \to 0} \lVert B_{1,\mu}-B_{1,0} \rVert=0$.

For the first term in the second line of \eqref{high_T_lim_sum} we estimate
\begin{multline}\label{high_T_lim2}
 \lim_{\mu \to 0 }  \sup_p  \left \vert \int_\BR ( B_{1,\mu}(p,q)-B_{1,0}(p,q)) \dd q\right \vert = \lim_{\mu \to 0 }  \sup_p  \left \vert \int_\BR \int_0^\mu \frac{\p}{\p \nu} B_{1,\nu}(p,q) \dd \nu \dd q\right \vert \\
\leq \lim_{\mu \to 0 }   \mu \sup_p  \sup_{\nu \in [0,\mu]}\int_\BR \left \vert\frac{\p}{\p \nu} B_{1,\nu}(p,q)\right \vert \dd q,
\end{multline}
where we used the triangle inequality and Fubini's theorem in the last step.
By \eqref{L_series} we may write
\begin{multline}\label{b_der_series}
\frac{\p}{\p \mu} B_{1,\mu}(p,q)= 2 \sum_{n\in \BZ} \frac{1}{\left(\left(\frac{p+q}{2}\right)^2-\mu-i w_n \right)^2} \frac{1}{\left(\frac{p-q}{2}\right)^2-\mu+i w_n }\\
+ \frac{1}{\left(\frac{p+q}{2}\right)^2-\mu-i w_n } \frac{1}{\left(\left(\frac{p-q}{2}\right)^2-\mu+i w_n \right)^2},
\end{multline}
where $w_n=\pi(2n+1)$.
Observe that 
\begin{equation}
\left \vert \left(\frac{p+q}{2}\right)^2-\mu-i w_n \right \vert \geq w_n \chi_{\vert q\vert<2\sqrt{\mu}}+\sqrt{\left(q^2/4-\mu\right)^2+w_n^2} \chi_{\vert q\vert >2\sqrt{\mu}}
\end{equation}
and
\begin{equation}
\left \vert \left(\frac{p-q}{2}\right)^2-\mu+i w_n \right \vert \geq w_n .
\end{equation}
Applying Fubini's theorem to swap integration and summation, we have for all $p$ and $\mu$
\begin{multline}
\int_\BR \left \vert\frac{\p}{\p \mu} B_{1,\mu}(p,q)\right \vert \dd q \leq 2 \sum_{n\in \BZ} \int_\BR \dd q \Bigg(\frac{2}{w_n^3}\chi_{\vert q\vert\leq2\sqrt{\mu}}+\frac{\chi_{\vert q\vert>2\sqrt{\mu}}}{w_n \left((q^2/4-\mu)^2+w_n^2\right)}\\
 + \frac{\chi_{\vert q\vert>2\sqrt{\mu}}}{w_n^2 \sqrt{(q^2/4-\mu)^2+w_n^2}}\Bigg)\\
=2 \sum_{n\in \BZ}\left[\frac{8\sqrt{\mu}}{w_n^3}+ \frac{2}{w_n^{5/2}}\int_0^\infty \dd s \left(\frac{1}{\left(s^2+1\right)\sqrt{s+\mu/w_n}}+ \frac{1}{\sqrt{(s^2+1)(s+\mu/w_n)} }\right)\right],
\end{multline}
where we substituted $s=w_n^{-1}(q^2/4-\mu)$.
For $\mu<1$ we therefore obtain a $\mu$-independent bound
\begin{multline}
\sup_p \sup_{\nu \in [0,\mu]}\int_\BR \left \vert\frac{\p}{\p \nu} B_{1,\nu}(p,q)\right \vert \dd q \\
\leq 2 \sum_{n\in \BZ}\left[\frac{8}{w_n^3}+ \frac{2}{w_n^{5/2}}\int_0^\infty \dd s \left(\frac{1}{\left(s^2+1\right)\sqrt{s}}+ \frac{1}{\sqrt{(s^2+1)s} }\right)\right]<\infty.
\end{multline}
Thus, the last expression in \eqref{high_T_lim2} vanishes and the claim follows.
\end{proof}

\section{Proofs of Auxiliary Results}\label{sec:pf}
\subsection{From Section~\ref{sec:pre}}\label{sec:pfpre}
\begin{proof}[Proof of Lemma~\ref{lea:L_laplace}]
Note that for all $p,q\in \BR$
\begin{equation}
L_{T,\mu}(p,q)\leq \min\left\{\frac{1}{2T}, \frac{2}{\vert p^2+q^2-2\mu \vert}\right\}
\end{equation}
Hence, $L_{T,\mu}(p,q) (1+p^2+q^2) \leq \frac{1+4T+2\mu}{2T}$ and $L_{T,\mu}(p,q) (T+p^2+q^2) \leq \frac{5T+2\mu}{2T}$.
So with $C_1(T,\mu)=\frac{2T}{1+4T+2\mu}$ and $C_3(T_0,\mu)=\frac{2T_0}{5T_0+2\mu}$ the respective inequalities hold.

For the remaining inequality, note that $L_{T,\mu}$ vanishes only at infinity.
Let $\epsilon>0$. 
There is a constant $c_1$ such that $L_{T,\mu}(p,q)>c_1$ for all $\vert p \vert, \vert q \vert \leq \sqrt{\max\{2\mu,0\}+\epsilon}$.
Moreover, if $\vert p \vert$ or $\vert q \vert > \sqrt{\max\{2\mu,0\}+\epsilon}$, we have 
\begin{equation}
L_{T,\mu}(p,q)\geq \frac{\tanh((\vert \mu \vert+\epsilon)/2T)-\tanh(\mu/2T)}{p^2+q^2-2\mu}\geq \frac{c_2}{p^2+q^2+\max\{-2\mu,0\}}
\end{equation}
In particular, $L_{T,\mu}(p,q) (1+p^2+q^2)\geq \min\{c_1,c_2, c_2/\max\{-2\mu,0\}\}$.
\end{proof}

\begin{proof}[Proof of Lemma~\ref{intBdq_max}]
First, we show that for every $x,y\in \BR$
\begin{equation}\label{L_leq_F+F}
\frac{\tanh(x)+\tanh(y)}{x+y}\leq \frac{1}{2} \left(\frac{\tanh(x)}{x}+\frac{\tanh(y)}{y}\right)
\end{equation}
Since changing $x\to -x, y \to -y$ does not change the expressions, we may assume without loss of generality that $x\geq \vert y \vert$.
Note that
\begin{equation}
\frac{\tanh(x)+\tanh(y)}{x+y}= \frac{1}{2(x+y)} \left[(x+y)\left(\frac{\tanh(x)}{x}+\frac{\tanh(y)}{y}\right)+(x-y) \left(\frac{\tanh(x)}{x}-\frac{\tanh(y)}{y}\right)\right]
\end{equation}
Since $\tanh(x)/x\leq\tanh(y)/y$, the last term is not positive and the inequality \eqref{L_leq_F+F} follows.

For $p\in \BR$ we therefore have 
\begin{equation}\label{E_est_F}
\int_\BR  B_{T,\mu}(p,q) \dd q \leq \frac{1}{2}\int_\BR \left[ F_{T,\mu}\left(\frac{p+q}{2}\right) +F_{T,\mu}\left(\frac{p-q}{2}\right)\right] \dd q =\int_\BR F_{T,\mu}(q/2) \dd q.
\end{equation}
Since $F_{T,\mu}(q/2)=B(0,q)$, the claim follows.
\end{proof}
\subsection{From Section~\ref{sec:ex}}\label{sec:pfex}
\begin{proof}[Proof of Lemma~\ref{asymptotic_int_F}]
Substituting by $p^2-\mu=t$ for $p^2>\mu$ and $\mu-p^2=t$ for $p^2<\mu$ we get
\begin{equation}
\int_\BR F_{T,\mu}(p) \dd p =2\int_0^\infty \frac{\tanh\left(\frac{p^2-\mu}{2T}\right)}{p^2-\mu}\dd p= 2\int_0^\infty \frac{\tanh(t/2T)}{2t \sqrt{\mu+t}} \dd t+2\int_0^\mu \frac{\tanh(t/2T)}{2t \sqrt{\mu-t}} \dd t.
\end{equation}
It was shown in \cite[Lemma 1]{hainzl_critical_2008} that 
\begin{equation}
\lim_{T\to0} \left( \int_0^\mu \frac{\tanh(t/2T)}{t} \dd t - \ln \frac{\mu}{T} \right)= \gamma- \ln \frac{\pi}{2}.
\end{equation}
By monotone convergence, we observe that
\begin{equation}
\lim_{T\to0} \int_0^\mu \frac{\tanh(t/2T)}{2t}\left(\frac{1}{\sqrt{\mu-t}}-\frac{1}{\sqrt{\mu}}\right) \dd t = \int_0^\mu \frac{1}{2t}\left(\frac{1}{\sqrt{\mu-t}}-\frac{1}{\sqrt{\mu}}\right) \dd t
= \frac{\ln 4}{2\sqrt{\mu}}
\end{equation}
as well as 
\begin{equation}
\lim_{T\to0} \int_0^\mu \frac{\tanh(t/2T)}{2t}\left(\frac{1}{\sqrt{\mu+t}}-\frac{1}{\sqrt{\mu}}\right) \dd t = \int_0^\mu \frac{1}{2t}\left(\frac{1}{\sqrt{\mu+t}}-\frac{1}{\sqrt{\mu}}\right) \dd t
= \frac{\ln \left(2(\sqrt{2}-1)\right)}{\sqrt{\mu}}.
\end{equation}
Using monotone convergence once more, we obtain
\begin{equation}
\lim_{T\to0} \int_\mu^\infty \frac{\tanh(t/2T)}{2t\sqrt{\mu+t}} \dd t = \int_\mu^\infty \frac{1}{2t\sqrt{\mu+t}} \dd t =\frac{\ln (\sqrt{2}+1)}{\sqrt{\mu}}.
\end{equation}
Combining all the terms we arrive at the first equality in \eqref{int_F_asy}.
Observe that
\begin{equation}
0<\int_\BR \chi_{\vert p \vert>\sqrt{2\mu}} F_{T,\mu}(p) \dd p\leq 2\int_{\sqrt{2\mu}}^\infty \frac{1}{p^2-\mu}\dd p <\infty.
\end{equation}
Therefore, this term is of order one for $T\to 0$ and $\int_\BR F_{T,\mu}(p)\dd p= \int_{-\sqrt{2\mu}}^{\sqrt{2\mu}} F_{T,\mu}(p)\dd p+O(1)$.
\end{proof}

\subsection{From Section~\ref{sec:weak}}\label{sec:pfweak}
\begin{proof}[Proof of Lemma~\ref{lea:estB1}]
In the case $x y >0$, the inequality follows immediately from the fact that $\vert \tanh(z)\vert<1$ for all $z\in \BR$.
In the case $xy<0$, let us replace $y\to -y$ and assume without loss of generality  that $x>y>0$.
Since the function $s \mapsto e^{-2s}$ is convex, we have 
\begin{equation}\label{exp_conv_est}
\frac{e^{-2y}-e^{-2x}}{x-y}\leq {}-\left.\frac{\dd }{\dd s}e^{-2s}\right \vert_{s=y}=2  e^{-2y}
\end{equation}
We estimate 
\begin{multline}
\frac{x+y}{x-y} (\tanh(x)-\tanh(y))
= \frac{2(x+y)}{1+e^{-2y}}\frac{e^{-2y}-e^{-2x}}{(x-y)(1+e^{-2x})} 
\leq \frac{2(x+y)e^{-2y}}{1+e^{-2y}}\min\left\{2, \frac{1}{x-y}\right\}\\
\leq \frac{4(2y+1/2)e^{-2y}}{1+e^{-2y}},
\end{multline}
where we maximized over $x$ in the last step.
The maximum of the last expression over $y$ is attained at the value $y=\tilde y$ satisfying $e^{-2\tilde y}=2\tilde y-1/2$.
Therefore, we get 
\begin{equation}
\frac{x+y}{x-y} (\tanh(x)-\tanh(y)) \leq 4(2\tilde y-1/2).
\end{equation}
The function $e^{-2 y}$ is decreasing in $y$ and $2y-1/2$ is increasing.
For $y=1/2$ we have $e^{-1}<1/2$, hence the intersection point $\tilde y$ satisfies $0 < \tilde y < 1/2$ .
Thus, $\frac{x+y}{x-y} (\tanh(x)-\tanh(y))<2$, which proves the claim.
\end{proof}

\begin{proof}[Proof of Lemma~\ref{lea:estB2}]
Without loss of generality, we may assume that $y<x$.
We have 
\begin{multline*}
\tanh(x)-\tanh(y)=\frac{e^x-e^{-x}}{e^x+e^{-x}}-\frac{e^y-e^{-y}}{e^y+e^{-y}}=2\frac{e^{x-y}-e^{y-x}}{(e^x+e^{-x})(e^y+e^{-y})}\\
\leq 2\frac{e^{x-y}-e^{y-x}}{e^{x+y}}=2(e^{-2y}-e^{-2x})
\end{multline*}
Applying \eqref{exp_conv_est} the claim follows.
\end{proof}

\subsection{From Section~\ref{sec:strong}}\label{sec:pfstrong}
\begin{proof}[Proof of Lemma~\ref{highTlimineq2}]
By concavity of $\tanh(x)$ for $x\geq0$ for $x,y\geq 0$ it holds that
\begin{equation}
\frac{\tanh(x)+\tanh(y)}{2}\leq \tanh \left(\frac{x+y}{2}\right) \Leftrightarrow \frac{\tanh(x)+\tanh(y)}{2(x+y)}\leq \frac{\tanh \left(\frac{x+y}{2}\right)}{x+y}
\end{equation}
Choosing $x=(p+q)^2/8$ and $y=(p-q)^2/8$ gives the desired inequality.
\end{proof}


\section*{Acknowledgments} 
We thank Egor Babaev for encouraging us to study this problem, and Rupert Frank for many fruitful discussions. 
Funding from the European Union's Horizon 2020 research and innovation programme under the ERC grant agreement No. 694227 (B.R. and R.S.) is gratefully acknowledged.

\bibliographystyle{abbrv}

\end{document}